\global\long\def\P{\mathcal{P}}
\newcommand{\aw}[1]{#1}
\begin{document}

%\subjclass{F.2 Analysis of Algorithms and Problem Complexity, G.2.1 Combinatorics}% mandatory: Please choose ACM 1998 classifications from http://www.acm.org/about/class/ccs98-html . E.g., cite as "F.1.1 Models of Computation". 
%\keywords{Scheduling, fixed-parameter tractability.}% mandatory: Please provide 1-5 keywords

%\title{The Parameter Ecology of Scheduling with Rejection}
\title{Scheduling Meets Fixed-Parameter Tractability}
\date{$\,$}
\titlerunning{Scheduling Meets Fixed-Parameter Tractability}

\author{     Matthias Mnich\inst{1} \and Andreas Wiese\inst{2}}
\institute{Cluster of Excellence MMCI,
               Saarbr\"{u}cken, Germany.
               \email{m.mnich@mmci.uni-saarland.de}
        \and Max Planck Institute for Computer Science,
               Saarbr\"{u}cken, Germany.
               \email{awiese@mpi-inf.mpg.de}
               }

\authorrunning{Mnich, Wiese}
\maketitle

\pagestyle{plain}
\begin{abstract}

Fixed-parameter tractability analysis and scheduling are two core
domains of combinatorial optimization which led to deep understanding
of many important algorithmic questions. However, even though fixed-parameter
algorithms are appealing for many reasons, no such algorithms are
known for many fundamental scheduling problems.

\quad In this paper we present the first fixed-parameter algorithms for
classical scheduling problems such as makespan minimization, scheduling with job-dependent 
cost functions---one important example being weighted flow time---and scheduling
with rejection. To this end, we identify
crucial \mbox{parameters} that determine the problems' complexity. 
%If those
%parameters are small, say $O(\log n)$, we derive polynomial-time algorithms where the
%degree of the polynomial is independent of the parameter values. %that the parameter for the considered input instances is fixed.
In particular, we manage to cope with the problem complexity stemming
from numeric input values, such as job processing times, which is usually
a core bottleneck in the design of fixed-parameter algorithms. We
complement our algorithms with $\mathsf{W[1]}$-hardness results showing
that for smaller sets of parameters the respective problems do not
allow $\mathsf{FPT}$-algorithms. In particular, our positive and
negative results for scheduling with rejection explore a research direction proposed
by D{\'a}niel Marx~\cite{Marx2011}. 
% Also, in our algorithms we demonstrate how
% to utilize the advanced theory about mixed integer convex programming 
% for FPT-algorithms.

\quad We hope that our contribution yields a new and fresh perspective on
scheduling and fixed-parameter algorithms and will lead to further
fruitful interdisciplinary research connecting these two areas. 
\end{abstract}

\section{Introduction}

\label{sec:introduction} Scheduling and fixed-parameter tractability
are two very well-studied research areas. In scheduling, the usual
setting is that one is given a set of machines and a set of jobs
with individual characteristics. The jobs need to be scheduled on
the machines according to some problem-specific constraints, such as
release dates, precedence constraints, or rules regarding preemption
and migration. Typical objectives are minimizing the global makespan,
the weighted sum of completion times of the jobs, or the total flow time.
During the last decades of research on scheduling, many important
algorithmic questions have been settled. For instance, for minimizing
the makespan and the weighted sum of completion time on identical
machines, $(1+\epsilon)$-approximation algorithms (PTASs) are known
for almost all $\mathsf{NP}$-hard settings~\cite{Afrati1999,HochbaumShmoys1987}.

However, the running time of these approximation schemes usually has
a bad dependence on~$\epsilon$, and in practise exact algorithms
are often desired. These considerations motivate to study which scheduling
problems are \emph{fixed-parameter tractable} ($\mathsf{FPT}$), which
amounts to identifying instance-dependent parameters~$k$ that allow
for algorithms that find optimal solutions in time~$f(k)\cdot n^{O(1)}$
for instances of size $n$ and some function $f$ depending only on~$k$.
Separating the dependence of $k$ and $n$ is often much more desirable
than a running time of, e.g.,~$O(n^{k})$, which becomes infeasible even
for small $k$ and large $n$. The parameter~$k$ measures the complexity
of a given instance and thus, problem classification according to
parameters yields an instance-depending measure of problem hardness.

Despite the fundamental nature of scheduling problems, and the clear
advantages of fixed-parameter algorithms, to the best of our knowledge
no such algorithms are known for the classical scheduling problems we study here. One obstacle
towards obtaining positive results appears to be that---in contrast
to most problems known to be fixed-parameter tractable---scheduling
problems involve many numerical input data (e.g., job processing times,
release dates, job weights), which alone render many problems $\mathsf{NP}$-hard,
%(often by a reduction from \textsc{Partition})
% for constant parameter values, [AW was soll das bedeuten???]
thus ruling out fixed-parameter algorithms.
One contribution of this paper is that---for the fundamental problems studied 
here---choosing the number of distinct numeric values or an upper bound 
on them as the parameter suffices to overcome this impediment. Note that this requirement
is much weaker than assuming the parameter to be bounded by a constant (that can appear
in the exponent of the running time), as e.g., a running time of $2^{O(k)} \cdot poly(n)$
implies an \emph{exact} polynomial time algorithm for $k\in O(\log n)$ 
which is not true if the running time is $O(n^k)$.

% , as we often derive polynomial-time algorithms even for parameter values bounded by $O(\log n)$.
% One contribution of this paper is that we identify parameters which
% overcome this impediment.

We hope that our work gives rise to a novel perspective on scheduling
as well as on $\mathsf{FPT}$, yielding further interesting research
on fixed-parameter tractability of the many well-established and important
scheduling problems.

% \vspace{-1.5ex}

\subsection{Our Contributions}

\label{sec:ourcontributions} In this paper we present the first fixed-parameter
algorithms for several fundamental scheduling problems. In Section~\ref{sec:makespanminimization}
we study one of the most classical scheduling problems, which is minimizing
the makespan on an arbitrary number of machines without preemption,
i.e. the problem $P||C_{\max}$. Assuming integral input data, our
parameter $k$ defines an upper bound on the job processing times
appearing in an instance with $n$ jobs. We first prove that for any number of machines, we can restrict ourselves
to (optimal) solutions where jobs of the same length are almost equally
distributed among the machines, up to an additive error term of $\pm f(k)$
jobs. This insight can be used as an independent preprocessing routine
which optimally assigns the majority of the jobs of an instance (given
that $n\gg k$). After this preparation, we show that the remaining
problem can be formulated as an integer program in fixed dimension,
yielding an overall running time of $f(k)\cdot n^{O(1)}$. We note
that without a fixed parameter, the problem is strongly $\mathsf{NP}$-hard. 
For the much more general machine model of unrelated machines, we show that
$R||C_{\max}$ is fixed-parameter tractable when choosing the number of machines and the number of
distinct processing times as parameters. 
We reduce this problem again to integer programming in fixed dimension where our variables model
how many jobs of each size are scheduled on each machine. To ensure that an assignment of
all given jobs to these ``slots'' exists we argue via Hall's Theorem and ensure that for each subset of jobs
there are enough usable slots. While there the number of such subsets is exponentially large,
we identify a compact set of constraints that alone ensure that for the computed slots on the machines
a feasible job assignment exists. 
Let us remark that our problem is general enough that we do not see a way of using the ``number of numbers'' result by
Fellows et al.~\cite{FellowsEtAl2012}.
Note that if the number of machines or the number of processing times are constant, the problem
is still $\mathsf{NP}$-hard~\cite{lenstra1990approximation}, and thus no $\mathsf{FPT}$-algorithms can exist for those cases.

% and for processing times
% 1,2, and $\infty$ a reduction was given in.

% Note that the problem is weakly without bounding the complexity of the processing times 
% the problem is $\mathsf{NP}$-hard, already on two identical machines (contains \textsc{Partition}).
% AW: this describes the old algorithm 
%
% We reduce this problem to mixed integer
% linear programming with a fixed number of integer variables but a large number
% of fractional variables. This is one of the---to the best of our knowledge---few occasions
% where a combinatorial problem is reduced to a \emph{mixed} integer linear program 
% rather than just an integer program with a constant number of dimensions.
% \textcolor{red}{Fixed-parameter algorithms using mixed ILPs seem unknown hitherto.}

% Note that if only the number of machines is a parameter, the problem is 
% $\mathsf{W}[1]$-hard, even if the machines are identical and the job processing times
% are polynomially bounded (follows from a result by Jansen et al.~\cite{JansenEtAl2013}). 

% It also follows from Jansen et al.'s $\mathsf{W}[1]$-hardness result~\cite{JansenEtAl2013}
% for \emph{unary} bin packing parameterized by the number $k$ of bins
% that makespan minimization on $k$ machines with jobs of polynomially
% bounded processing time is $\mathsf{W}[1]$-hard.

Then, in Section~\ref{sec:minimizingweightedcompletiontime}, we study
scheduling with rejection. Each job $j$ is specified by a processing
time~$p_{j}$, a weight $w_{j}$, and a rejection cost $e_{j}$ (all
jobs are released at time zero). We want to reject a set~$J'$ of
at most~$k$ jobs, and schedule all other jobs on one machine to minimize
$\sum_{j\notin J'}w_{j}C_{j}+\sum_{j\in J'}e_{j}$. For a given instance,
we identify three key parameters: the number of distinct processing
times, the number of distinct weights, and the maximum number $k$
of jobs to be rejected. We show that if any two of the three values
are taken as parameters, the problem becomes fixed-parameter tractable.
If $k$ and either of the other two are parameters, then we show that
an optimal solution is characterized by one of sufficiently few possible
patterns of jobs to be rejected. Once we guessed the correct pattern,
%(the number of possibilites depends only on the parameters)
an actual solution can be found by a dynamic program efficiently.
If the number of distinct processing times and lengths are parameters
(but not~$k$), we provide a careful modeling of the problem as an
integer program with convex objective function in fixed dimension.
Here, we need to take particular care to incorporate the rejection
costs as bounded-degree polynomials, to be able to use the deep theory
of solving convex programs in fixed dimension efficiently.
To the best of our knowledge, this is the first time that convex programming is used in fixed-parameter algorithms.
We complement this result by showing that if only the number of rejected jobs $k$ is
the fixed parameter, then the problem becomes $\mathsf{W}[1]$-hard,
which prohibits the existence of a fixed-parameter algorithm, unless
$\mathsf{FPT}=\mathsf{W}[1]$ (which would imply subexponential time algorithms for many canonical $\mathsf{NP}$-complete problems). In particular, this justifies choosing
not only~$k$ but also the number of distinct processing times or
weights as a parameter. Our results respond to a quest by Marx~\cite{Marx2011}
for investigating the fixed-parameter tractability
of scheduling with rejection.

Finally, in Section~\ref{sec:schedulekjobswithelllengths} we turn
our attention to the parametrized dual of the latter problem: scheduling
with rejection of at least $n-s$ jobs ($s$ being the parameter). We reduce this to a much more
general problem which can be cast as the profit maximization version
of the \emph{General Scheduling Problem (GSP)}~\cite{BansalPruhs2010}.
We need to select a subset $J'$ of at most $s$ jobs to schedule
from a given set $J$, and each scheduled job~$j$ yields a profit
$f_{j}(C_{j})$, depending on its completion time~$C_{j}$. Note that
this function can be different for each job. Additionally, each job
$j$ has a release date $r_{j}$ and a processing time $p_{j}$. The
goal is to schedule these jobs on one machine to maximize $\sum_{j\in J'}f_{j}(C_{j})$.
We study the preemptive as well as the non-preemptive version of this
problem. In its full generality, GSP is not well understood. Despite
that, we are able to give a fixed-parameter algorithm if the number
of distinct processing times is bounded by a parameter, as well as
the maximum cardinality of~$J'$. Also, this implies that we are
able to solve scheduling with rejection of at least $n-s$ jobs, 
for any job dependent profit function. In particular, this includes functions that stem from difficult scheduling objectives
such as weighed flow time, weighted tardiness, and weighted flow time
squared. We complement our findings by showing that if only the number
of scheduled jobs $s$ is a fixed parameter, the problem is $\mathsf{W}[1]$-hard.
This justifies to consider additionally the number of distinct processing
times as a parameter. On the other hand, if only the latter is the parameter
the problem is para-$\mathsf{NP}$-hard, as it is $\mathsf{NP}$-hard if $p_j \in \{1,3\}$
for each job~$j$. In particular, we prove that for this setting the ordinary GSP
is $\mathsf{NP}$-hard, which might be of independent interest.

%\begin{sidewaystable}
\begin{table}

\setlength{\extrarowheight}{3pt}
\centering %
\begin{tabular}{ccc}
\toprule 
Problem & Parameters & Result\tabularnewline
\midrule
$P||C_{\max}$  & maximum $p_{j}$ & $\mathsf{FPT}$\tabularnewline

$R||C_{\max}$  & \#distinct $p_{j}$ and \#machines & $\mathsf{FPT}$ \tabularnewline

$1|r_{j},(pmtn)|\max\sum_{\le s}f_{j}(C_{j}$) & \#selected jobs $s$ and \#distinct $p_{j}$ & $\mathsf{FPT}$\tabularnewline

$1|r_{j},(pmtn)|\max\sum_{\le s}f_{j}(C_{j}$) & \#selected jobs $s$ & $\mathsf{W[1]}$-hard\tabularnewline

$1|r_{j},(pmtn)|\max\sum_{\le s}f_{j}(C_{j}$) & \#distinct $p_{j}$ (in fact $\forall\, p_{j}\in\{1,3\}$) & para-$\mathsf{NP}$-hard\tabularnewline

$1||\sum_{\le k} e_{j}+\sum w_{j}C_{j}$  & \#rejected jobs $k$ and \#distinct $p_{j}$ & $\mathsf{FPT}$\tabularnewline

$1||\sum_{\le k} e_{j}+\sum w_{j}C_{j}$  & \#rejected jobs $k$  and \#distinct $w_{j}$ & $\mathsf{FPT}$\tabularnewline

$1||\sum_{\le k} e_{j}+\sum w_{j}C_{j}$  & \#distinct $p_{j}$ and \#distinct $w_{j}$ & $\mathsf{FPT}$\tabularnewline

$1||\sum_{\le k} e_{j}+\sum w_{j}C_{j}$  & \#rejected jobs $k$  & $\mathsf{W[1]}$-hard\tabularnewline
\bottomrule
\vspace{-1ex}
\end{tabular}

\caption{Summary of our results. %Notice that if a problem
% is $\mathsf{FPT}$ for some set of parameters, then the problem is
% $\mathsf{FPT}$ for any superset of these parameters. Similarly, if
% a problem is $\mathsf{W}[1]$-hard for some set of parameters, then
% the problem is $\mathsf{W}[1]$-hard for any subset of these parameters.
%Therefore, in this table we only list the smallest respectively largest sets of parameters for which we obtain fixed-parameter algorithms respectively $\mathsf{W}[1]$-hardness proofs.}
 }

\label{tab:results} %\end{sidewaystable}
\end{table}

\vspace*{-1em}

Our contributions are summarized in Table~\ref{tab:results}.
% A summary of our contributions is given in Table~\ref{tab:results}.
%Statements marked by $(\star)$ are proven in Appendix~\ref{sec:missingproofs}.

%
%  AW: Ich schlage vor, die Beschreibung in die Caption der Tabelle reinzuschreiben
%
% There, and throughout, we consider as parameters the optimal objective
% function value $Z$, number $\overline{p}$ of distinct processing
% times, the maximum processing time $p_{\max}$, the number $\overline{w}$
% of distinct job weights, the number $\overline{f}$ of cost functions,
% the number $s$ of scheduled jobs, the number $r$ of rejected jobs,
% and the number $t$ of job types. Moreover, $C_{j}$ denotes the completion
% time of each job $j$ and $C_{\max}$ denotes the makespan of a schedule,
% i.e., the maximum $C_{j}$ over all jobs $j$.

% Though our positive results appear to be somewhat restricted, we will
% argue that this is not an artefact of our analysis. Rather, we complement
% our fixed-parameter algorithms for the case of \emph{several} parameters
% with $\mathsf{W}[1]$-hardness results which show that omitting any
% \emph{single} one of them prohibits the existence of a fixed-parameter
% algorithm, unless $\mathsf{FPT}=\mathsf{W}[1]$.

% \vspace{-1.5ex}

\subsection{Related Work}

\label{sec:relatedwork} %\paragraph{Scheduling.}
\textbf{Scheduling.} One very classical scheduling problem studied
in this paper is to schedule a set of jobs non-preemptively on a set
of $m$ identical machines, i.e.,~$P||C_{\max}$. Research for
it dates back to the 1960s when Graham showed that the greedy list
scheduling algorithm yields a $(2-\frac{1}{m})$-approximation and
a $4/3$-approximation when the jobs are ordered non-decreasingly
by length~\cite{Graham69}. After a series of improvements~\cite{CoffmanGareyJohnson78,Friesen84,Langston81,Sah76},
Hochbaum and Shmoys present a polynomial time approximation scheme~(PTAS), even if the number of machines is part of the input~\cite{HochbaumShmoys1987}.
On unrelated machines, the problem is $\mathsf{NP}$-hard to approximate
with a better factor than $3/2$~\cite{EbenlendrEtAl2008,lenstra1990approximation}
and there is a 2-approximation algorithm~\cite{lenstra1990approximation} that
extends to the generalized assignment problem~\cite{shmoys1993approximation}.
For the restricted assignment case, i.e., each job has a fixed processing time 
and a set of machines where one can assign it to, Svensson~\cite{Svensson2012} 
gives a polynomial time algorithm that estimates the optimal makespan up to a
factor of $33/17 + \epsilon \approx 1.9412 + \epsilon$.

For scheduling jobs with release dates preemptively on one machine,
a vast class of important objective functions is captured by the General
Scheduling Problem (GSP). In its full generality, it is studied by
Bansal and Pruhs~\cite{BansalPruhs2010} who
give a $O(\log\log nP)$-approximation, improving on several earlier results for special cases. One particularly important special
case is the weighted flow time objective where previously to Bansal and Pruhs~\cite{BansalPruhs2010}
the best known approximation factors where $O(\log^{2}P)$, $O(\log W)$, and $O(\log nP)$~\cite{BansalDhamdhere2007,ChekuriKhannaZhu2001}; here, $P$ and $W$ are the maximum ratios of job processing times and weights, respectively. Also, a quasi-PTAS with running
time $n^{O(\log P\log W)}$ is known~\cite{ChekuriKhanna2002}. % Since the best known complexity result is $NP$-hardness much remains
% unclear.
% A much better understood classical objective in the GSP framework
% is minimizing $\sum_{j}w_{j}C_{j}$, i.e., the weighted sum of completion
% time. Even on an arbitrary number of identical machines an EPTAS is
% known~\cite{Afrati1999}.

A generalization of classical scheduling problems is \emph{scheduling
with rejection}. There, each job~$j$ is additionally equipped with
a \emph{rejection cost $e_{j}$.} The scheduler has the freedom to
reject job~$j$ and to pay a penalty of $e_{j}$, in addition to
some (ordinary) objective function for the scheduled jobs. For one
machine and the objective being to minimize the sum of weighted completion
times, Engels et al.~\cite{EngelsEtAl2003} give an optimal pseudopolynomial
dynamic program for the case that all jobs are released at time zero
and show that the problem is weakly $\mathsf{NP}$-hard. Sviridenko
and Wiese~\cite{SviridenkoWiese2013} give a PTAS for arbitrary release
dates. For the scheduling objective being the makespan and given multiple
machines, Hoogeveen~et~al.~\cite{HoogeveenEtAl2003} give \mbox{FPTASs}
for almost all machine settings, and a $1.58$-approximation for the
$\mathsf{APX}$-hard case of an arbitrary number of unrelated machines
(all results when allowing preemption).

\medskip{}
\noindent
\textbf{Fixed-Parameter Tractability.} Until now, to the best of our
knowledge, no fixed-parameter algorithms for the classical scheduling
problems studied in this paper have been devised. Though restricted scheduling
problems have been considered by the parameterized complexity community,
this generally meant that jobs are represented as vertices of a graph
with conflicting jobs connected by an edge, and then one finds a maximum
independent set in the graph; examples are given by van Bevern et al.~\cite{vanBevernEtAl2012}.
In contrast, classical scheduling problems investigated in the framework
of parameterized complexity appear to be intractable; for example,
$k$-processor scheduling with precedence constraints is $\mathsf{W}[2]$-hard~\cite{BodlaenderFellows1995}
and scheduling unit-length tasks with deadlines and precedence constraints
and~$k$ tardy tasks is $\mathsf{W}[1]$-hard~\cite{FellowsMcCarthy2003}, for parameter~$k$.
% It also follows from Jansen et al.'s $\mathsf{W}[1]$-hardness result~\cite{JansenEtAl2013}
% for \emph{unary} bin packing parameterized by the number $k$ of bins
% that makespan minimization on $k$ machines with jobs of polynomially
% bounded processing time is $\mathsf{W}[1]$-hard.

Mere exemptions seem to be an algorithm by Marx and Schlotter~\cite{MarxSchlotter2011}
for makespan minimization where~$k$ jobs have processing time $p\in\mathbb{N}$
and all other jobs have processing time 1, for combined parameter
$(k,p)$, and work by Chu et al.~\cite{ChuEtAl2013} who consider
checking \emph{feasibility} of a schedule (rather than optimization).

% as well as an unpublished result by Bessy and Giroudeau~\cite{BessyGiroudeau2013}
% who show fixed-parameter tractability for the parameter $k$ of tardy
% jobs in a model where each task consists of two sub-tasks all of the
% same length and delayed by the same constant amount of idle time.

A potential reason for this lack of positive results (fixed-parameter
algorithms) might be that the knowledge of fixed-parameter algorithms
for \emph{weighted} problems is still in a nascent stage, and scheduling
problems are inherently weighted, having job processing times, job
weights, etc. We remark though that some scheduling-type problems can be 
addressed by choosing as parameter the ``number of numbers'', as done by Fellows et al.~\cite{FellowsEtAl2012}.
% In particular, our
% algorithms need to run in time that is polynomial in the \emph{binary}
% encoding length of the input integers.

\section{Minimizing the Makespan}
\label{sec:makespanminimization} In this section we consider the
problem $P||C_{\max}$, i.e., where a given a set~$J$ of $n$ jobs
(all released at time zero) must be scheduled non-preemptively on
a set of~$m$ identical machines, as to minimize the makespan
of the schedule. %Additionally, we require that $p_{j}\in\{1,\hdots,k\}$ for each job $j$, where $k$ is a parameter, and all jobs are 
We develop a fixed-parameter algorithm that solves this problem in
time $f(p_{\max})\cdot n^{O(1)}$, so our parameter is the maximum
processing time $p_{\max}$ over all jobs. %We complement our fixed-parameter algorithm by a $\mathsf{W}[1]$-hardness result showing that the assumption on the processing times $p_j$ is necessary.

In the sequel, we say that some job $j$ is of \emph{type $t$} if
$p_{j}=t$; let $J_{t}:=\{j\in J~|~p_{j}=t\}$. %Our reasoning is as follows:
First, we prove that there is always an optimal solution in which
each machine has almost the same number of jobs of each type, up to
an additive error of $\pm f(p_{\max})$ for suitable function $f$.
This allows us to fix some jobs on the machines. For the remaining
jobs, we show that each machine receives at most $2f(p_{\max})$ jobs of
each type; hence there are only $(2f(p_{\max}))^{p_{\max}}$ possible
configurations for each machine. We solve the remaining problem with
an integer linear program in fixed dimension.

%Suppose we have a set of $m$ identical machines and a set of jobs $J$ with $p_{j}\in\{1,\hdots,k\}$ for each $j\in J$.
%We assume that $m \leq n^{o(1)}$, since otherwise one of the (identical) machines will run constantly many jobs, and so we can guess this set of jobs in polynomial time and continue with a reduced instance.
%Consider now an instance of $P|p_{j}\le k|C_{\max}$ for which $\frac{|J_i|}{m} \leq g(k)$ for $g(k) = k!\frac{k^2 + k}{2} + k$.
%Then $n = \sum_{i=1}^k|J_i| \leq kg(k)m \leq kg(k)n^{o(1)}$, and thus the instance itself is a kernel.
%In the following, we therefore assume that $n$ is large enough so that $\frac{|J_i|}{m} \leq g(k)$ holds for some $i\in\{1,\hdots,k\}$; we call such instances \emph{non-trivial}.

As a first step, for each type $t$, we assign $\left\lfloor \frac{|J_{t}|}{m}\right\rfloor -f(p_{\max})$
jobs of type $t$ to each machine; let $J_0\subseteq J$ be this set of jobs. This is justified by the next
lemma, which can be proven by starting with an arbitrary optimal schedule and exchanging jobs carefully between machines until the claimed property holds.

\begin{lemma}\label{lem:equal-distribution} There is a function
$f:\mathbb{N}\rightarrow\mathbb{N}$ with $f(p_{\max}) = 2^{O(p_{\max}\cdot\log p_{\max})}$ for all $p_{\max}\in \mathbb{N}$ %For any $k\in\mathbb{N}$ there is a value $f(k)$
 such that every instance of $P||C_{\max}$ admits an optimal solution
in which for each type $t$, each of the $m$ machines schedules at
least $\left\lfloor |J_{t}|/m\right\rfloor -f(p_{\max})$
and at most $\left\lfloor |J_{t}|/m\right\rfloor +f(p_{\max})$
jobs of type $t$.
%{We can choose $f(p_{\max}) = 2^{O(p_{\max}\cdot\log p_{\max})}\cdot((p_{\max})^3 + p_{\max})$.}
\end{lemma}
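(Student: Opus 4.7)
The plan is an exchange argument: start with an arbitrary optimal schedule and repeatedly apply load- or makespan-preserving modifications that decrease a convex potential, invoking minimality to force the claim.

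Let $T := \mathrm{lcm}(1,2,\ldots,p_{\max})$. By the Prime Number Theorem $T = 2^{O(p_{\max})}$, so $T$ is well within $2^{O(p_{\max}\log p_{\max})}$, and I set $f(p_{\max}) := 2p_{\max}T + p_{\max}$. Two makespan-preserving operations are available. The first is a \emph{block swap}: for any two machines $M_1, M_2$ and any two types $s, t$, exchange $T/t$ type-$t$ jobs on $M_1$ with $T/s$ type-$s$ jobs on $M_2$. Each side has total processing time exactly $T$, so both machine loads---and hence the makespan---are preserved. The second is a \emph{single-job move}: transfer one type-$t$ job from $M_1$ to $M_2$; this is makespan-preserving whenever $L(M_2) + t \le C_{\max}$.

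Among all optimal schedules I would pick $S^*$ minimising the potential $\Phi(S) := \sum_{i,t} a_{i,t}^2$ and argue by contradiction that $S^*$ has the claimed balance. Suppose not: for some type $t$ and machine $M_1$ we have $a_{1,t} > \lfloor |J_t|/m \rfloor + f$, and averaging gives a machine $M_2$ with $a_{2,t} \le \lfloor |J_t|/m \rfloor$, hence $\Delta := a_{1,t} - a_{2,t} > f$. Case~1: if $L(M_2) + t \le C_{\max}$, the single-job move preserves optimality and strictly decreases $\Phi$ (since $\Delta \ge 2$), contradicting minimality. Case~2: otherwise $L(M_2) > C_{\max} - p_{\max}$, and if some $s \ne t$ satisfies $a_{2,s} \ge T/s$, chosen if possible with $a_{2,s} \ge a_{1,s}$, the block swap is applicable; using $\Delta > 2 p_{\max} T$, the gain $2\Delta(T/t) - 2(T/t)^2$ on the type-$t$ term exceeds $2T^2$ and dominates the at-most-$2(T/s)^2 \le 2T^2$ rise in the type-$s$ term, so $\Phi$ strictly decreases. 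Case~3: if no such $s$ exists, then $a_{2,s} < T/s$ for every $s \ne t$, so the non-type-$t$ load on $M_2$ is $< p_{\max} T$; combined with $L(M_2) > C_{\max} - p_{\max}$ and $a_{1,t} \cdot t \le L(M_1) \le C_{\max}$, this yields $\Delta \cdot t < p_{\max} T + p_{\max}$, contradicting $\Delta > f$.

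The main obstacle is the delicate selection of the swap partner $s$ in Case~2, since a wrong choice could raise the type-$s$ contribution to $\Phi$ by more than the type-$t$ gain. The resolution is to prefer $s$ with $a_{2,s} \ge a_{1,s}$; when no such $s$ exists among the types with $a_{2,s} \ge T/s$, one checks that $M_1$ dominates $M_2$ on the ``abundant'' non-$t$ types, and this together with the load constraint $L(M_1) \le C_{\max}$ forces the situation back into Case~3's arithmetic bound. Once this case is handled, verifying the arithmetic that $f = 2p_{\max}T + p_{\max}$ suffices is routine, and the resulting $f(p_{\max}) = 2^{O(p_{\max})}$ fits comfortably within the stated bound $2^{O(p_{\max}\log p_{\max})}$.
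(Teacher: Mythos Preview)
Your argument is correct and takes a genuinely different route from the paper's proof. The paper proceeds by induction on the job type $\ell$ from $p_{\max}$ down to $1$: assuming the balance bound holds for all larger types, it shows that a large type-$\ell$ imbalance between two machines forces the deficit machine to carry a large volume of shorter jobs, and then swaps a small number (at most $p_{\max}$) of type-$\ell$ jobs against jobs of some smaller type $\ell'$ using $\mathrm{lcm}(\ell,\ell')\le p_{\max}^2$. This yields the recursively defined bound $h(\ell)\cdot g(p_{\max})$ with $h(1)=(p_{\max}+1)!/2$, whence $f(p_{\max})=2^{\Theta(p_{\max}\log p_{\max})}$.

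Your approach instead picks, among all optimal schedules, one minimising the convex potential $\Phi=\sum_{i,t}a_{i,t}^2$, and argues directly that any imbalance exceeding $f=2p_{\max}T+p_{\max}$ (with $T=\mathrm{lcm}(1,\ldots,p_{\max})$) allows a $\Phi$-decreasing, optimality-preserving move. This avoids the type-by-type recursion entirely and, since $T=2^{O(p_{\max})}$ by the prime number theorem, actually yields the sharper bound $f(p_{\max})=2^{O(p_{\max})}$. The price you pay is that your block swaps move exponentially many jobs at once, whereas the paper's swaps are of size at most $p_{\max}$; but since you only invoke existence of a $\Phi$-minimiser rather than iterating, this costs nothing.

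Two small points you should make explicit in a full write-up: the lower-bound case (some machine with fewer than $\lfloor|J_t|/m\rfloor-f$ jobs of type $t$) is handled by the symmetric argument, noting that the surplus machine then still has $a_{2,t}>f>T/t$ so the block swap remains available; and your Subcase~2b resolution (all abundant $s$ on $M_2$ satisfy $a_{1,s}>a_{2,s}$) deserves to be promoted from the ``obstacle'' paragraph into the main case analysis, since it is where the load comparison $L(M_1)\le C_{\max}$ does real work---the abundant types' contribution on $M_2$ is bounded via $L(M_1)$ rather than being small outright, but the resulting inequality $\Delta\cdot t<p_{\max}T+p_{\max}$ is indeed the same as in Case~3.
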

\begin{proof}
For each machine $i$ denote by $J_{t}^{(i)}$ the set of jobs of
type $t$ scheduled on $i$ in some (optimal) schedule. We prove the
following (more technical) claim: there always exists an optimal solution
in which for every pair of machines $i$ and $i'$, and for each $\ell\in\{1,\hdots,p_{\max}\}$,
we have that $||J_{\ell}^{(i)}|-|J_{\ell}^{(i')}||\leq h(\ell)\cdot g(p_{\max})$,
where $g(p_{\max}):=(p_{\max})^{3}+p_{\max}$ and $h(\ell)$ is inductively defined by setting
$h(p_{\max}):=1$ and $h(\ell):=1+\sum_{j=\ell+1}^{p_{\max}}j\cdot h(j)$.

Suppose that there are two machines $i,i'$ such that there is a value
$\ell$ for which $||J_{\ell}^{(i)}|-|J_{\ell}^{(i')}||>h(\ell)\cdot g(p_{\max})$.
Assume, without loss of generality, that $|J_{\ell}^{(i)}|-|J_{\ell}^{(i')}|>h(\ell)\cdot g(p_{\max})$
and $||J_{\ell'}^{(\bar{i})}|-|J_{\ell'}^{(\bar{i}')}||\leq h(\ell')\cdot g(p_{\max})$
for all $\ell'>\ell$ and all machines $\bar{i},\bar{i}'$. We show
that $i'$ has at least a certain volume of jobs which are shorter
than $\ell$. We calculate that 
\begin{eqnarray*}
\sum_{j=1}^{\ell-1}j\cdot|J_{j}^{(i')}| & = & \left(\sum_{j=1}^{p_{\max}}j\cdot|J_{j}^{(i')}|\right)-\ell\cdot|J_{\ell}^{(i')}|-\left(\sum_{j=\ell+1}^{p_{\max}}j\cdot|J_{j}^{(i')}|\right)\\
 & \ge & \left(-p_{\max}+\sum_{j=1}^{p_{\max}}j\cdot|J_{j}^{(i)}|\right)-\ell\cdot|J_{\ell}^{(i')}|-\left(\sum_{j=\ell+1}^{p_{\max}}j\cdot|J_{j}^{(i')}|\right)\\
 & \ge & \left(-p_{\max}+\sum_{j=1}^{p_{\max}}j\cdot|J_{j}^{(i)}|\right)-\ell\cdot|J_{\ell}^{(i')}|-\left(\sum_{j=\ell+1}^{p_{\max}}j\cdot(|J_{j}^{(i)}|+h(j)\cdot g(p_{\max}))\right)\\
 & > & -p_{\max}+\sum_{j=1}^{\ell}j\cdot|J_{j}^{(i)}|+\ell\cdot(h(\ell)\cdot g(p_{\max})-|J_{\ell}^{(i)}|)-\sum_{j=\ell+1}^{p_{\max}}j\cdot h(j)\cdot g(p_{\max})\\
 & = & -p_{\max}+\sum_{j=1}^{\ell-1}j\cdot|J_{j}^{(i)}|+\ell\cdot h(\ell)\cdot g(p_{\max})-\sum_{j=\ell+1}^{p_{\max}}j\cdot h(j)\cdot g(p_{\max})\\
 & \ge & (p_{\max})^{3},
\end{eqnarray*}
where the first inequality stems from the fact that the considered
schedule is optimal and therefore the loads of any two machines can
differ by at most $p_{\max}$. Hence, there must be at least one type $\ell'<\ell$
such that $\ell'\cdot|J_{\ell}^{(i')}|\geq (p_{\max})^{2}$, and thus $|J_{\ell}^{(i')}|\geq p_{\max}$.
Since the least common multiple of any two values in $\{1,\hdots,p_{\max}\}$
is at most $(p_{\max})^{2}$, we can swap $r\in\{1,\hdots,p_{\max}\}$ jobs of type~$\ell$ from machine $M_{i}$ with $r'\in\{1,\hdots,p_{\max}\}$ jobs of
type $\ell'$ from machine $M_{i'}$, without changing the total load
of any of the two machines. By continuing inductively we obtain a
schedule satisfying $||J_{\ell'}^{(i)}|-|J_{\ell'}^{(i')}||\le h(\ell')\cdot g(p_{\max})$
for any two machines $i,i'$ and any type $\ell'\geq\ell$. Inductively,
we obtain an optimal schedule satisfying the claim for all types $\ell$.
Thus, the claim of the lemma follows for an appropriately chosen function
$f$.

We now investigate in more detail which choice of $f$ is suitable.
To this end, consider first the function $h$~as recursively defined above.
Notice that $h$ is parametrized by $\ell$, but also depends on $p_{\max}$.
We prove, by induction on $\ell$, that $h(\ell) = \frac{(p_{\max}+1)!}{(\ell + 1)!}$.
The base case is for $\ell = p_{\max}$, and it follows from the definition that $h(p_{\max}) = 1 = \frac{(p_{\max}+1)!}{(p_{\max}+1)!}$.
For the inductive step, let $\ell < p_{\max}$ and take as inductive hypothesis that $h(\ell') = \frac{(p_{\max}+1)!}{(\ell'+1)!}$ for all $\ell'\in\{\ell+1,\hdots,p_{\max}\}$.
We then have
\begin{align*}
  h(\ell) & = 1 + \sum_{j=\ell+1}^{p_{\max}}j\cdot h(j)
           = 1 + (\ell+1)h(\ell+1) + \sum_{j=\ell+2}^{p_{\max}}j\cdot h(j)\\
          & = 1 + (\ell+1)h(\ell+1) + (h(\ell+1) - 1)
           = (\ell+2)h(\ell+1)\\
          & = (\ell+2)\frac{(p_{\max}+1)!}{(\ell+2)!}
           = (\ell+2)\frac{(p_{\max}+1)!}{(\ell+1)!(\ell+2)}
           = \frac{(p_{\max}+1)!}{(\ell+1)!},
\end{align*}
as claimed.

Thus, the claim of the lemma holds for $f(p_{\max}) := h(1)\cdot g(p_{\max})$, and therefore we have $f(p_{\max}) \le ((p_{\max}+1)!)((p_{\max})^3 + p_{\max}) = 2^{O(p_{\max}\log p_{\max})}$.
\qed
\end{proof}

Denote by $J' = J\setminus J_0$ be the set of yet unscheduled jobs. \aw{We ignore all other jobs from now on.}
By Lemma~\ref{lem:equal-distribution},
there is an optimal solution in which each machine receives at most $2\cdot f(p_{\max})+1$ jobs from each type.
Hence, there are at most $(2\cdot f(p_{\max})+2)^{p_{\max}}$ ways how
the schedule for each machine can look like (up to permuting jobs
of the same length). Therefore, the remaining problem can be solved
with the following integer program. Define a set $\mathcal{C}=\{0,\hdots,2\cdot f(p_{\max})+1\}^{p_{\max}}$
of at most $(2\cdot f(p_{\max})+2)^{p_{\max}}$ ``configurations'',
where each \emph{configuration} is a vector $C\in\mathcal{C}$ encoding
the number of jobs from~$J'$ of each type assigned to a machine.

In any optimal solution for $J'$, the makespan is in the range $\{\left\lceil p(J')/m\right\rceil ,\linebreak\hdots,\left\lceil p(J')/m\right\rceil +p_{\max}\}$,
as $p_{j}\le p_{\max}$ for each $j$. For each value $T$ in this
range we try whether $\mathsf{opt}\le T$, where $\mathsf{opt}$ denotes the minimum makespan of the instance. So fix a value $T$. 
We allow only configurations $C=(c_{1},\hdots,c_{p_{\max}})$
which satisfy $\sum_{i=1}^{p_{\max}}c_{i}\cdot i\le T$; let
$\mathcal{C}(T)$ be the set of these configurations. We define an
integer program. For each $C\in\mathcal{C}(T)$, introduce a variable
$y_{C}$ for the number of machines with configuration~$C$ in the
solution. (As the machines are identical, only the number of machines
following each configuration is important.) 
\begin{eqnarray}
\sum_{C\in\mathcal{C}(T)}y_{C} & \leq & ~m\label{eq:bound-no-machines}\\
\sum_{C=(c_{1},\hdots,c_{p_{\max}})\in\mathcal{C}(T)}y_{C}\cdot c_{p} & \geq & ~\aw{|J'\cap J_{p}|},~p=0,\hdots,p_{\max}\label{eq:enough-slots}\\
y_{C} & \in & ~\{0,\hdots,m\},~C\in\mathcal{C}(T)\label{eq:config-feasible}
\end{eqnarray}
Inequality~(\ref{eq:bound-no-machines}) ensures that at most $m$
machines are used, inequalities~(\ref{eq:enough-slots}) ensure that
all jobs from each job type are scheduled. The above integer program~\eqref{eq:bound-no-machines}--\eqref{eq:config-feasible}
has at most $(2\cdot f(p_{\max})+2)^{p_{\max}}$ dimensions.

To determine feasibility of \eqref{eq:bound-no-machines}--\eqref{eq:config-feasible},
we employ deep results about integer programming in fixed dimension.
\aw{As we will need it later, we cite here an algorithm that can even minimize over
convex spaces described by (quasi-)convex polynomials, rather than only over polytopes.}
Known efficient algorithms for convex integer minimization in
fixed dimension use variants of Lenstra's algorithm~\cite{Lenstra1983}
for integer programming. %, starting with a ``small'' (polynomial-size) initial outer ellipsoid that includes a bounded part of the feasible region containing an optimal integer solution.
The first algorithm of this kind was given by Khachiyan~\cite{Khachiyan1984},
later improved by Heinz~\cite{Heinz2005}. 
We use the algorithm by Heinz as stated in a survey by K{\"{o}}ppe~\cite[Theorem 6.4]{Koppe2012}.

\begin{theorem}[\cite{Heinz2005,Koppe2012}]
% [\cite[Theorem 6.4]{Heinz2005,Koppe2012}]
\label{thm:convexintegerminimizationisfpt} Let $f,g_{1},\hdots,g_{m}\in\mathbb{Z}[x_{1},\hdots,x_{t}]$
be quasi-convex polynomials of degree at most \mbox{$d\ge2$}, whose coefficients
have a binary encoding length of at most~$\ell$. There is an
algorithm that in time $m\cdot\ell^{O(1)}\cdot d^{O(t)}\cdot2^{O(t^{3})}$
computes a minimizer $\mathbf{x}^{\star}\in\mathbb{Z}^{t}$ of binary encoding size $\ell\cdot d^{O(t)}$
for following problem~\eqref{eqn:convexlenstra}, or reports that
no minimizer exists.
\begin{equation}
\min f(x_{1},\hdots,x_{t})\quad\textnormal{subject to}~g_{i}(x_{1},\hdots,x_{t})\leq0,~~i=1,\hdots,m\quad\mathbf{x}\in\mathbb{Z}^{t}\enspace.\label{eqn:convexlenstra}
\end{equation}
%If a minimizer $\mathbf{x}^{\star}$ is found, then its binary encoding size is $\ell\cdot d^{O(1)}$.
\end{theorem}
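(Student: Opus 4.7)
The plan is to follow the Lenstra-style paradigm for integer programming in fixed dimension, extended to quasi-convex polynomial constraints as done by Khachiyan--Porkolab and Heinz. First I would reduce the minimization problem to a sequence of feasibility problems: since the objective $f$ is quasi-convex with integer coefficients of encoding length at most $\ell$, the minimum value over integer points in the feasible region (if it exists) has binary encoding length bounded by $\ell \cdot d^{O(t)}$ by standard bounds on the size of optimal solutions of polynomial systems. So a binary search over candidate objective values, each incorporated as an additional constraint $f(\mathbf{x}) \le \alpha$, reduces the problem to deciding whether a system of $m+1$ quasi-convex polynomial inequalities of degree at most $d$ has an integer solution in $\mathbb{Z}^t$.

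Next I would solve the feasibility problem by a recursive Lenstra-type procedure on the convex body $K = \{\mathbf{x}\in\mathbb{R}^t : g_i(\mathbf{x}) \le 0 \text{ for all } i\}$, which is convex because each $g_i$ is quasi-convex. The algorithm would (i) compute a rounding of $K$, that is, a linear transformation $\tau$ and a radius $R$ with $B(\mathbf{0},1)\subseteq \tau(K) \subseteq B(\mathbf{0},R)$, with $R$ depending only on $t$; this uses a separation oracle for $K$ obtained by evaluating the polynomials $g_i$ and computing gradients. Then (ii) apply lattice basis reduction (LLL) to the lattice $\tau(\mathbb{Z}^t)$ to find a short vector $\mathbf{b}$. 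If $K$ is ``fat'' in the direction of $\mathbf{b}$, one finds an integer point directly; otherwise $K$ is flat in that direction, so it meets only $d^{O(t)}$ hyperplanes of the form $\langle \mathbf{c},\mathbf{x}\rangle = k$ for $k\in\mathbb{Z}$, and we recurse on each such slice, which is a feasibility problem in dimension $t-1$.

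The main technical obstacle will be handling the quasi-convex polynomial constraints rather than linear ones: the separation oracle is no longer immediate, and a priori $K$ might be unbounded or have integer points only at very large coordinates. To control this I would invoke the key bound of Khachiyan and Porkolab: if a system of quasi-convex polynomial inequalities of degree at most $d$ in $t$ variables with coefficients of size at most $\ell$ has an integer solution, then it has one of binary encoding length $\ell \cdot d^{O(t)}$. This lets us intersect $K$ with a box of bounded coordinates without loss of generality, so $K$ becomes a bounded convex body amenable to the rounding procedure. The encoding-size bound on the solution also controls the bit-complexity of all intermediate computations.

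Finally I would account for the recursion: each call does $m \cdot \ell^{O(1)}$ oracle work plus $d^{O(t)}$ work for the rounding/LLL step, and spawns at most $d^{O(t)}$ recursive subproblems of dimension one less, yielding the claimed bound $m \cdot \ell^{O(1)} \cdot d^{O(t)} \cdot 2^{O(t^3)}$, where the $2^{O(t^3)}$ factor is the standard overhead from iterated LLL-based dimension reduction in Lenstra's algorithm. The output $\mathbf{x}^\star$ produced at the base case of the recursion is an integer point of encoding size $\ell \cdot d^{O(t)}$, as required.
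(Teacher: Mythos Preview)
The paper does not prove this theorem at all: it is quoted from the literature (Heinz~\cite{Heinz2005}, as stated in K\"oppe's survey~\cite{Koppe2012}) and used as a black box. There is nothing to compare your proposal against.

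That said, your sketch is a reasonable outline of how the cited result is actually obtained, via the Khachiyan--Porkolab/Heinz extension of Lenstra's algorithm to quasi-convex polynomial constraints. For the purposes of this paper, however, no proof is expected; you should simply cite the result and move on.
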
 

% \begin{theorem}[\cite{HildebrandKoppe2013}, \textcolor{red}{see also Theorem 6.4 of~\cite{Koppe2012}}]
% \label{thm:convexintegerminimizationisfpt} Let $d,\ell,m,t = t_1 + t_2\in\mathbb{N}$,
% $t\geq 2$ and let $f,g_{1},\hdots,g_{m}\in\mathbb{Z}[x_{1},\hdots,x_{t}]$
% be quasi-convex polynomials of degree at most $d$ with coefficients
% of binary encoding length at most~$\ell$. In time $m\cdot\ell^{O(1)}d^{O(t)}4^{t\log_{2}t}$,
% one either finds a minimizer~$\mathbf{x}^{\star}$ with binary encoding size $\ell\cdot d^{O(1)}$ of the following
% problem~\eqref{eqn:convexlenstra}, or reports that no minimizer
% exists: %  \begin{eqnarray*}
% %    \min & f(x_1,\hdots,x_t)\\
% %    \textnormal{subject to}~& g_i(x_1,\hdots,x_t)\leq 0,~~ i = 1,\hdots,m\\
% %                      & \mathbf{x}\in\mathbb Z^{t},
% %  \end{eqnarray*}
%  
% %\begin{equation}
% %\begin{cases}
% %\min & f(x_{1},\hdots,x_{t})\\
% %\textnormal{subject to}~ & g_{i}(x_{1},\hdots,x_{t})\leq0,~~i=1,\hdots,m\\
% % & \mathbf{x}\in\mathbb{Z}^{t},
% %\end{cases}\tag{CIP}\label{eqn:convexlenstra}
% %\end{equation}
% %
% \begin{equation}
% \min f(x_{1},\hdots,x_{t})\qquad
% \textnormal{subject to}~  g_{i}(x_{1},\hdots,x_{t})\leq0,~~i=1,\hdots,m\qquad
%   \mathbf{x}\in\mathbb R^{t_1}\times \mathbb{Z}^{t_2} \enspace .
% \tag{CIP}\label{eqn:convexlenstra}
% \end{equation}
% %If a minimizer $\mathbf{x}^{\star}$ is found, then its binary encoding size is $\ell\cdot d^{O(1)}$.
% \end{theorem}

Thus, we can determine
feasibility of \eqref{eq:bound-no-machines}--\eqref{eq:config-feasible}
in time $g(p_{\max})\cdot (\log n + \log m)^{O(1)}$, for suitable function~$g$.
The smallest value $T$ for which it is feasible gives the optimal makespan and together with the preprocessing routine due to Lemma~\ref{lem:equal-distribution}
this yields an optimal schedule. %Thus, we have proven the following. 

\begin{theorem} \label{thm:makespanmaxprocessingtime}
There is a function $f$ such that instances of $P||C_{\max}$ with $n$ jobs and $m$ machines can be solved in time~$f(p_{\max})\cdot (n + \log m)^{O(1)}$. \end{theorem}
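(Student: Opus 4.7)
The plan is to combine the equal-distribution preprocessing of Lemma~\ref{lem:equal-distribution} with a configuration-based integer program in fixed dimension whose size is independent of $n$ and $m$. First, I would invoke Lemma~\ref{lem:equal-distribution} to restrict attention to optimal schedules in which every machine carries between $\lfloor |J_t|/m\rfloor - f(p_{\max})$ and $\lfloor |J_t|/m\rfloor + f(p_{\max})$ jobs of each type $t$; this lets me commit a set $J_0$ consisting of $\lfloor |J_t|/m\rfloor - f(p_{\max})$ jobs of each type to every machine without loss of optimality, leaving a residual instance $J' = J \setminus J_0$ in which some optimal completion places at most $2f(p_{\max})+1$ jobs of each type on each machine.

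Next, I would enumerate the configurations $\mathcal{C} = \{0,\ldots,2f(p_{\max})+1\}^{p_{\max}}$, of which there are at most $(2f(p_{\max})+2)^{p_{\max}}$, each vector encoding the per-type counts of residual jobs assigned to a single machine. For each candidate makespan $T$ in the range $\{\lceil p(J')/m\rceil,\ldots,\lceil p(J')/m\rceil+p_{\max}\}$, I would keep only the load-respecting configurations $\mathcal{C}(T)$ and test feasibility of the integer program \eqref{eq:bound-no-machines}--\eqref{eq:config-feasible}, which uses one variable $y_C$ counting the number of machines that realize configuration $C$. Since the machines are identical, feasibility of $(y_C)$ is equivalent to the existence of a schedule of $J'$ with makespan at most $T$, so the smallest feasible $T$ combined with the fixed assignment of $J_0$ gives an optimal schedule for the original instance.

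To decide feasibility, I would apply Theorem~\ref{thm:convexintegerminimizationisfpt} with the linear (hence quasi-convex) constraints \eqref{eq:bound-no-machines}--\eqref{eq:config-feasible} and any fixed objective. The dimension is $|\mathcal{C}(T)| \le (2f(p_{\max})+2)^{p_{\max}}$, a function of $p_{\max}$ alone, while $m$ and the values $|J'\cap J_p|$ appear only as right-hand sides of binary encoding length $O(\log n + \log m)$; the Heinz/K\"oppe algorithm therefore runs in time $f'(p_{\max}) \cdot (\log n + \log m)^{O(1)}$ for some function $f'$. Iterating over the $p_{\max}+1$ choices of $T$, and absorbing the $O(n)$ cost of reading the input and extracting $J_0$ by bucketing jobs according to their type, yields overall running time $f(p_{\max}) \cdot (n + \log m)^{O(1)}$ for a suitable function $f$.

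The main obstacle that this blueprint circumvents is precisely the requirement that $m$ enter the running time only as $\log m$: a naive integer program with one variable per machine would have dimension $\Theta(m)$, and Lenstra-type methods would no longer give an \textsf{FPT} running time in $p_{\max}$. The configuration encoding, enabled by the machines being identical, keeps the dimension bounded by a function of $p_{\max}$ alone, so that $m$ enters only through the binary encoding length of a handful of right-hand-side constants; combined with the preprocessing that removes all but $f(p_{\max})$ jobs per type per machine, this is what makes the whole approach tractable in the claimed running time.
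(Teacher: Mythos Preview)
Your proposal is correct and follows essentially the same approach as the paper: apply Lemma~\ref{lem:equal-distribution} to fix $J_0$, encode the residual problem via machine configurations from $\{0,\ldots,2f(p_{\max})+1\}^{p_{\max}}$, and for each of the $p_{\max}+1$ candidate makespans test feasibility of the integer program~\eqref{eq:bound-no-machines}--\eqref{eq:config-feasible} in fixed dimension using Theorem~\ref{thm:convexintegerminimizationisfpt}. The running-time analysis, including the observation that $m$ enters only through the binary encoding of the right-hand sides, matches the paper's reasoning.
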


Recall that without choosing a parameter, problem $P||C_{\max}$ is strongly $\mathsf{NP}$-hard
(as it contains \textsc{3-Partition}).
% \smallskip
A natural extension to consider is the problem $P||C_{\max}$ parameterized
by the number $\overline{p}$ of distinct processing times. Unfortunately,
for this extension Lemma~\ref{lem:equal-distribution} is no longer
true. Let $q_{1},q_{2}$ be two different (large) prime numbers. Consider
an instance with two identical machines $M_{1},M_{2}$, and $q_{1}$
many jobs with processing time $q_{2}$, and similarly $q_{2}$ many
jobs with processing time $q_{1}$. The optimal makespan is $T:=q_{1}\cdot q_{2}$
which is achieved by assigning all jobs with processing time $q_{1}$
on $M_{1}$ and all other jobs on~$M_{2}$, giving both machines a
makespan of exactly $T$. However, apart from swapping the machines
this is the only optimal schedule since the equation $q_{1}\cdot q_{2}=x_{1}\cdot q_{1}+x_{2}\cdot q_{2}$
only allows integral solutions $(x_{1},x_{2})$ such that~$x_{1}$
is a multiple of $q_{2}$ and $x_{2}$ is a multiple of $q_{1}$.%

Indeed,
for constantly many processing times the problem was only recently shown to
be polynomial time solvable for any constant $\overline{p}$~\cite{GoemansRothvoss2014}.

\subsection{Bounded Number of Unrelated Machines}

We study the problem $Rm||C_{\max}$ where now the machines are unrelated,
meaning that a job can have different processing times on different
machines. In particular, it might be that a job cannot be processed
on some machine at all, i.e., has infinite processing time on that
machine. We choose as parameters the number~$\overline{p}$ of distinct
(finite) processing times and the number of machines $m$ of the instance.

We model this problem as an integer program in fixed dimension. Denote
by $q_{1},\hdots,q_{\overline{p}}$ the distinct finite processing
times in a given instance. For each combination of a machine~$i$ and a \aw{finite} processing
time $q_{\ell}$ we introduce a variable $y_{i,\ell}\in\{0,\hdots,n\}$
that models how many jobs of processing time $q_{\ell}$ are assigned
to~$i$. Note that the number of these variables is bounded by $m\cdot\overline{p}$.
An assignment to these variables can be understood as allocating $y_{i,\ell}$
slots for jobs with processing time $q_{\ell}$ to machine $i$, without
specifying what actual jobs are assigned to these slots. Assigning
the jobs to the slots can be understood as a bipartite matching: introduce
one vertex $v_{j}$ for each job $j$, one vertex $w_{s,i}$ for each
slot~$s$ on each machine $i$, and an edge $\{v_{j},w_{s,i}\}$ whenever
job $j$ has the same size on machine $i$ as slot $s$. According
to Hall's Theorem, there is a matching in which each job is matched
if and only if for each set of jobs $J'\subseteq J$ there are at
least~$|J'|$ slots to which at least one job in $J'$ can be assigned.
For one single set~$J'$ the latter can be expressed by a linear constraint;
however, the number of subsets~$J'$ is exponential. We overcome this
as follows: We say that two jobs $j,j'$ are of the same \emph{type
}if $p_{i,j}=p_{i,j'}$ for each machine~$i$. Note that there are
only $(\bar{p}+1)^{m}$ different types of jobs. As we will show,
it suffices to add a constraint for sets of jobs $J'$ such that for
each job type either all or none of the jobs of that type are contained
in $J'$ (those sets ``dominate'' all other sets). This gives rise
to the following integer program. Denote by $Z$ the set of all job
types and for each $z\in Z$ denote by $J_{z}\subseteq J$ the set
of jobs of type $z$. For each set $Z'\subseteq Z$ denote by~$Q_{i,Z'}$
the set of distinct finite processing times of jobs of types in~$Z'$
on machine $i$.
\begin{align}
\min~T\quad\textnormal{s.t.}~\sum_{\ell\in\{1,\hdots,\overline{p}\}}y_{i,\ell}\cdot q_{\ell} & \leq T, &  & i=1,\hdots,m\label{eq:R_Cmax-top}\\
\sum_{z\in Z'}|J_{z}| & \le\sum_{i}\sum_{\ell:q_{\ell}\in Q_{i,Z'}}y_{i,\ell} &  & \forall~Z'\subseteq Z\\
y_{i,\ell} & \in\{0,\hdots,n\} &  & i=1,\hdots,m,~\ell=1,\hdots,\overline{p}\\
T & \ge0\label{eq:R_Cmax-bottom}
\end{align}
As the above IP has only $m\cdot\bar{p}$ dimensions and $2^{(\bar{p}+1)^{m}}+m$
constraints, we solve it using Theorem~\ref{thm:convexintegerminimizationisfpt}. Finding then
a bipartite matching for the actual assignment of the jobs can be
done in polynomial time. This yields the following theorem.

\begin{theorem}\label{thm:makespan-R-no-pjs-no-machines} Instances
of $R||C_{\max}$ with $m$ machines and $n$ jobs with $\overline{p}$
distinct finite processing times can be solved in time~$f(\overline{p},m)\cdot(n + \log\max_{\ell}q_{\ell})^{O(1)}$
for a suitable function $f$. \end{theorem}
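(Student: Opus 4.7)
The plan is to verify that the integer program \eqref{eq:R_Cmax-top}--\eqref{eq:R_Cmax-bottom} correctly models $R||C_{\max}$ and then invoke Theorem~\ref{thm:convexintegerminimizationisfpt}. For the easy direction, any schedule of makespan $T^{\star}$ yields a feasible IP solution by setting $y_{i,\ell}$ to the number of jobs with processing time $q_{\ell}$ assigned to machine $i$ and $T = T^{\star}$. The delicate direction is to reconstruct an actual schedule of makespan at most $T$ from a feasible IP solution $(y_{i,\ell}, T)$.

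To this end I consider the bipartite graph $G$ with one vertex per job on the left and, on the right, one vertex per ``slot'', where for each pair $(i,\ell)$ we introduce $y_{i,\ell}$ slots of size $q_{\ell}$ on machine $i$. A job $j$ is joined to a slot of size $q_{\ell}$ on machine $i$ iff $p_{i,j}=q_{\ell}$. A matching saturating all jobs immediately yields a schedule whose load on each machine $i$ is $\sum_{\ell} y_{i,\ell}\cdot q_{\ell}\le T$ by~\eqref{eq:R_Cmax-top}. It therefore remains to certify via Hall's theorem that $|N_{G}(J')|\ge|J'|$ for every $J'\subseteq J$.

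The key observation---and the main obstacle---is that Hall's condition has exponentially many instantiations, so one cannot write them all into an IP of bounded dimension. The way out is to notice that all jobs of a common type $z\in Z$ have identical neighborhoods in $G$. Hence if some $J'$ violates Hall's condition, then so does its type-closure $J'':=\bigcup_{z\,:\,J'\cap J_{z}\ne\emptyset}J_{z}$, because $|N_{G}(J'')|=|N_{G}(J')|$ while $|J''|\ge|J'|$. It therefore suffices to verify Hall's condition on the $2^{|Z|}$ subsets of the form $\bigcup_{z\in Z'}J_{z}$ for $Z'\subseteq Z$; and for such a subset the neighborhood is exactly $\bigcup_{i}\bigcup_{\ell:\,q_{\ell}\in Q_{i,Z'}}$ (the slots on each machine whose size appears for some type in $Z'$), so Hall's condition reduces to the second family of constraints in the IP.

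Finally, the IP has $m\bar{p}+1$ integer variables, at most $2^{(\bar{p}+1)^{m}}+m$ linear (hence quasi-convex) constraints, and encoding length $O(\log n+\log\max_{\ell}q_{\ell})$ per coefficient. Applying Theorem~\ref{thm:convexintegerminimizationisfpt} (padding the linear objective $T$ with a term $0\cdot T^{2}$ to meet the degree requirement $d\ge 2$) solves it in time $f(\bar{p},m)\cdot(\log n+\log\max_{\ell}q_{\ell})^{O(1)}$ for a suitable function $f$. From the optimal $(y_{i,\ell}, T)$ I then extract an actual schedule by computing a job-saturating matching in $G$ using any standard bipartite-matching algorithm in time polynomial in $n$. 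This yields the claimed overall running time $f(\bar{p},m)\cdot (n+\log\max_{\ell}q_{\ell})^{O(1)}$.
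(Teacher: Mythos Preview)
Your proof is correct and follows essentially the same route as the paper: both directions of the IP-equivalence are argued identically, and your ``type-closure'' argument for reducing Hall's condition to the constraints indexed by $Z'\subseteq Z$ is exactly the paper's $\mathrm{full}(J')$ argument (with $N_{G}(J'')=N_{G}(J')$ corresponding to the paper's $\mathrm{slot}(J')=\mathrm{slot}(\mathrm{full}(J'))$). The only additions you make---counting $T$ as an extra variable and noting the degree-$2$ padding for Theorem~\ref{thm:convexintegerminimizationisfpt}---are harmless technical clarifications.
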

\begin{proof}We show that the integer program \eqref{eq:R_Cmax-top}-\eqref{eq:R_Cmax-bottom} has a solution of value $T$
if and only if there is a solution with makespan $T$. First suppose
that there is a solution with makespan $T$. Then for each machine
$i$ and each processing time $q_{\ell}$ we set $y_{i,\ell}$ to
be the number of jobs with processing time~$q_{\ell}$ on machine~$i$. As the makespan is $T$, we have that $\sum_{\ell\in\{1,\hdots,\overline{p}\}}y_{i,\ell}\cdot q_{\ell}\le T$
for each machine $i$. Now consider a constraint for a set $Z'\subseteq Z$.
All jobs in $\cup_{z\in Z'}J_{z}$ are assigned in the given solution.
Therefore, we must have that $\sum_{z\in Z'}|J_{z}|\le\sum_{i}\sum_{\ell:q_{\ell}\in Q_{i,Z'}}y_{i,\ell}$
since the right-hand side of the inequality denotes the number of
slots (given by the $y_{i,\ell}$ variables) that at least one job
in $\cup_{z\in Z'}J_{z}$ can be assigned to. 

Conversely, suppose that the IP \eqref{eq:R_Cmax-top}-\eqref{eq:R_Cmax-bottom} has a solution of value $T$.
We prove that in polynomial time we can construct a feasible schedule
with makespan $T$. We construct a bipartite graph as described above:
we introduce a vertex $v_{j}$ for each job $j$. For each machine
$i$ and each processing time $q_{\ell}$, we introduce $y_{i,\ell}$
slots of size $q_{\ell}$ and add a vertex $w_{s,i}$ for
each slot $s$ on machine $i$. For each pair of a job $j$ and a
slot $s$ on a machine~$i$, we introduce an edge $\{v_{j},w_{s,i}\}$
if and only if $p_{i,j}$ equals the size of slot $s$. According
to Hall's Theorem, there is a matching that matches each vertex $v_{j}$
if and only if for each set of jobs $J'\subseteq J$ there are at
least~$|J'|$ slots to which at least one job in $J'$ can be assigned.
Consider a set $J'\subseteq J$ and let $\mathrm{slot}(J')$ be the
set of slots that a job in $J'$ can be assigned to. Denote by $\mathrm{full}(J')$
the set of all jobs $j\in J$ for which there is a job $j'\in J'$
such that $j$ and $j'$ are of the same type. Observe that $J'\subseteq\mathrm{full}(J')$
and $\mathrm{slot}(J')=\mathrm{slot}(\mathrm{full}(J'))$. Then there
is a set $Z'\subseteq Z$ such that $\cup_{z\in Z'}J_{z}=\mathrm{full}(J')$.
Due to the constraint $\sum_{z\in Z'}|J_{z}|\le\sum_{i}\sum_{\ell:q_{\ell}\in Q_{i,Z'}}y_{i,\ell}$
there are enough slots for the jobs in $\cup_{z\in Z'}J_{z}=\mathrm{full}(J')$.
As $J'\subseteq\mathrm{full}(J')$ and $\mathrm{slot}(J')=\mathrm{slot}(\mathrm{full}(J'))$
there are also enough slots for the jobs in $J'$. Applying this reasoning
to all sets $J'\subseteq J$, due to Hall's Theorem there exists a
schedule with makespan at most $T$. Using standard bipartite matching
theory (see e.g.~\cite[Chapter 16]{Schrijver03}) we can find such
a matching in polynomial time.\qed\end{proof}

%
%  AW: this proof belongs to the old version of the algorithm of Section 2.1
%
%
% \begin{lemma}\label{lem:round-MILP}
% Given a solution to (MILP), in polynomial time we can construct an
% integral solution with the same makespan.\end{lemma}
% \begin{proof}
% Suppose that the $y_{i,\ell}$-variables and $T$ are fixed. Then
% the fractional variables model a fractional bipartite $b$-matching
% for the following graph. Consider a vertex $v_{j}$ for each job $j$
% and one vertex $w_{i,\ell}$ for each combination of a machine $i$
% and a processing time $q_{\ell}$. Add an edge between two vertices
% $v_{j}$ and $w_{i,\ell}$ if and only if $p_{i,j}=q_{\ell}$. For
% each vertex $w_{i,\ell}$ we define its capacity to be $y_{i,\ell}$
% (i.e., how often this vertex can be matched). Thus, the variables
% $x_{i,j}$ from the solution induce a feasible fractional $b$-matching.
% For each variable $x_{i,j}$ add the edge $\{v_{j},w_{i,\ell}\}$
% to a (fractional) extend of $x_{i,j}$ to the matching (note that
% by construction there is only one such edge for a job $j$ and a machine
% $i$). From standard matching theory it follows that then also an
% integral matching must exist and we can compute it in polynomial time
% (see e.g.~\cite{Schrijver03}\marginpar{\textcolor{red}{where\\ exactly?}}). This integral matching
% then yields an integral solution to (MILP) with the same makespan~$T$.
% \end{proof}

A natural question is the case when only the number of machines is a fixed parameter. If the
job processing times can be arbitrary, then already $P2||C_{\max}$ is $\mathsf{NP}$-hard due to 
the contained \textsc{Partition} problem.  
If one additionally requires the job processing times to be polynomially bounded, then $P||C_{\max}$ is still 
$\mathsf{W}[1]$-hard if only the number of machines is a parameter, 
due to a result by Jansen et al.~\cite{JansenEtAl2013}. 

On the other hand, $R||C_{\max}$ is 
$\mathsf{NP}$-hard if only processing times $\{1,2,\infty \}$ are allowed~\cite{EbenlendrEtAl2008,lenstra1990approximation}.
This justifies to take both $m$ and $\bar{p}$ as a parameters in the unrelated machine case.

\section{Scheduling with Rejection}
% \vspace{-1ex}

\label{sec:minimizingweightedcompletiontime} In this section we study
scheduling with rejection to optimize the weighted sum of completion
time plus the total rejection cost, i.e, $1||\sum_{\leq k}e_{j}+\sum w_{j}C_{j}$.
Formally, we are given an integer $k$ and a set $J$ of~$n$ jobs,
all released at time zero. Each job $j\in J$ is characterized by
a processing time $p_{j}\in\mathbb{N}$, a weight $w_{j}\in\mathbb{N}$
and rejection cost~$e_{j}\in\mathbb{N}$. The goal is to reject a
set $J'\subseteq J$ of at most $k$ jobs and to schedule all other
jobs non-preemptively on a single machine, as to minimize $\sum_{j\in J\setminus J'}w_{j}C_{j}+\sum_{j\in J'}e_{j}$.
Note that if rejection is not allowed ($k=0$), the problem is solved
optimally by scheduling jobs according to non-decreasing Smith ratios
$w_{j}/p_{j}$, breaking ties arbitrarily~\cite{Smith1956}.

\subsection{Parameters Number of Rejected Jobs and Distinct Processing Times
or Weights}

Denote by $\overline{p}\in\mathbb{N}$ the number of distinct processing
times in a given instance. First, we assume that~$\overline{p}$
and the maximum number $k$ of rejected jobs are parameters. Thereafter,
using a standard reduction, we will derive an algorithm for the case
that $k$ and the number $\overline{w}$ of distinct weights are parameters.

Denote by $q_{1},\hdots,q_{\overline{p}}$ the distinct processing
times in a given instance. For each $i\in\{1,\hdots,\overline{p}\}$,
we guess the number of jobs with processing time $q_{i}$ which are
rejected in an optimal solution. Each possible guess is characterized
by a vector $\mathbf{v}=\{v_{1},\hdots,v_{\bar{p}}\}$ whose entries
$v_{i}$ contain integers between $0$ and $k$, and whose total sum
is at most $k$. There are at most $(k+1)^{\overline{p}}$ such vectors
$\mathbf{v}$, each one prescribing that at most $v_{i}$ jobs of
processing time $p_{i}$ can be rejected. We enumerate them all. One
of these vectors must correspond to the optimal solution, so the reader may
\aw{assume} that we know this vector~$\mathbf{v}$. % \emph{{[}AW: kriegt
% man aus der Stirling number was raus, was asymptotisch besser ist
% als} $(k+1)^\ell$\emph{? Falls nein wuerde ich es weglassen{]}. }the
% sum of number of partitions of the integers $0,1,\hdots,k$ into at
% most $\ell$ partition classes, which is determined by
% \[
% \sum_{h=0}^{k}\sum_{i=0}^{\overline{p}}S(h,i)=\sum_{h=0}^{k}\sum_{i=0}^{\overline{p}}\frac{1}{i!}\sum_{q=0}^{i}(-1)^{i-q}\binom{i}{q}q^{h},
% \]
% where $S(h,i)$ denotes the Stirling number of the second kind.

In the following, we will search for the optimal schedule that \emph{respects
$\mathbf{v}$}, meaning that for each $i\in\{1,\hdots,\overline{p}\}$
at most $v_{i}$ jobs of processing time $q_{i}$ are rejected. To
find an optimal schedule respecting~$\mathbf{v}$, we use a dynamic
program. %Its table is created as follows.
Suppose the jobs in $J$ are labeled by $1,\hdots,n$ by non-increasing
\emph{Smith ratios}~$w_{j}/p_{j}$. Each dynamic programming cell
is characterized by a value $n'\in\{0,\hdots,n\}$, and a vector $\mathbf{v'}$
with $\overline{p}$ entries which is \emph{dominated} by $\mathbf{v}$,
meaning that $v'_{i}\le v_{i}$ for each $i\in\{1,\hdots,\overline{p}\}$.
For each pair $(n',\mathbf{v'})$ we have a cell $C(n',\mathbf{v'})$
modeling the following subproblem. Assume that for jobs in $J':=\{1,\hdots,n'\}$
we have already decided whether we want to schedule them or not. For
each processing time~$q_{i}$ denote by $n'_{i}$ the number of jobs
in $J'$ with processing time~$q_{i}$. Assume that for each type $i$,
we have decided to reject $v_{i}-v'_{i}$ jobs from~$J'$. Note that
then the total processing time of the scheduled jobs sums up to $t:=\sum_{i}q_{i}\cdot(n_{i}'-(v_{i}-v'_{i}))$.
It remains to define a solution for the jobs in $J'':=\{n'+1,\hdots,n\}$
during time interval $[t,\infty)$, such that for each type $i$
we can reject up to $v'_{i}$ jobs. %
\begin{comment}
For each processing time $q_{i}$ denote by $J'_{i}\subseteq J'$
(and $J''_{i}\subseteq J''$, respectively), the jobs $j\in J'$ (and
$j\in J''$) with $p_{j}=q_{i}$. We define $n'_{i}:=|J'_{i}|$ and
$n''_{i}:=|J''_{i}|$ for each $i$. 
\end{comment}

\begin{comment}
Intuitively, we assume that among the jobs in $J'$ already scheduled,
exactly $n_{i}'-(v_{i}-v'_{i})$ jobs are of type $i$, i.e., we have
already rejected $v_{i}-v'_{i}$ and we can still reject $v'_{i}$
of them. Their total processing time is then $t:=\sum_{i}q_{i}\cdot(n_{i}'-(v_{i}-v'_{i}))$.
The cell $C(n',\mathbf{v'})$ then stores the optimal solution of
the problem to select $n''_{i}-v'_{i}$ jobs $\bar{J}''_{i}$ from
each set $J''_{i}$ (for each type $i$), to schedule $\cup_{i}\bar{J}''_{i}$
during the time interval $[t,\infty)$, and reject all jobs from $J''\setminus\cup_{i}\bar{J}''_{i}$.
\end{comment}
Clearly, the cell $C(0,\mathbf{v})$ contains the optimal value to
the entire instance. Also, by definition, $C(n,\mathbf{v'})=0$ for
any $\mathbf{v'}$, since then $J''=\emptyset$. For the dynamic programming
transition, we prove:

\begin{lemma}
\label{thm:dptransition2} Let $C(n',\mathbf{v'})$ be a cell and
let $\mathsf{opt}(n',\mathbf{v'})$ be the optimal solution value
to its subproblem. Let $i\in\{1,\hdots,\overline{p}\}$ be such that
$p_{n'+1}=q_{i}$, and let $t:=\sum_{i}q_{i}\cdot(n_{i}'-(v_{i}-v'_{i}))$.
If $v'_{i}=0$~then\\
$$
\mathsf{opt}(n',(v'_{1},\hdots,v'_{i},\hdots,v'_{\overline{p}}))=\mathsf{opt}(n'+1,(v'_{1},\hdots,v'_{i},\hdots,v'_{\overline{p}}))+(t+p_{n'+1})\cdot w_{n'},
$$
otherwise
\begin{multline*}
\hspace*{-1em}~\mathsf{opt}(n',(v'_{1},\hdots,v'_{i},\hdots,v'_{\overline{p}}))=\\\min\Big\{\mathsf{opt}(n'+1,(v'_{1},\hdots,v'_{i},\hdots,v'_{\overline{p}}))+(t+p_{n'+1})\cdot w_{n'},
\mathsf{opt}(n'+1,(v'_{1},\hdots,v'_{i}-1,\hdots,v'_{\overline{p}}))+e_{n'+1}\Big\}.
\end{multline*}
\end{lemma}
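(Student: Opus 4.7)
The plan is to verify the recurrence by case analysis on the decision made for job $n'+1$. The proof rests on the classical Smith's rule for $1||\sum w_j C_j$: for any fixed set $S$ of jobs to schedule, the weighted sum of completion times is minimized by processing them in order of non-increasing $w_j/p_j$. Since the jobs are already labeled by this rule, any scheduled subset of $\{1,\ldots,n\}$ is processed in its natural index order.

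First, I would fix an optimal schedule for the subproblem $C(n',\mathbf{v}')$. By definition of this subproblem, the scheduled jobs from $\{1,\ldots,n'\}$ occupy the interval $[0,t]$ with $t = \sum_i q_i\cdot (n_i' - (v_i - v'_i))$, and the decisions about jobs in $\{n'+1,\ldots,n\}$ produce a schedule in $[t,\infty)$ respecting the remaining rejection budget $\mathbf{v}'$. By Smith's rule the kept jobs from $\{n'+1,\ldots,n\}$ are processed in the order of their indices, so in particular, if job $n'+1$ is kept, it runs first in $[t,t+p_{n'+1}]$.

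Next, I would split on the fate of job $n'+1$, which is of type $i$ (so $p_{n'+1}=q_i$). If it is scheduled, its completion time is $t+p_{n'+1}$, contributing $(t+p_{n'+1})\cdot w_{n'+1}$ to the objective; the residual problem on $\{n'+2,\ldots,n\}$ is precisely the one described by $C(n'+1,\mathbf{v}')$, since incrementing $n_i'$ by one while leaving $v_i'$ unchanged increases the effective starting time by exactly $q_i = p_{n'+1}$, matching the fact that the new machine occupancy is $t+p_{n'+1}$. If job $n'+1$ is rejected, this is feasible only when $v_i' \ge 1$; it contributes the additive cost $e_{n'+1}$, and the residual problem matches $C(n'+1,(v_1',\ldots,v_i'-1,\ldots,v_{\overline{p}}'))$, since the identity $(n_i'+1) - (v_i - (v_i'-1)) = n_i' - (v_i-v_i')$ shows that the effective starting time remains $t$. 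Taking the minimum over the two available options yields the recurrence; when $v_i'=0$ only the scheduling option is available, giving the first formula.

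The main obstacle is justifying that no clever reordering of kept jobs in $[t,\infty)$ can beat the Smith-order schedule assumed above; this is handled by a standard pairwise exchange argument applied only to jobs scheduled after time $t$, and since their contribution to the objective decouples additively from the fixed prefix in $[0,t]$, the exchange argument goes through without interference. A secondary bookkeeping point is confirming that the arithmetic for $t$ is consistent between cells; both algebraic identities above make this routine.
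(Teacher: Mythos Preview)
Your proposal is correct and follows essentially the same approach as the paper: both argue by case analysis on whether job $n'+1$ is scheduled or rejected, invoke Smith's rule to place it first among the remaining jobs when scheduled, and verify the two directions of the recurrence. Your treatment is somewhat more explicit about the bookkeeping identities for $t$ across cells, and you correctly write $w_{n'+1}$ where the paper's statement has the apparent typo $w_{n'}$.
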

\begin{proof}First,
suppose that $\mathsf{opt}(n',\mathbf{v'})$ schedules job $n'$.
Due to our sorting, $n'$ has the largest Smith ratio of all
jobs in $\{n',\hdots,n\}$, and an optimal schedule with value $\mathsf{opt}(n',\mathbf{v'})$
schedules~$n'$ at time $t$ at cost $(t+p_{n'})\cdot w_{n'}$, and
the optimal solution value to the remaining subproblem is given by
$\mathsf{opt}(n'+1,(v'_{1},\hdots,v'_{i},\hdots,v'_{\ell}))$. If a solution of value
$\mathsf{opt}(n',\mathbf{v'})$ rejects $n'$ (which can only happen
if $v'_{i}>0$), then\linebreak $\mathsf{opt}(n',\mathbf{v'})=\mathsf{opt}(n'+1,(v'_{1},\hdots,v'_{i}-1,\hdots,v'_{\ell}))+e_{n'}$.

On the other hand, there is a solution for cell $C(n',\mathbf{v'})$
given by scheduling~$n'$ at time $t$ and all other non-rejected
jobs in the interval $[t+p_{n'},\infty)$ at cost $\mathsf{opt}(n'+1,(v'_{1},\hdots,v'_{i},\hdots,v'_{\ell}))+(t+p_{n'})\cdot w_{n'}$.
If $v'_{i}>0$ there is also a feasible solution with cost $\mathsf{opt}(n'+1,(v'_{1},\hdots,v'_{i}-1,\hdots,v'_{\ell}))+e_{n'}$
which rejects job $n'$ and schedules all other non-rejected jobs
during the interval $[t,\infty)$.\qed \end{proof}

The size of the dynamic programming table is bounded
by $n\cdot(k+1)^{\overline{p}}$. Hence, the whole dynamic program
runs in fixed-parameter time, proving that: %This yields our main theorem of this section.
\begin{theorem} \label{thm:rejectjobswithboundednumberofprocessingtimes}
For sets $J$ of $n$ jobs with $\overline{p}$ distinct processing
time, the problem $1||\sum_{\leq k}e_{j}+\sum w_{j}C_{j}$ is solvable
in time $O(n\cdot(k+1)^{\overline{p}}+n\cdot\log n)$. \end{theorem}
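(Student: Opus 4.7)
The plan is to read off the running time from the two pieces that have already been set up: Smith-ratio sorting and the dynamic program whose transition is given by Lemma~\ref{thm:dptransition2}. Sorting the $n$ jobs by non-increasing Smith ratio $w_j/p_j$ takes $O(n\log n)$ time and is done once at the start; this reordering is justified by Smith's rule, which guarantees that for any fixed rejection set, the non-rejected jobs are optimally processed in Smith order. The remaining work is the DP.

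To obtain the sharp bound $O(n\cdot(k+1)^{\overline{p}})$ claimed in the theorem, I would merge the outer enumeration of $\mathbf{v}$ with the inner DP, rather than running one DP per guess of $\mathbf{v}$ (which would cost an extra $(k+1)^{\overline{p}}$ factor). Concretely, define states $D(n',\mathbf{u})$ where $n'\in\{0,\dots,n\}$ and $\mathbf{u}\in\mathbb{Z}_{\ge 0}^{\overline{p}}$ with $\sum_i u_i\le k$ records the number of jobs of each processing-time class already rejected among $\{1,\dots,n'\}$. This is exactly the information that the excerpt encodes implicitly by the pair $(\mathbf{v},\mathbf{v}')$ via $\mathbf{u}=\mathbf{v}-\mathbf{v}'$: the start time of the next scheduled job is $t=\sum_{j\le n'}p_j-\sum_i u_i q_i$, a direct function of $n'$ and $\mathbf{u}$, so the two branches (schedule or reject job $n'+1$) in Lemma~\ref{thm:dptransition2} apply verbatim. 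The base case $D(n,\mathbf{u})=0$ is immediate, and the answer is $\min_{\mathbf{u}:\sum_i u_i\le k} D(n,\mathbf{u})$.

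To count states, note that $|\{\mathbf{u}\in\mathbb{Z}_{\ge 0}^{\overline{p}}:\sum_i u_i\le k\}|\le (k+1)^{\overline{p}}$, giving at most $n\cdot(k+1)^{\overline{p}}$ cells. Maintaining $t$ incrementally as $n'$ advances makes each DP transition $O(1)$; with standard indexing and table look-ups each cell is filled in constant time. Summing with the sort gives the claimed total $O(n\cdot(k+1)^{\overline{p}}+n\log n)$.

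The only nontrivial point in this plan is verifying that the reformulation in terms of $(n',\mathbf{u})$ really preserves the correctness of the transition: one must check that the contribution of the already-processed jobs is independent of future decisions, so that the remaining subproblem depends on $(n',\mathbf{u})$ only through $t$ and the residual rejection budget $k-\sum_i u_i$. This is precisely what Lemma~\ref{thm:dptransition2} provides once one identifies $\mathbf{u}$ with $\mathbf{v}-\mathbf{v}'$; beyond that, the argument reduces to a straightforward state-counting and per-cell cost accounting.
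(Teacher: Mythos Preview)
Your approach is essentially the paper's: sort by Smith ratio, then run a dynamic program over the jobs in that order while tracking, for each processing-time class, how many jobs have been rejected. Your reparametrisation by $\mathbf{u}=\mathbf{v}-\mathbf{v}'$ is a clean way to collapse the paper's outer guess of $\mathbf{v}$ into the DP itself; in fact, read literally, the paper's two-layer description (enumerate all $\mathbf{v}$, then for each run a table of size $n\cdot(k+1)^{\overline{p}}$) would cost an extra $(k+1)^{\overline{p}}$ factor, so your merge is exactly what is needed to match the stated bound.

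One slip to fix: your base case and your answer are inconsistent. If $D(n',\mathbf{u})$ denotes the optimal cost of the \emph{remaining} jobs $\{n'+1,\dots,n\}$ given $\mathbf{u}$ rejections so far (which is what makes $D(n,\mathbf{u})=0$ the right boundary and what matches Lemma~\ref{thm:dptransition2}), then the final answer is $D(0,\mathbf{0})$, not $\min_{\mathbf{u}}D(n,\mathbf{u})$ (the latter is identically $0$). If instead you intend a forward DP accumulating cost over $\{1,\dots,n'\}$, then the base case is $D(0,\mathbf{0})=0$ and the answer is $\min_{\mathbf{u}:\sum_i u_i\le k} D(n,\mathbf{u})$. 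Either convention works; just pick one.
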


For the problem $1||\sum w_{j}C_{j}$, it is known~\cite[Theorem 3.1]{ChudakHochbaum1999}
that one can interchange weights and processing times, bijectively
mapping feasible solutions preserving their costs. Hence, we obtain
for parameters $k$ and~$\overline{w}$ distinct job weights: \begin{corollary}
\label{thm:rejectjobswithboundednumberofweights} For sets $J$ of
$n$ jobs with $\overline{w}$ distinct job weights, the problem $1||\sum_{\leq k}e_{j}+\sum w_{j}C_{j}$
is solvable in time $O(n\cdot(k+1)^{\overline{w}}+n\cdot\log n)$.
\end{corollary}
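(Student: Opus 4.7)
The plan is to reduce the $\overline{w}$-distinct-weights setting to the $\overline{p}$-distinct-processing-times setting by invoking the classical symmetry result of Chudak and Hochbaum~\cite{ChudakHochbaum1999}, and then apply Theorem~\ref{thm:rejectjobswithboundednumberofprocessingtimes} as a black box.

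First I would define the transformed instance. Given an instance $\mathcal{I}$ of $1||\sum_{\leq k}e_{j}+\sum w_{j}C_{j}$ with jobs $(p_j,w_j,e_j)_{j\in J}$, construct $\mathcal{I}'$ on the same job set by setting $p'_j := w_j$, $w'_j := p_j$, $e'_j := e_j$, and keeping the same rejection bound $k$. By construction $\mathcal{I}'$ has exactly $\overline{w}$ distinct processing times, since its processing times are precisely the weights of $\mathcal{I}$.

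Next I would argue that $\mathcal{I}$ and $\mathcal{I}'$ admit the same optimal value. The key observation is that rejection and scheduling decouple: once a set $J'\subseteq J$ with $|J'|\le k$ of rejected jobs is fixed, the rejection contribution $\sum_{j\in J'}e_j$ is identical in both instances, and the remaining task is a pure $1||\sum w_j C_j$ problem on $J\setminus J'$. By Theorem~3.1 of~\cite{ChudakHochbaum1999}, swapping $p_j\leftrightarrow w_j$ in a $1||\sum w_j C_j$ instance preserves the optimal objective value (the bijection on schedules simply reverses the job order processed by Smith's rule). Hence the minimum value of $\sum_{j\notin J'} w_j C_j$ on $\mathcal{I}$ equals the minimum value of $\sum_{j\notin J'} w'_j C'_j$ on $\mathcal{I}'$. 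Taking the minimum over all feasible $J'$ with $|J'|\le k$ on both sides shows that $\mathsf{opt}(\mathcal{I})=\mathsf{opt}(\mathcal{I}')$, and an optimal solution of $\mathcal{I}'$ can be converted into an optimal solution of $\mathcal{I}$ in $O(n\log n)$ time by simply reversing the produced schedule.

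Finally, I would run the algorithm of Theorem~\ref{thm:rejectjobswithboundednumberofprocessingtimes} on $\mathcal{I}'$, which has $\overline{w}$ distinct processing times, in time $O(n\cdot(k+1)^{\overline{w}}+n\cdot\log n)$, and translate the output back to $\mathcal{I}$. The construction of $\mathcal{I}'$ and the translation add only linear overhead, so the overall running time matches the claimed bound. There is no real obstacle here: the only point that requires a moment of care is checking that the Chudak--Hochbaum symmetry commutes with rejection, which is immediate because rejection is a set-theoretic decision that touches neither the ordering nor the completion times of the scheduled jobs.
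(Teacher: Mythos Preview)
Your proposal is correct and follows exactly the approach the paper takes: invoke the Chudak--Hochbaum weight/processing-time symmetry for $1||\sum w_jC_j$, observe that it commutes with the rejection decision, and then apply Theorem~\ref{thm:rejectjobswithboundednumberofprocessingtimes} to the swapped instance. The paper states this in one sentence, whereas you spell out the decoupling of rejection and scheduling explicitly, but the argument is the same.
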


We show that when only the number $k$ of rejected jobs is taken as parameter, problem becomes
$\mathsf{W}[1]$-hard. This justifies to define additionally the number of weights
or processing times as parameter.
We remark that when jobs have non-trivial release dates,
then even for $k=0$ the problem is $\mathsf{NP}$-hard~\cite{lenstra1977complexity}.

% We complement our fixed-parameter algorithms by a $\mathsf{W}[1]$-hardness
% result for $1||\sum_{\leq r}e_{j}+\sum w_{j}C_{j}$, when only the
% number $r$ of rejected jobs is taken as parameter. 

\begin{theorem}\label{thm:rejparamnumberofrejectedjobshardness}
Problem $1||\sum_{\leq k}e_{j}+\sum w_{j}C_{j}$ is $\mathsf{W}[1]$-hard
if \aw{the} parameter \aw{is} the number~$k$ of rejected jobs. \end{theorem}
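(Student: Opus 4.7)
The plan is to give a parameter-preserving reduction from the $\mathsf{W}[1]$-hard problem \textsc{Subset Sum} parameterized by solution cardinality $k$: given positive integers $a_{1},\dots,a_{n}$ and a target $T\leq\sum_{i}a_{i}$, decide whether some $S\subseteq[n]$ with $|S|\leq k$ satisfies $\sum_{i\in S}a_{i}=T$. From such an instance I construct a scheduling instance with $n$ jobs and the same parameter $k$, where job $j_{i}$ is specified by
\[
p_{j_{i}}=w_{j_{i}}=2a_{i},\qquad e_{j_{i}}=2a_{i}^{2}+4(P-T)\,a_{i},
\]
with $P:=\sum_{i}a_{i}$ and $Q:=\sum_{i}a_{i}^{2}$; all values are nonnegative integers since we may assume $T\leq P$.

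The crucial point is that because $w_{j}=p_{j}$ for every job, all Smith ratios equal $1$ and any order of the scheduled jobs is optimal, so for every scheduled set $S$ a direct expansion yields $\sum_{j\in S}w_{j}C_{j}=\tfrac12\bigl((\sum_{j\in S}p_{j})^{2}+\sum_{j\in S}p_{j}^{2}\bigr)$. Writing $\sigma:=\sum_{i\colon j_{i}\in J'}a_{i}$ for the ``subset sum'' realised by the rejection set $J'$, plugging this identity in for the scheduled set $J\setminus J'$ and expanding shows that the $2a_{i}^{2}$ summands of the rejection cost cancel the quadratic term coming from $-\sum_{i\in J'}p_{j_{i}}^{2}$, while the $4(P-T)a_{i}$ summands absorb the linear-in-$\sigma$ cross-term, so that the objective collapses to
\[
\sum_{j\notin J'}w_{j}C_{j}+\sum_{j\in J'}e_{j} \;=\; 2(\sigma-T)^{2}+2(P^{2}+Q-T^{2}).
\]
Setting the threshold $B:=2(P^{2}+Q-T^{2})$, a rejection set $J'$ with $|J'|\leq k$ achieves cost at most $B$ if and only if $\sigma=T$, which is precisely the condition that the source \textsc{Subset Sum} instance is a yes-instance. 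Since the construction is polynomial in the size of the input and retains the parameter $k$ exactly, $\mathsf{W}[1]$-hardness follows.

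The main obstacle is that $\sum w_{j}C_{j}$ depends, in general, intricately on which jobs are scheduled and in what Smith order, whereas for the reduction I need the whole objective to depend on $J'$ only through the single numerical invariant $\sigma$. The fix is the choice $w_{j}=p_{j}$, which neutralises the ordering issue and exposes the clean quadratic in the total scheduled processing time; the two coefficients of the rejection cost $e_{j}=\alpha a_{i}^{2}+\beta a_{i}$ then supply exactly the two degrees of freedom needed to cancel all unwanted quadratic and linear contributions and isolate the squared-distance term $(\sigma-T)^{2}$, whose integer-valued minimiser is achieved if and only if the subset-sum equality holds.
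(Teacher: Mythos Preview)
Your proof is correct and follows essentially the same approach as the paper: both reduce from $k$-\textsc{Subset Sum}, set $p_j=w_j$ proportional to the input integers so that the scheduling cost becomes a quadratic in the total scheduled processing time, and choose $e_j$ as a linear-plus-quadratic expression in $a_i$ so that the overall objective collapses to a perfect square in the realised subset sum. The only cosmetic difference is your factor-of-$2$ scaling, which has the (mild) advantage of keeping all $e_j$ integral, whereas the paper's $e_j=(S-q)s_j+\tfrac12 s_j^2$ need not be.
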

\begin{proof}
We reduce from \textsc{$k$-Subset Sum}, for which the input consists of integers
$s_{1},\hdots,s_{n}\in\mathbb{N}$ and two values $k,q\in\mathbb{N}$.
The goal is to select a subset of $k$ of the given integers $s_{i}$
that sum up to~$q$. Parameterized by $k$, this problem is known
to be $\mathsf{W}[1]$-hard~\cite{FellowsKoblitz1993}. Our reduction
mimics closely a reduction from (ordinary)\textsc{ Subset Sum} in~\cite{EngelsEtAl2003}
that shows that $1||\sum e_{j}+\sum w_{j}C_{j}$ is weakly $\mathsf{NP}$-hard.

Suppose we are given an instance of $k$\textsc{-Subset Sum}. Let $S=\sum_{i=1}^{n}s_{i}$.
We construct an instance of $1||\sum_{\leq k}e_{j}+\sum w_{j}C_{j}$
with $n$ jobs, where each job $j$ has processing time $p_{j}:=s_{j}$,
weight $w_{j}:=s_{j}$, rejection cost $e_{j}:=(S-q)\cdot s_{j}+\frac{1}{2}s_{j}^{2}$. % where $b = \frac{1}{2}\sum_{i=1}^ns_j$.
Since $p_{j}=w_{j}$ for all jobs $j$, the ordering of the scheduled
jobs $J\setminus J'$ does not affect the value of $\sum_{j\in J\setminus J'}w_{j}C_{j}$.
Using this fact and substituting for the rejection penalty, we can
rewrite the objective function as follows: 
\begin{eqnarray*}
\sum_{j\in J\setminus J'}w_{j}C_{j}+\sum_{j\in J'}e_{j} & = & \sum_{j\in J\setminus J'}s_{j}\sum_{i\leq j,i\in J\setminus J'}s_{i}+\sum_{j\in J'}e_{j}\\
 & = & \sum_{j\in J\setminus J'}s_{j}^{2}+\sum_{j<i,i,j\in J\setminus J'}s_{j}s_{i}+\sum_{j\in J'}\left((S-q)\cdot s_{j}+\frac{1}{2}s_{j}^{2}\right)\\
 & = & \frac{1}{2}\left[\left(\sum_{j\in J\setminus J'}s_{j}\right)^{2}+\sum_{j\in J\setminus J'}s_{j}^{2}\right]+(S-q)\sum_{j\in J'}s_{j}+\frac{1}{2}\sum_{j\in J'}s_{j}^{2}\\
 & = & \frac{1}{2}\left(\sum_{j\in J\setminus J'}s_{j}\right)^{2}+(S-q)\sum_{j\in J'}s_{j}+\frac{1}{2}\sum_{j=1}^{n}s_{j}^{2}\enspace.
\end{eqnarray*}
Since $\sum_{j=1}^{n}s_{j}^{2}$ does not depend on the choice of
$J'$, this is equivalent to minimizing the following function~$h(x)$,
with $x=\sum_{j\in J\setminus J'}s_{j}$: 
\begin{eqnarray*}
h(x):=\frac{1}{2}x^{2}+(S-q)(S-x)=\frac{1}{2}\left(\sum_{j\in J\setminus J'}s_{j}\right)^{2}+(S-q)\left(S-\sum_{j\in J\setminus J'}s_{j}\right).
\end{eqnarray*}
The unique minimum of $h(x)$ is $\frac{1}{2}S^{2}-\frac{1}{2}q^{2}$
at $x=S-q$, i.e., when $\sum_{j\in J\setminus J'}s_{j}=S-q$. Hence,
if such a set~$J'$ with $|J'|\le k$ exists, the resulting value
is optimal for the scheduling problem. Therefore, if the solution
to the created instance of $1||\sum_{\leq k}e_{j}+\sum w_{j}C_{j}$
has value less than or equal to $\frac{1}{2}S^{2}+\frac{1}{2}q^{2}+\frac{1}{2}\sum_{j=1}^{n}s_{j}^{2}$,
then the instance of \textsc{$k$-Subset Sum} is a ``yes''-instance.
Conversely, if the instance of \textsc{Subset Sum} is ``yes'', then
there is a schedule of value $\frac{1}{2}S^{2}+\frac{1}{2}q^{2}+\frac{1}{2}\sum_{j=1}^{n}s_{j}^{2}$.
\qed
\end{proof}
% \noindent

\subsection{Parameter Number of Distinct Processing Times and Weights}

\label{sec:parameternumberoftypes} We consider the number of distinct
processing times and weights as parameters. To this end, we say that
two jobs $j,j'$ are of the same \emph{type} if $p_{j}=p_{j'}$ and
$w_{j}=w_{j'}$; let $\tau$ be the number of types in an instance.
Note, however, that jobs with the same type might have different rejection
costs, so we cannot bound the ``number of numbers'' in the input by a function of the parameters only---this prohibits that we use a strategy similar to Fellows et al.~\cite{FellowsEtAl2012} who solve simpler problems parameterized by the number of distinct input numbers.
Instead, we resort to convex integer programming, which is used here for the first time in fixed-parameter algorithms.
The running time of our algorithm will depend only \emph{polynomially} on $k$, the upper
bound on the number of jobs we are allowed to reject. For each type
$i$, let $w^{(i)}$ be the weight and $p^{(i)}$ be the processing
time of jobs of type $i$. Assume that job types are numbered $1,\hdots,\tau$
such $w^{(i)}/p^{(i)}\ge w^{(i+1)}/p^{(i+1)}$ for each $i\in\{1,\hdots,\tau-1\}$.
Clearly, an optimal solution schedules jobs ordered non-increasingly
by Smith's ratio without preemption.

The basis for our algorithm is a problem formulation as a convex integer
minimization problem with dimension at most $2\tau$. In an instance,
for each $i$, we let $n_{i}$ be the number of jobs of type $i$
and introduce an integer variable $x_{i}\in\mathbb{N}_{0}$ modeling
how many jobs of type $i$ we decide to schedule. We introduce the
linear constraint $\sum_{i=1}^{\tau}(n_{i}-x_{i})\leq k,$ to ensure
that at most $k$ jobs are rejected.

The objective function is more involved. For each type $i$, scheduling the jobs of type~$i$ costs 
\begin{eqnarray*}
\sum_{\ell=1}^{x_{i}}w^{(i)}\cdot(\ell\cdot p^{(i)}+\sum_{i'<i}x_{i'}\cdot p^{(i')}) & = & w^{(i)}\cdot x_{i}\cdot\sum_{i'<i}x_{i'}\cdot p^{(i')}+w^{(i)}\cdot p^{(i)}\sum_{\ell=1}^{x_{i}}\ell\\
 & = & w^{(i)}\cdot x_{i}\cdot\sum_{i'<i}x_{i'}\cdot p^{(i')}+w^{(i)}\cdot p^{(i)}\cdot\frac{x_{i}\cdot(x_{i}+1)}{2}=:s_{i}(x).
\end{eqnarray*}
Note that $s_{i}(x)$ is a convex polynomial of degree 2 (being the
sum of quadratic polynomials with only positive coefficients). Observe
that when scheduling $x_{i}$ jobs of type~$i$, it is optimal to
reject the $n_{i}-x_{i}$ jobs with lowest rejection costs among all
jobs of type $i$. Assume the jobs of each type $i$ are labeled
$j_{1}^{(i)},\hdots,j_{n_{i}}^{(i)}$ by non-decreasing rejection
costs. For each $s\in\mathbb N$ let $f_{i}(s):=\sum_{\ell=1}^{n_{i}-s}e_{j_{\ell}^{(i)}}$.
In particular, to schedule~$x_{i}$ jobs of type~$i$ we
can select them such that we need to pay $f_{i}(s)$ for rejecting
the non-scheduled jobs (and this is an optimal decision). The difficulty
is that the function $f_{i}(s)$ is in general not expressible
by a polynomial whose degree is globally bounded (i.e., for each possible
instance), which prevents a direct application of Theorem~\ref{thm:convexintegerminimizationisfpt}.

However, in Lemma~\ref{thm:polynomialrepresentation} we show that
$f_{i}(s)$ is the maximum of $n_{i}$ linear polynomials, allowing
us to formulate a convex program and solve it by Theorem~\ref{thm:convexintegerminimizationisfpt}. 

\begin{lemma} \label{thm:polynomialrepresentation} For
each type $i$ there is a set of $n_{i}$ polynomials $p_{i}^{(1)},\hdots,p_{i}^{(n_{i})}$
of degree one such that $f_{i}(s)=\max_{\ell}p_{i}^{(\ell)}(s)$ for
each $s\in\{0,\hdots,n_{i}\}$. \end{lemma}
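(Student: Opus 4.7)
The plan is to exploit the fact that $f_i$, viewed as a function on the integer domain $\{0,1,\ldots,n_i\}$, is convex, and then invoke the standard characterization of piecewise linear convex functions as pointwise maxima of their linear pieces.

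First I would compute the first difference: for each $s\in\{0,\ldots,n_i-1\}$, we have $f_i(s)-f_i(s+1)=e_{j^{(i)}_{n_i-s}}$, so that the slope of the segment from $s$ to $s+1$ is $-e_{j^{(i)}_{n_i-s}}$. Because the jobs $j^{(i)}_1,\ldots,j^{(i)}_{n_i}$ are labelled by non-decreasing rejection cost, the sequence of slopes $-e_{j^{(i)}_{n_i}},-e_{j^{(i)}_{n_i-1}},\ldots,-e_{j^{(i)}_{1}}$ is non-decreasing. Hence the piecewise linear interpolation $\tilde f_i\colon[0,n_i]\to\mathbb R$ of $f_i$ is convex.

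Next I would define, for each $\ell\in\{1,\ldots,n_i\}$, the polynomial $p_i^{(\ell)}(s)$ to be the affine function whose graph extends the line segment of $\tilde f_i$ between $s=\ell-1$ and $s=\ell$ to all of $\mathbb R$. Concretely,
\[
p_i^{(\ell)}(s)\;=\;f_i(\ell-1)\;-\;e_{j^{(i)}_{n_i-\ell+1}}\cdot\bigl(s-(\ell-1)\bigr),
\]
which is a degree-one polynomial in $s$. This gives exactly $n_i$ such polynomials.

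It then remains to show $f_i(s)=\max_{\ell} p_i^{(\ell)}(s)$ for every integer $s\in\{0,\ldots,n_i\}$. One direction is immediate: at $s\in\{0,\ldots,n_i\}$ the value $f_i(s)$ coincides with one endpoint of at least one of the segments, so some $p_i^{(\ell)}$ attains $f_i(s)$, giving $\max_\ell p_i^{(\ell)}(s)\ge f_i(s)$. For the reverse inequality, convexity of $\tilde f_i$ ensures that each linear piece, extended to the whole real line, lies weakly below $\tilde f_i$ everywhere—this is just the fact that a convex function lies above each of its chords when the chord is extrapolated, which can be verified directly from the monotonicity of the slopes via a short induction on the index distance $|s-\ell|$. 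Hence $p_i^{(\ell)}(s)\le f_i(s)$ for all integer $s$ and all $\ell$, so $\max_\ell p_i^{(\ell)}(s)\le f_i(s)$, and equality holds.

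There is no real obstacle here; the only point that requires a line of argument rather than a one-word justification is the fact that the extrapolated linear pieces stay below the function, which is a direct consequence of the slopes being non-decreasing. The rest is bookkeeping.
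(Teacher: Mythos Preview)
Your proof is correct and follows essentially the same approach as the paper: both define the $n_i$ affine functions as the secant lines of the piecewise linear interpolation of $f_i$ (the paper indexes them by $\ell\in\{0,\ldots,n_i-1\}$, you by $\ell\in\{1,\ldots,n_i\}$), observe convexity from the monotonicity of the rejection costs, and conclude that each secant lies weakly below $f_i$ while attaining it at its own endpoints. Your write-up is slightly more explicit (giving the formula for $p_i^{(\ell)}$), but the argument is the same.
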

\begin{proof}
For each $\ell\in\{0,\hdots,n_{i}-1\}$ we define $p_{i}^{(\ell)}(s)$
to be the unique polynomial of degree one such that $p_{i}^{(\ell)}(\ell)=f_{i}(\ell)$
and $p_{i}^{(\ell)}(\ell+1)=f_{i}(\ell+1)$. We observe that~$f_{i}(s)$
is convex, since $f_{i}(\ell)-f_{i}(\ell+1)=e_{j_{n_{i}-\ell}^{(i)}}\ge e_{j_{n_{i}-\ell-1}^{(i)}}=f_{i}(\ell+1)-f_{i}(\ell+2)$
for each $\ell\in\{0,\hdots,n_{i}-2\}$. Therefore, the definition of
the polynomials implies that $f_{i}(s)\ge p_{i}^{(\ell)}(s)$ for
each $\ell\in\{0,\hdots,n_{i}-1\}$ and each $s\in\{0,\hdots,n_{i}\}$.
Since for each $s\in\{0,\hdots,n_{i}-1\}$ we have that $p_{i}^{(s)}(s)=f_{i}(s)$
and $p_{i}^{(n_{i}-1)}(n_{i})=f_{i}(n_{i})$, we conclude that $f_{i}(s)=\max_{\ell}p_{i}^{(\ell)}(s)$
for each $s\in\{0,\hdots,n_{i}\}$.\qed \end{proof}

Lemma~\ref{thm:polynomialrepresentation}
allows modeling the entire problem with the following convex program, where
for each type~$i$, variable $g_{i}$ models the rejection costs for
jobs of type~$i$. 
%\begin{align}
%\min~\sum_{i=1}^{\tau}g_{i}+s_{i}(x) & ~~\textnormal{subject to} & \sum_{i=1}^{\tau}(n_{i}-x_{i})\leq r,\tag{CP}\label{eqn:minsumweightedcompletiontimes_finalconvexprogram-1}\\
% &  & g_{i}\geq p_{i}^{(\ell)}(x_{i}), & \quad i=1,\hdots,t,\quad\ell=1,\hdots,n_{i},
% \qquad \mathbf{g},\mathbf{x}\in\mathbb{Z}_{\geq0}^{t}\notag \enspace .
%\end{align}
\begin{multline*}
  \min~\sum_{i=1}^{\tau}g_{i}+s_{i}(x)~\textnormal{s.t.}~\sum_{i=1}^{\tau}(n_{i}-x_{i})\leq k,
  g_{i}\geq p_{i}^{(\ell)}(x_{i})~\forall~i\in\{1,\hdots,\tau\}~\forall~\ell\in\{1,\hdots,n_{i}\},
  \mathbf{g},\mathbf{x}\in\mathbb{Z}_{\geq0}^{t} \enspace .
\end{multline*}
Observe that this convex program
admits an optimal solution with $g_{i}=\max_{\ell}p_{i}^{(\ell)}(x_{i})=f_{i}(x_{i})$
for each $i$. Thus, solving the convex program
yields an optimal solution to the overall instance. 

\begin{theorem} %\begin{theorem}
\label{thm:rejectionwithtypes} For sets of $n$ jobs of $\tau$
types the problem $1||\sum_{\leq k}e_{j}+\sum w_{j}C_{j}$ can be
solved in time $(n+\log (\max_j \max \{e_j, p_j, w_j \}))^{O(1)}\cdot2^{O(\tau^3)}$. \end{theorem}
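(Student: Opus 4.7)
The plan is to verify that the convex integer program set up above faithfully models the problem, to check that its size and structure fit the hypotheses of Theorem~\ref{thm:convexintegerminimizationisfpt}, and then to read off the running time.

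For correctness, I would first argue that any feasible scheduling solution gives a feasible point of the convex program of the same cost: if the solution schedules $x_i$ jobs of each type $i$, then $\sum_i(n_i - x_i) \le k$ and, by choosing the $n_i - x_i$ cheapest rejections in each type, the scheduling cost equals $\sum_i s_i(x) + f_i(x_i)$, and setting $g_i := f_i(x_i)$ is feasible by Lemma~\ref{thm:polynomialrepresentation}. Conversely, given any feasible $(x,g)$ of the convex program, decreasing $g_i$ to $\max_\ell p_i^{(\ell)}(x_i) = f_i(x_i)$ only decreases the objective, and then the ``schedule $x_i$ jobs of type $i$ with the $n_i - x_i$ cheapest ones rejected'' construction yields a schedule of identical cost that respects the rejection budget. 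Hence, optima coincide.

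For structure, I would check that the objective and constraints are all quasi-convex polynomials with small degree. The functions $s_i(x) = w^{(i)} x_i \sum_{i'<i} x_{i'} p^{(i')} + w^{(i)} p^{(i)} x_i(x_i+1)/2$ are convex quadratics with nonnegative coefficients in $\mathbf{x}\ge 0$; summed with the linear $g_i$, the whole objective is a convex polynomial of degree at most~$2$. The rejection-budget inequality is linear, and each $g_i \ge p_i^{(\ell)}(x_i)$ is linear since $p_i^{(\ell)}$ has degree one. So all polynomials involved are quasi-convex of degree at most $d=2$. The dimension is $t = 2\tau$, the number of polynomial inequalities is $1 + \sum_i n_i = O(n)$, and every coefficient is bounded by a polynomial in the input values $e_j, p_j, w_j$, so their binary encoding length is $\ell = O(\log n + \log \max_j \max\{e_j,p_j,w_j\})$.

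The main (and only) obstacle is purely technical: matching our program to the hypotheses of Theorem~\ref{thm:convexintegerminimizationisfpt}, since a direct polynomial representation of the rejection-cost functions $f_i$ is unavailable. Lemma~\ref{thm:polynomialrepresentation} circumvents this by replacing $f_i$ with $n_i$ linear lower bounds, which is precisely why we can afford quasi-convex inequalities rather than a single objective polynomial. Plugging $t = 2\tau$, $d = 2$, $m = O(n)$, and the encoding length above into Theorem~\ref{thm:convexintegerminimizationisfpt} gives running time
\[
  m\cdot \ell^{O(1)} \cdot d^{O(t)} \cdot 2^{O(t^3)} \;=\; (n + \log \max_j \max\{e_j,p_j,w_j\})^{O(1)} \cdot 2^{O(\tau^3)},
\]
matching the bound claimed in Theorem~\ref{thm:rejectionwithtypes}. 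Once the optimal $\mathbf{x}^\star$ is found, the actual schedule is recovered in linear additional time by sorting types by Smith ratio and, within each type, rejecting the $n_i - x_i^\star$ jobs of smallest rejection cost.
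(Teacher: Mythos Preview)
Your proof follows exactly the paper's route: verify that the convex program with $2\tau$ variables, degree-$2$ objective, and $O(n)$ linear constraints models the problem, then invoke Theorem~\ref{thm:convexintegerminimizationisfpt} and read off the bound. The correctness and running-time arguments match the paper's essentially line for line.

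One caveat worth flagging (the paper's surrounding text makes the same slip): the individual functions $s_i(x)$ are \emph{not} convex for $i\ge 2$, because the bilinear term $w^{(i)}p^{(i')}x_i x_{i'}$ gives an indefinite Hessian; ``nonnegative coefficients'' does not imply convexity for multivariate quadratics. What you actually need, and what is true, is that the \emph{sum} $\sum_i s_i(x)$ is convex. Its Hessian has entries $H_{ij}=w^{(\max(i,j))}p^{(\min(i,j))}=p^{(i)}p^{(j)}\min(r_i,r_j)$ where $r_i=w^{(i)}/p^{(i)}$; since the types are ordered by non-increasing Smith ratio, the matrix $(\min(r_i,r_j))_{ij}$ is positive semidefinite (write $\min(r_i,r_j)=\sum_{k\ge\max(i,j)}(r_k-r_{k+1})$ to get a sum of rank-one PSD matrices), and conjugating by $\mathrm{diag}(p^{(i)})$ preserves this. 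With that fix your application of Theorem~\ref{thm:convexintegerminimizationisfpt} is fully justified.
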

\begin{proof}%[of Theorem~\ref{thm:rejectionwithtypes}]
The number of integer variables is $2 \tau$. All constraints and the
objective function can be expressed by convex polynomials of degree
at most 2. Hence, using Theorem~\ref{thm:convexintegerminimizationisfpt}
we obtain a running time of $(1+n\cdot \tau)\cdot \log (\max_j \max\{e_j, p_j, w_j\})^{O(1)} 2^{O(\tau^3)}$.
\qed
\end{proof}

\section{Profit Maximization for General Scheduling Problems}

\label{sec:schedulekjobswithelllengths} % We consider the problem of scheduling a set $\overline{J}$ of $s$ jobs from among $n$ jobs preemptively on a single machine, where each job $j\in J$ is characterized by a release date $r_{j}$, a processing time~$p_{j}$, a profit $b_{j}$, and a cost function~$f_{j}(t)$, as to maximize $\sum_{j\in\overline{J}}b_{j}-f_{j}(C_{j})$, where $C_{j}$ denotes the completion time of $j$ in the computed schedule.
% We do not impose any restrictions on the profits~$b_{j}$.
% We take as parameters the number $\overline{p}$ of distinct processing times $p_j$, and the number $\overline{f}$ of distinct cost functions $f_{j}(t)$. 
% This generic problem definition captures, in particular, weighted flow time with at most $\overline{f}$ distinct weights, and weighted
% tardiness with at most $\overline{f}$ distinct pairs of weight and deadline.
% 
% First observe that in an optimal schedule, there is a total order of scheduled jobs such that at any time, we process a
% job with highest priority\marginpar{priority how defined?}.
% Thus, a very simple algorithm would be to enumerate all subsets of at most $s$ jobs and then all their orders; yet, this enumeration requires time $\Omega(n^{s})$ which is unaffordable.
% Instead, we guess structural properties of the schedule and then use a dynamic program to find
% the best mapping of the guessed structural properties to the jobs.
In this section, we consider the parameterized dual problem to scheduling
jobs with rejection: We consider the problem to reject at least $n-s$ jobs
($s$ being the parameter)
to minimize the total cost given by the rejection penalties plus the
cost of the schedule. As we will show, this is equivalent to selecting
at most $s$ jobs and to schedule them in order to maximize the profit of
the scheduled jobs. In contrast to the previous section, here we allow
jobs to have release dates and that the profit of a job depends on
the time when it finishes in the schedule, described by a function
that might be different for each job (similarly as for job dependent
cost functions).

Formally, we are given a set $J$ of $n$ jobs, where each job $j$
is characterized by a release date $r_{j}$, a processing time $p_{j}$,
and a non-increasing profit function~$f_{j}(t)$. We assume that
the functions $f$ are given as oracles and that we can evaluate them
in time $O(1)$ per query. Let $\overline{p}$
denote the number of distinct processing times~$p_{j}$ in the instance.
We want to schedule a set $\bar{J}\subseteq J$ of at most $s$ jobs
from~$J$ on a single machine. Our objective is to maximize $\sum_{j\in\bar{J}}f_{j}(C_{j})$,
where~$C_{j}$ denotes the completion time of~$j$ in the computed
schedule. We call this problem the \emph{$s$-bounded General Profit
Scheduling Problem}, or \emph{$s$-GPSP} for short. Observe that this
generic problem definition allows to model profit functions that stem
from scheduling objectives such as weighted flow time and weighted
tardiness. 

We like to note that in a first version of this paper we presented
an algorithm using three, rather than two parameters. An anonymous
referee gave us advice how to remove one parameter and we present
the improved version here. First, we show how to find the best solution
with a non-preemptive schedule, and subsequently we show how to extend
the algorithm to the preemptive setting.

%We present an algorithm which uses two parameters (and this is necessary), after an anonymous
%referee of a first version of our paper gave us advice how to remove one parameter.
%First, we show how to find the best solution
%with a non-preemptive schedule, and subsequently we show how to extend
%the algorithm to the preemptive setting.

\subsection{Non-Preemptive Schedules}

\label{sec:profit-max-nonpreemptive} Denote by $q_{1},\hdots,q_{\overline{p}}$
the set of arising processing times in the instance. We start with
a color-coding step, where we color each job uniformly at random with
one color from $\{1,\hdots,s\}$. When coloring the jobs randomly,
with probability~$s!/s^{s}$ all jobs scheduled in an optimal solution
(the set $\bar{J}$) will receive distinct colors. We call such a
coloring of $\bar{J}$ a \emph{good coloring}. When trying enough
random colorings, with high probability we will find at least one good
coloring. We will discuss later how to derandomize this algorithm.
For now, assume a fixed good coloring $c:J\rightarrow\{1,\hdots,s\}$.

We describe now a dynamic program that finds the best non-preemptive
schedule of at most $s$ jobs with different colors. The dynamic program
has one cell $C(S,t)$ for each combination of a set $S\subseteq\{1,\hdots,s\}$
and a time $t\in T$, where $T:=\{r_{j}+\sum_{i=1}^{\overline{p}}s_{i}\cdot q_{i}~|~j\in J,s_{1},\hdots,s_{\overline{p}}\in\{0,\hdots,s\}\}$.
Observe that the set~$T$ contains all possible completion times of
a job in an optimal non-preemptive schedule (we will show later that
this is even true for preemptive schedules). Also, $|T|\le n\cdot(s+1)^{\overline{p}}$.
The cell $C(S,t)$ models the subproblem of scheduling a set of at
most $|S|$ jobs such that their colors are pairwise different and
contained in $S$, and so that the last job finishes by time $t$
the latest. We will denote by~$\mathsf{opt}(S,t)$ the value of the optimal
solution for this subproblem \aw{with respect to the fixed coloring $c$}.

Consider an optimal solution with value $\mathsf{opt}(S,t)$. If no
job finishes at time $t$, then $\mathsf{opt}(S,t)=\mathsf{opt}(S,t')$
for some $t'<t$. Else, there is some job $j$ with $c(j)\in S$ that
finishes at time $t$; then $\mathsf{opt}(S,t)=f_{j}(t)+\mathsf{opt}(S\setminus\{c(j)\},t')$,
where $t'$ is the largest value in $T$ with $t'\le t-p_{j}$. We now
prove this formally. \begin{lemma} \label{lem:DP-trans-non-pmtn}
Let $c:J\rightarrow\{1,\hdots,s\}$ be a coloring, let $S\subseteq\{1,\hdots,s\}$
and let $t\in T$. The optimal value $\mathsf{opt}(S,t)$ for the
dynamic programming cell $C(S,t)$ equals 
\begin{multline}
\max\Big\{ \max_{t'\in T:~t'<t}\mathsf{opt}(S,t'),\max_{\substack{j\in J:~(r_{j}\le t-p_{j})\wedge c(j)\in S}}f_{j}(t)+\mathsf{opt}(S\setminus\{c(j)\},\max\{t'\in T~|~t'\le t-p_{j}\}),0\Big\}.\label{eq:non-pmtn-DP-transition}
\end{multline}
\end{lemma}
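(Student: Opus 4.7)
The plan is to prove the recurrence by establishing both inequalities between $\mathsf{opt}(S,t)$ and the right-hand side of~\eqref{eq:non-pmtn-DP-transition}.

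For the ``$\ge$'' direction I would check that each of the three terms on the right-hand side corresponds to a feasible schedule for the subproblem defined by cell $C(S,t)$. The value $0$ is always achievable by the empty schedule. Any schedule witnessing $\mathsf{opt}(S,t')$ for $t' < t$ also witnesses a feasible schedule for $C(S,t)$, since the ``last-completion-by'' constraint only gets looser. Finally, for any job $j$ with $c(j) \in S$ and $r_j \le t - p_j$, an optimal schedule for cell $C(S \setminus \{c(j)\}, t_j^\star)$ with $t_j^\star := \max\{t' \in T : t' \le t - p_j\}$ can be extended by placing $j$ to complete exactly at time $t$; since $r_j \le t - p_j$ and all previously scheduled jobs complete by $t_j^\star \le t - p_j$, the extension is feasible, uses colors in $S$, and has value $f_j(t) + \mathsf{opt}(S \setminus \{c(j)\}, t_j^\star)$.

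For the harder ``$\le$'' direction, I would take an optimal schedule $\sigma$ for cell $C(S,t)$ of value $\mathsf{opt}(S,t)$. If $\sigma$ schedules no jobs, its value is $0$. Otherwise, let $j^\star$ be the job finishing last in $\sigma$, completing at some time $C_{j^\star} \le t$. The key observation is that $C_{j^\star} \in T$: indeed, in a non-preemptive schedule the completion time of any job equals $r_{j'} + \sum_i s_i q_i$, where $j'$ is the first job of the block containing it and the $s_i$'s count the processing times used in that block until completion, so $C_{j^\star} \in T$ by definition of $T$. Two cases follow: if $C_{j^\star} < t$, then $\sigma$ is a feasible witness for $\mathsf{opt}(S, C_{j^\star})$ with $C_{j^\star} \in T$ and $C_{j^\star} < t$, giving the first term. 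If $C_{j^\star} = t$, then removing $j^\star$ from $\sigma$ yields a schedule for the colors $S \setminus \{c(j^\star)\}$ whose last completion time is some $t' \in T$ with $t' \le t - p_{j^\star}$, and hence $t' \le t_{j^\star}^\star$. Thus the remaining schedule has value at most $\mathsf{opt}(S \setminus \{c(j^\star)\}, t_{j^\star}^\star)$, so $\mathsf{opt}(S,t) = f_{j^\star}(t) + (\text{rest}) \le f_{j^\star}(t) + \mathsf{opt}(S \setminus \{c(j^\star)\}, t_{j^\star}^\star)$, giving the second term.

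The main obstacle is the closure property $C_{j^\star} \in T$: one must verify that the set $T$ is rich enough to contain every completion time realized by an optimal non-preemptive schedule, so that maximizing over $t' \in T$ in the recurrence does not prune the optimum. This follows from writing each completion time as the release date of the first job of its busy block plus a multiset sum of processing times drawn from $\{q_1,\hdots,q_{\overline{p}}\}$ with multiplicities bounded by $s$, which exactly matches the definition of $T$. Once this structural observation is in place the two inequalities match up and the lemma follows.
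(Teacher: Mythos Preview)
Your proposal is correct and follows essentially the same two-inequality argument as the paper: verifying each term on the right-hand side is achievable by a feasible schedule, and then bounding the optimum from above by case analysis on the last-finishing job. The only minor difference is presentational: you explicitly argue the closure property $C_{j^\star}\in T$ via a busy-block decomposition, whereas the paper simply asserts (having noted it in the surrounding text) that without loss of generality all start and completion times of the optimal schedule lie in~$T$.
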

\begin{proof} For cells $C(S,t_{\min})$ with $t_{\min}=\min_{t\in T}t$
the claim is true since $t_{\min}=\min_{j}r_{j}$ and so $\mathsf{opt}(S,t_{\min})=0$.
Now consider a cell $C(S,\bar{t})$ with $\bar{t}>t_{\min}$. Let
$\mathsf{opt}'(S,\bar{t})$ denote the right-hand side of~\eqref{eq:non-pmtn-DP-transition}.
We show that $\mathsf{opt}(S,\bar{t})\ge\mathsf{opt}'(S,\bar{t})$
and $\mathsf{opt}(S,\bar{t})\le\mathsf{opt}'(S,\bar{t})$.

For the first claim, we show that there is a schedule with value $\mathsf{opt}'(S,\bar{t})$.
If $\mathsf{opt}'(S,\bar{t})=0$ or $\mathsf{opt}'(S,\bar{t})=\max_{t'\in T:t'<\bar{t}}\mathsf{opt}(S,t')$
then the claim is immediate. Now suppose that there is a job $j$
with $r_{j}\le \bar{t}-p_{j}$ and $c(j)\in S$ such that $\mathsf{opt}(S,\bar{t})=f_{j}(\bar{t})+\mathsf{opt}(S\setminus\{c(j)\},\max\{t'\in T|t'\le \bar{t}-p_{j}\})$.
Then there is a feasible schedule for the problem encoded in cell
$C(S,\bar{t})$ that is given by the jobs in the optimal solution
for the cell $C(S\setminus\{c(j)\},\max\{t'\in T~|~t'\le\bar{t}-p_{j}\})$,
and additionally schedules job $j$ during $[\bar{t}-p_{j},\bar{t})$.
The profit of this solution is $f_{j}(\bar{t})+\mathsf{opt}(S\setminus\{c(j)\},\max\{t'\in T~|~t'\le \bar{t}-p_{j}\})$.

For showing that $\mathsf{opt}(S,\bar{t})\le\mathsf{opt}'(S,\bar{t})$
consider the schedule for $\mathsf{opt}(S,\bar{t})$ and assume w.l.o.g.~that 
the set $T$ contains all start and end times of jobs in that schedule. If no job finishes
at time $\bar{t}$ then $\mathsf{opt}(S,\bar{t})=\max_{t'\in T:~t'<t}\mathsf{opt}(S,t')$,
using that $T$ contains all finishing times of a job in the optimal schedule. Note that $\{t'\in T|t'<t\}\ne\emptyset$
since we assumed that $\bar{t}>t_{\min}$. If a
job $j$ finishes at time $\bar{t}$ then $\mathsf{opt}(S,\bar{t})=f_{j}(\bar{t})+\mathsf{opt}(S\setminus\{c(j)\},\max\{t'\in T~|~t'\le \bar{t}-p_{j}\})$.
This implies that $\mathsf{opt}(S,\bar{t})\le\mathsf{opt}'(S,\bar{t})$.
\qed
\end{proof}

To compute the value $\mathsf{opt}(S,t)$, our dynamic program
evaluates expression~\eqref{eq:non-pmtn-DP-transition}. Note that
this transition implies $\mathsf{opt}(S,t_{\min})=0$ for $t_{\min}$
the minimum value in~$T$. The number of dynamic programming
cells is at most $2^{s}\cdot|T|=2^{s} n\cdot(s+1)^{\overline{p}}$.
Evaluating expression~\eqref{eq:non-pmtn-DP-transition} takes time
$O(|T|+n)=O((s+1)^{\overline{p}}+n)$, given that optimal solutions
to all subproblems are known. Together with Lemma~\ref{lem:DP-trans-non-pmtn}
this yields the following lemma.

\begin{lemma} There is an algorithm that, given an instance of $s$-GPSP
with a set~$J$ of~$n$ jobs of at most~$\overline{p}$ distinct
processing times together with a good coloring $c:J\rightarrow\{1,\hdots,s\}$,
in time $O(2^{s}\cdot n^{2}\cdot(s+1)^{2\overline{p}})$ computes an optimal non-preemptive schedule.
\end{lemma}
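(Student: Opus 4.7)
The plan is to execute the dynamic program outlined just above the lemma statement. I would fill the cells $C(S, t)$ in nondecreasing order of $t \in T$ (breaking ties arbitrarily among subsets $S$), using the recurrence of Lemma~\ref{lem:DP-trans-non-pmtn}, whose correctness has already been established. The final answer for the fixed good coloring $c$ is $\max_{t \in T} \mathsf{opt}(\{1, \ldots, s\}, t)$: since $c$ is good, the at most $s$ jobs of the true optimum $\bar J$ all receive distinct colors, and they are realized by one of the cells $(\{1,\ldots,s\}, t)$ with $t \ge \max_{j \in \bar J} C_j$.

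Before this is sound, I have to verify that $T$ contains every relevant completion time. This is a standard left-shift argument: in any non-preemptive schedule I may repeatedly move a scheduled job so that it starts either at its own release date or flush against the previous job's completion time, which does not increase any completion time; since the profit functions $f_j$ are non-increasing, this transformation can only improve the objective. An induction on the jobs, ordered by start time, shows that every completion time in the resulting schedule has the form $r_j + \sum_{i=1}^{\overline{p}} s_i q_i$ for some $j \in J$ and integers $s_i \in \{0,\ldots,s\}$, which is precisely the definition of $T$. Hence an optimal schedule consistent with $c$ uses only completion times from $T$.

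For the running time, the number of cells is at most $2^s \cdot |T| \le 2^s \cdot n \cdot (s+1)^{\overline{p}}$. Evaluating the recurrence~\eqref{eq:non-pmtn-DP-transition} for one cell requires scanning the at most $|T|$ earlier times $t'$ and the $n$ candidate jobs $j$; for each job the value $\max\{t' \in T \mid t' \le \bar t - p_j\}$ can be found in $O(\log |T|)$ time with a pre-sorted copy of $T$. This gives a per-cell cost of $O(|T| + n) = O(n \cdot (s+1)^{\overline{p}})$, and multiplying by the number of cells yields the claimed $O(2^s \cdot n^2 \cdot (s+1)^{2\overline{p}})$ bound. Since the second summand of~\eqref{eq:non-pmtn-DP-transition} refers only to cells with strictly smaller times (using $p_j \ge 1$) and the first summand refers only to previously processed times, the chosen fill order makes the dynamic program well-defined; I do not anticipate any obstacle beyond the left-shift justification for $T$, which is the only ingredient not already packaged by Lemma~\ref{lem:DP-trans-non-pmtn}.
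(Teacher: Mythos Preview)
Your proposal is correct and follows essentially the same approach as the paper: fill the table $C(S,t)$ via the recurrence of Lemma~\ref{lem:DP-trans-non-pmtn}, count $2^s\cdot|T|$ cells with $O(|T|+n)$ work each, and multiply to obtain the stated bound. The left-shift argument you supply to justify that all relevant completion times lie in $T$ is exactly what the paper asserts as an observation before Lemma~\ref{lem:DP-trans-non-pmtn} and later proves in greater generality for the preemptive case in Lemma~\ref{lem:start-completion-times-T}.
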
 Instead of coloring the jobs randomly, we can use a family
of hash functions, as described by Alon et al.~\cite{AlonEtAl1995}.
Using our notation, they show that there is a family~$\mathcal{F}$
of $2^{O(s)}\log n$ hash functions $J\rightarrow\{1,\hdots,s\}$
such that for any set $J'\subseteq J$ (in particular for the set
of jobs $J'$ that is scheduled in an optimal solution) there is a
hash function $F\in\mathcal{F}$ such that $F$ is bijective on $J'$.
The value of each of these functions on each specific element of $J$
can be evaluated in $O(1)$ time. This yields the following. \begin{theorem}
\label{thm:reject-n-k-non-pmtn} There is a deterministic algorithm
that, given an instance of \textsc{$s$-GPSP} with~$n$ jobs and $\overline{p}$
processing times, in time
$2^{O(s)}\cdot(s+1)^{2\overline{p}}\cdot n^{2}\log n$ computes an optimal non-preemptive schedule. \end{theorem}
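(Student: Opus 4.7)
The plan is to combine the randomized algorithm from the preceding lemma with the standard derandomization technique based on perfect hash families. First I would recall that the preceding lemma provides, for any given coloring $c:J\rightarrow\{1,\hdots,s\}$ that happens to be \emph{good} (i.e.\ injective on the set $\bar J$ of jobs selected by some fixed optimum), a deterministic dynamic program of running time $O(2^s\cdot n^2\cdot(s+1)^{2\overline{p}})$ that computes the optimum non-preemptive schedule whose scheduled jobs have pairwise distinct colors. Thus everything reduces to producing, in deterministic fashion, a small family of colorings at least one of which is guaranteed to be good.

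Next I would invoke the $(n,s)$-perfect hash family construction of Alon, Yuster and Zwick, which yields a family $\mathcal F$ of at most $2^{O(s)}\log n$ functions $F:J\rightarrow\{1,\hdots,s\}$ such that for every $s$-element subset $J'\subseteq J$ there exists some $F\in\mathcal F$ that is injective on $J'$. Each such $F$ can be evaluated in $O(1)$ time per query. Applying this with $J'=\bar J$ (an optimum schedule containing at most $s$ jobs; if it contains strictly fewer, we pad with arbitrary jobs), we get at least one $F\in\mathcal F$ that is injective---equivalently bijective onto a subset of colors---on $\bar J$, i.e.\ a good coloring.

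The algorithm then simply iterates through all $F\in\mathcal F$, runs the dynamic program of the preceding lemma for each $F$, and returns the best schedule found. Correctness follows because at least one iteration uses a good coloring and thereby recovers the optimum, while no iteration can return a schedule of larger profit than the optimum. The total running time is
\[
|\mathcal F|\cdot O(2^s\cdot n^2\cdot(s+1)^{2\overline{p}})\;=\;2^{O(s)}\log n\cdot 2^{O(s)}\cdot n^2\cdot (s+1)^{2\overline{p}}\;=\;2^{O(s)}\cdot(s+1)^{2\overline{p}}\cdot n^2\log n,
\]
matching the bound claimed in the theorem. I do not expect any real obstacle here: the only point requiring care is the statement of the hash family result and the verification that the multiplicative blow-up stays within $2^{O(s)}$, both of which are routine.
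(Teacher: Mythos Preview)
Your proposal is correct and follows essentially the same approach as the paper: invoke the Alon--Yuster--Zwick $(n,s)$-perfect hash family of size $2^{O(s)}\log n$, run the dynamic program from the preceding lemma for each coloring in the family, and return the best schedule found. The running-time calculation matches, and your remark about padding $\bar J$ when it has fewer than $s$ jobs is a harmless clarification that the paper leaves implicit.
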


\subsection{Preemptive Schedules}

We now extend our algorithm from the previous section to the setting
when preemption of jobs is allowed. Now the dynamic program becomes
more complicated. As before, we color all jobs in $J$ uniformly at
random with colors from $\{1,\hdots,s\}$. Fix a good coloring $c:J\rightarrow\{1,\hdots,s\}$.
In the dynamic programming table we have one cell $C(S,t_{1},t_{2},P)$
for each combination of a set $S\subseteq\{1,\hdots,s\}$, values
$t_{1},t_{2}\in T$, where as above $T=\{r_{j}+\sum_{i=1}^{\overline{p}}s_{i}\cdot q_{i}~|~j\in J,s_{1},\hdots,s_{\overline{p}}\in\{0,\hdots,s\}\}$,
and a value $P\in\P:=\{\sum_{i=1}^{\overline{p}}s_{i}\cdot q_{i}~|~s_{1},\hdots,s_{\overline{p}}\in\{0,\hdots,s\}\}$
(note that like for $T$ we have $|\P|\le(s+1)^{\overline{p}}$).
This cell encodes the problem of selecting at most~$|S|$ jobs with total processing time at most $P$ such
that 
(i)~their colors are pairwise different and contained in~$S$, 
(ii)~they are released during $[t_{1},t_{2})$;
we want to schedule them during $[t_{1},t_{2})$, possibly with
preemption, the objective being to maximize the total profit.

To this end, we first prove that the set $T$ contains all possible
start- and completion times of jobs of an optimal solution. The former
denotes the first time in a schedule where each job is being processed.
\begin{lemma}\label{lem:start-completion-times-T} There is an optimal
solution such that for each scheduled job $j$, its start time and
completion time are both contained in $T$. \end{lemma}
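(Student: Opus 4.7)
My plan is to transform an arbitrary optimal preemptive schedule into one with the desired property via two exchange arguments, followed by an inductive argument on event times.

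First, I apply a \emph{left-shift} exchange: each maximal busy interval $[a, b]$ is shifted leftward until $a$ coincides with the release date of some scheduled job. Since the profit functions $f_j$ are non-increasing, this does not decrease total profit. Second, I apply a \emph{no-spontaneous-preemption} exchange: whenever job $j$ is preempted at some time $t$ by job $k$ where $t$ is not the release date of any scheduled job, I move the preemption earlier to the nearest release date that occurred during $j$'s current piece (or swap the $j$- and $k$-pieces entirely when no such release date exists in between). Such a swap only improves $C_k$ and leaves $C_j$ unchanged, preserving optimality. After these exchanges, every piece of processing ends at either a release date or the completion time of the processed job.

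Next, I argue inductively, in chronological order of events, that each start and completion time lies in $T$. Release dates are trivially in $T$. For a completion time $C_j$ within a busy interval $[a, b]$ with $a = r_{j^{**}}$, I select an anchor release date $r^* \le C_j$ such that $C_j - r^*$ equals the total processing in $(r^*, C_j]$ and consists entirely of jobs fully processed in $(r^*, C_j]$. If no job is partially processed in $[a, C_j]$, take $r^* = a$. Otherwise, for each job $l$ with $S_l < C_j < C_l$ let $e_l$ be the end of $l$'s last piece before $C_j$ (a release date by the second exchange), and set $r^* := \max_l e_l$. Then $C_j - r^* = \sum_k p_k$ summed over the at most $s$ jobs fully processed in $(r^*, C_j]$, so $C_j - r^* \in \mathcal{P}$ and $C_j \in T$. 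Start times $S_j$ are handled analogously: $S_j$ is either $r_j$ or coincides with an earlier event (a release date or a completion time), hence lies in $T$ by induction.

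The main obstacle is verifying that the chosen anchor $r^*$ indeed leaves no partial processing in $(r^*, C_j]$. In particular, I must rule out that some job $k$ with $C_k \le C_j$ has processing pieces on both sides of $r^*$; this requires a careful argument that uses the no-spontaneous-preemption property, together with the maximality of $r^*$, to show that such straddling jobs either do not occur or can be eliminated by a further local exchange that merges $k$'s pieces without harming any other job's completion. Carrying out this interleaving-elimination cleanly is the most technical part of the proof.
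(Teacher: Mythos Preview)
Your two exchange steps do not give you the structure the induction needs. Take jobs $A,B,C,D$ with $(r_A,p_A)=(0,10)$, $(r_B,p_B)=(1,5)$, $(r_C,p_C)=(2,1)$, $(r_D,p_D)=(3,1)$ and constant profit functions, so that every feasible schedule of all four jobs is optimal. The schedule
\[
A[0,1),\ B[1,2),\ A[2,3),\ D[3,4),\ C[4,5),\ B[5,9),\ A[9,17)
\]
has its single busy period starting at $r_A=0$ and every preemption at a release date of a scheduled job ($1=r_B$, $2=r_C$, $3=r_D$), so it survives both of your exchanges untouched. Now take $j=B$ with $C_B=9$. The only job $l$ with $S_l<9<C_l$ is $A$, whose last piece before~$9$ ends at $e_A=3$, so $r^*=3$. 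But $B$ itself straddles $r^*$, with pieces $[1,2)$ and $[5,9)$; the work in $(3,9]$ is therefore not a sum of full processing times, and your claimed equality $C_j-r^*=\sum_k p_k$ fails ($6$ on the left, $p_C+p_D=2$ on the right). The ``further local exchange'' you gesture at would have to move $A$'s piece $[2,3)$ out of the way; that is neither local nor obviously terminating, since untangling one interleaved pair can create new interleavings, and driving the process to a fixed point amounts to re-sorting the entire schedule by completion time.

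The paper does exactly that, but in one step: it fixes the completion times of an optimal schedule as deadlines and re-runs the jobs under Earliest-Deadline-First. EDF never misses a deadline that the original schedule met, so no completion time increases and optimality is preserved. Crucially, EDF automatically produces a laminar structure: between the start $R_{j_{s'}}$ and completion $C_{j_{s'}}$ of the job with $s'$-th deadline, the machine runs only $j_{s'}$ and higher-priority jobs, each of which is both released and completed inside that window. This is precisely the no-straddling invariant you were trying to build by local swaps, obtained for free; the induction over deadlines then goes through directly, showing $R_{j_{s'}},C_{j_{s'}}\in T_{s'}\subseteq T$.
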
 
\begin{proof}
Consider an optimal schedule $\mathcal{S}$ for the set of at most~$k$ jobs that we want to schedule. If we fix the completion times
of the jobs in the schedule as deadlines and run the policy of Earliest-Deadline-First
(EDF) with these deadlines, then we obtain a schedule $\mathcal{S}'$
where each job finishes no later than in~$\mathcal{S}$. This is
true since given one machine and jobs with release dates and deadlines,
there is a feasible preemptive schedule (i.e., a schedule where each
job finishes by its deadline) if and only if EDF finds such a schedule.
% , see also~\cite{xx}.
% \marginpar{\textcolor{red}{which reference?}} 
In particular, since the profit-functions of
the jobs are non-increasing, the profit given by $\mathcal{S}'$ is
at most the profit given by $\mathcal{S}$.

We prove the claim by induction over the deadlines of the jobs in
$\mathcal{S}'$. In fact, we prove the following slightly stronger
claim: for the job with $s'$-th deadline, its start time and completion
time are in the set $T_{s'}=\{r_{j}+\sum_{i=1}^{\overline{p}}s_{i}\cdot q_{i}~|~j\in J,s_{1},\hdots,s_{\overline{p}}\in\{0,\hdots,s'\}\}$.

For the job $j_{1}$ with smallest deadline, we have that $j_{1}$
starts at time $r_{j_{1}}\in T_{1}$ and finishes at time $r_{j_{1}}+p_{j_{1}}\in T_{1}$,
since $j_{1}$ has highest priority. Suppose by induction that the
claim is true for the jobs with the $s'-1$ smallest deadlines, and
note that the deadlines are distinct. Consider the job $j_{s'}$ with
the $s'$-smallest deadline. Let $R_{s'}$ denote its start time.
Let $J_{1}$ denote the jobs finishing before~$R_{s'}$. Similarly,
let $J_{2}$ denote the jobs finishing during $(R_{s'}C_{s'}]$. Note
that $j_{s'}\in J_{2}$ but $J_{2}$ does not contain a job with deadline
later than $j_{s'}$. Since we run EDF, $R_{s'}=r_{j_{s'}}$ or $R_{s'}=C_{j_{s''}}$
for some job $j_{s''}$. In both cases $R_{s'}\in T_{|J_{1}|}$, and
thus $R_{s'}=r_{j}+\sum_{i=1}^{\overline{p}}s_{i}^{(1)}\cdot q_{i}$
for some job $j\in J$ and values $s_{1}^{(1)},\hdots,s_{\overline{p}}^{(1)}\in\{0,\hdots,|J_{1}|\}$.
During $(R_{s'}C_{s'}]$, at most~$|J_{2}|$ jobs finish, and their
total length is $\sum_{i=1}^{\overline{p}}s_{i}^{(2)}\cdot q_{i}$
for values $s_{1}^{(2)},\hdots,s_{\overline{p}}^{(2)}\in\{0,\hdots,|J_{2}|\}$.
Hence, 
\[
C_{s'}=r_{j}+\sum_{i=1}^{\overline{p}}s_{i}^{(1)}\cdot q_{i}+\sum_{i=1}^{\overline{p}}s_{i}^{(2)}\cdot q_{i}=r_{j}+\sum_{i=1}^{\overline{p}}s_{i}\cdot q_{i}
\]
for some job $j\in J$ and values $s_{1},\hdots,s_{\overline{p}}\in\{0,\hdots,|J_{1}|+|J_{2}|\}$.
By $|J_{1}|+|J_{2}|=s'$, we have that $C_{j_{s'}}\in T_{s'}$. 
\qed
\end{proof}

Suppose we want to compute an optimal solution of value $\mathsf{opt}(S,t_{1},t_{2},P)$
for a cell $C(S,t_{1},t_{2},P)$. In any optimal solution, we can
assume that the last finishing job has smallest priority, i.e., we
work on this job only if no other job is pending. Let $j$ be this
(last) job and let $R_{j}$ denote its start time. Then, during~$[R_{j},t_{2})$
the only scheduled jobs are $j$ and some jobs that are released during~$[R_{j},t_{2})$. Also, all jobs scheduled during~$[t_{1},R_{j})$
are also released during~$[t_{1},R_{j})$ and they are finished within
$[t_{1},R_{j})$.

This observation gives rise to the dynamic programming transition.
Given a cell $C(S,t_{1},t_{2},P)$, we enumerate all possibilities
for the last job $j$, its start time~$R_{j}$, the total processing
time of jobs scheduled during $[t_{1},R_{j})$ and $[R_{j},t_{2})$,
and their colors. One combination corresponds to an optimal solution,
and we reduce the overall problem to the respective two subproblems
for $[t_{1},R_{j})$ and~$[R_{j},t_{2})$. (Observe that the color
coding step prevents to select the same job in both subproblems.) We justify
this transition in the following lemma. For ease of notation, for
a cell $C(S,t_{1},t_{2},P)$ denote by $\mathsf{en}(S,t_{1},t_{2},P)$
the set of tuples $(j,S',R_{j},P',P'')$ that we need to enumerate
for splitting the cell, i.e., all such tuples with 
$c(j)\in S$, 
$t_{1}\le r_{j}\le t_{2}-p_{j}$, 
$S'\subseteq S\setminus\{c(j)\}$, 
$t_{2}-R_{j}\ge P''+p_{j}$, 
$P',P''\in\P$, and 
$P'+P''+p_{j}\le P$. 
\begin{lemma} \label{lem:non-pmtn-DP-transition}For a coloring $c:J\rightarrow\{1,\hdots,s\}$,
each dynamic programming cell $C(S,t_{1},t_{2},P)$ satisfies
\begin{multline}
\mathsf{opt}(S,t_{1},t_{2},P)=\max\Big\{\max_{\substack{t'\in T:~t_{1}\le t'<t_2}}\mathsf{opt}(S,t_{1},t',P),\\
\max_{\substack{(j,S',R_{j},P',P'')\in\mathsf{en}(S,t_{1},t_{2},P)}}f_{j}(t_{2})+\mathsf{opt}(S',t_{1},R_{j},P')+\mathsf{opt}(S\setminus(S'\cup\{c(j)\},R_{j},t_{2},P''),0\Big\} \enspace . \label{eq:pmtn-DP-transition}
\end{multline}
\end{lemma}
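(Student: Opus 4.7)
The plan is to establish the equality by proving both directions of inequality between $\mathsf{opt}(S,t_1,t_2,P)$ and the right-hand side of~\eqref{eq:pmtn-DP-transition}. The first direction, $\mathsf{opt}(S,t_1,t_2,P) \ge$ RHS, is constructive: we exhibit, for each option enumerated on the RHS, a feasible schedule for cell $C(S,t_1,t_2,P)$ of matching value. For the term $\mathsf{opt}(S,t_1,t',P)$ with $t' < t_2$, any optimal schedule for $C(S,t_1,t',P)$ is trivially feasible for $C(S,t_1,t_2,P)$, since the later deadline is only less restrictive. For the second term, given a tuple $(j,S',R_j,P',P'') \in \mathsf{en}(S,t_1,t_2,P)$, we concatenate: (i)~an optimal schedule for $C(S',t_1,R_j,P')$ placed inside $[t_1,R_j)$; (ii)~an optimal schedule for $C(S\setminus(S'\cup\{c(j)\}),R_j,t_2,P'')$ placed inside $[R_j,t_2)$; and (iii)~job $j$ finishing exactly at $t_2$. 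The conditions built into $\mathsf{en}$, namely $t_2 - R_j \ge P'' + p_j$ and $P' + P'' + p_j \le P$, guarantee that everything fits into the time window and respects the total processing bound.

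For the reverse direction, $\mathsf{opt}(S,t_1,t_2,P) \le$ RHS, take an optimal schedule $\mathcal{S}$ for $C(S,t_1,t_2,P)$. If no scheduled job completes exactly at $t_2$, let $t'$ be the latest completion time; by Lemma~\ref{lem:start-completion-times-T}, $t' \in T$ and $t' < t_2$, so $\mathcal{S}$ is feasible for $C(S,t_1,t',P)$, matching the first alternative on the RHS. Otherwise, some job $j$ finishes at $t_2$, and we rearrange $\mathcal{S}$ so that $j$ has lowest priority while preserving profit. To do this, take the completion times of $\mathcal{S}$ as deadlines for the scheduled jobs (breaking ties so that $j$ has the unique largest deadline) and run EDF; by the classical EDF feasibility result, the resulting schedule $\mathcal{S}'$ meets all deadlines, and since profits are non-increasing in completion time, $\mathcal{S}'$ is still optimal. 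In $\mathcal{S}'$, job $j$ is processed only when no other pending job is available.

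Let $R_j$ denote $j$'s start time in $\mathcal{S}'$; by Lemma~\ref{lem:start-completion-times-T}, $R_j \in T$. All jobs processed during $[t_1,R_j)$ finish within that interval (otherwise $j$ would be preempted by them), and the jobs processed during $[R_j,t_2)$ other than $j$ are released no earlier than $R_j$ (else they would have been processed in $[t_1,R_j)$). Setting $S'$ to be the set of colors used in $[t_1,R_j)$, $P'$ their total processing time, and $P''$ the total processing time of the remaining jobs in $[R_j,t_2)$, one verifies directly that $(j,S',R_j,P',P'') \in \mathsf{en}(S,t_1,t_2,P)$, and that the value of $\mathcal{S}'$ decomposes as $f_j(t_2) + \mathsf{opt}(S',t_1,R_j,P') + \mathsf{opt}(S\setminus(S'\cup\{c(j)\}),R_j,t_2,P'')$.

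The main obstacle is the EDF rearrangement: we need to argue that we may assume, without loss of profit, that the job completing at $t_2$ is processed with lowest priority so that its start time $R_j$ cleanly separates the schedule into two independent subproblems. This relies on the classical optimality of EDF for preemptive single-machine feasibility with release dates and deadlines, combined with the monotonicity of the profit functions. Once the rearrangement is in place, Lemma~\ref{lem:start-completion-times-T} ensures that the breakpoint $R_j$ and the partial processing volumes land inside the finite sets $T$ and $\mathcal{P}$ used to index the DP, so that the decomposition is actually enumerated by $\mathsf{en}(S,t_1,t_2,P)$.
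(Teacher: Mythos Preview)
Your argument follows the paper's strategy closely---prove both inequalities, using EDF to obtain a clean split at the start time $R_j$ of the lowest-priority job---but there is a genuine gap created by the order in which you perform the case split and the EDF rearrangement. The paper applies EDF \emph{first} and only then case-splits on whether the last job finishes at~$t_2$; you case-split on the original optimal schedule $\mathcal{S}$ and apply EDF only in the second case. This difference breaks two steps.

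First, in your Case~1 (no job of $\mathcal{S}$ completes at $t_2$) you invoke Lemma~\ref{lem:start-completion-times-T} to conclude that the latest completion time $t'$ lies in $T$. But that lemma only asserts the existence of \emph{some} optimal schedule (namely the EDF one) with completion times in $T$; it says nothing about an arbitrary optimal $\mathcal{S}$. Second, in Case~2, after you run EDF with $j$ given the unique largest deadline, EDF only guarantees $C_j^{\mathcal{S}'} \le t_2$, not equality. If $C_j^{\mathcal{S}'} < t_2$, your claimed decomposition with the term $f_j(t_2)$ does not bound $\mathsf{opt}$ from above (since $f_j(C_j^{\mathcal{S}'}) \ge f_j(t_2)$ goes the wrong way); you would instead have to observe that $\mathcal{S}'$ now fits in $[t_1, C_j^{\mathcal{S}'})$ and fall back to the first alternative of the recurrence, which you do not do.

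Both issues vanish if you adopt the paper's order: pass to the EDF schedule at the outset (so all start and completion times lie in $T$, by the proof of Lemma~\ref{lem:start-completion-times-T}), let $j$ be the last-finishing job, and branch on whether $C_j = t_2$ or $C_j < t_2$. A minor additional point: in your final sentence you say the value of $\mathcal{S}'$ ``decomposes as'' the sum of $f_j(t_2)$ and the two $\mathsf{opt}$ terms, but the sub-schedules of $\mathcal{S}'$ are merely feasible for the subcells, so the correct statement is that the value is \emph{at most} that sum.
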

\begin{proof}
Similarly as in the proof of Lemma~\ref{lem:DP-trans-non-pmtn},
let $\mathsf{opt}'(S,t_{1},t_{2},P)$ denote the right-hand side of
\eqref{eq:pmtn-DP-transition} for the cell $C(S,t_{1},t_{2},p)$.

Consider a cell $C(S,t_{1},t_{2},P)$. First, we show that $\mathsf{opt}(S,t_{1},t_{2},P)\ge\linebreak\mathsf{opt}'(S,t_{1},t_{2},P)$.
To this end, it suffices to show that there is a feasible solution
with value at least $\mathsf{opt}'(S,t_{1},t_{2},P)$. If $\mathsf{opt}'(S,t_{1},t_{2},P) = 0$
then the claim is immediate (just take the empty solution). Otherwise,
first assume that $\mathsf{opt}'(S,t_{1},t_{2},P)=\mathsf{opt}(S,t_{1},t',P)$
for some $t'\in T$ with $t'<t_2$. Any solution for the cell $C(S,t_{1},t',P)$
is also feasible for $C(S,t_{1},t_{2},P)$, and thus $\mathsf{opt}(S,t_{1},t_{2},P)\ge\mathsf{opt}(S,t_{1},t',P)$.
Hence, $\mathsf{opt}(S,t_{1},t_{2},P)\ge\mathsf{opt}'(S,t_{1},t_{2},P)$.

Finally, assume that there is a tuple $(j,S',R_{j},P',P'')$ such
that
$$\mathsf{opt}'(S,t_{1},t_{2},P)=f_{j}(t_{2})+\mathsf{opt}(S',t_{1},R_{j},P')+\mathsf{opt}(S\setminus(S'\cup\{c(j)\},R_{j},t_{2},P'')\enspace .$$
We build a solution with this value as follows. During $[t_{1},R_{j})$,
we schedule an optimal solution for the cell $C(S',t_{1},R_{j},P')$.
During $[R_{j},t_{2})$, we schedule an optimal solution for the cell
$C(S\setminus(S'\cup\{c(j)\},R_{j},t_{2},P'')$. By assumption, $t_{2}-R_{j}\ge P''+p_{j}$.
Hence, the schedule for the latter cell leaves total idle time during
$[R_{j},t_{2})$ of at least~$p_{j}$. During these idle times we
schedule job $j$ and, thus, $j$ finishes at time $t_{2}$ the latest.
As $S'\subseteq S\setminus\{c(j)\}$ and $S'\cap(S\setminus(S'\cup\{c(j)\})=\emptyset$,
no job is scheduled in both subproblems. Also, $P'+P''+p_{j}\le P$.
Hence, the solution is feasible for the cell $C(S,t_{1},t_{2},P)$
and its value is at least $f_{j}(t_{2})+\mathsf{opt}(S',t_{1},R_{j},P')+\mathsf{opt}(S\setminus(S'\cup\{c(j)\},R_{j},t_{2},P'')$.
Thus, $\mathsf{opt}(S,t_{1},t_{2},P)\ge\mathsf{opt}'(S,t_{1},t_{2},P)$.

Conversely, we want to show that $\mathsf{opt}(S,t_{1},t_{2},P)\le\mathsf{opt}'(S,t_{1},t_{2},P)$.
Consider an optimal solution for the cell $C(S,t_{1},t_{2},P)$. If
$\mathsf{opt}(S,t_{1},t_{2},P)=0$, then the claim is immediate. Otherwise,
we can assume that in the schedule for the solution, the jobs are
ordered according to EDF by their respective completion time (see
also the proof of Lemma~\ref{lem:start-completion-times-T}). Let
$j$ be the job with largest deadline (and hence smallest priority);
then $c(j)\in S$. If $j$ ends at a time $t'<t_{2}$ then $\mathsf{opt}(S,t_{1},t_{2},P)=\mathsf{opt}(S,t_{1},t',P)\le\mathsf{opt}'(S,t_{1},t_{2},P)$.

Now suppose that $j$ finishes at time $t_{2}$, and let $R_{j}$
denote its start time in an optimal schedule of value $\mathsf{opt}(S,t_{1},t_{2},P)$.
Observe that $t_{1}\le r_{j}\le t_{2}-p_{j}$. Let $S'\subseteq S\setminus\{c(j)\}$
denote the set of colors of jobs scheduled during $[t_{1},R_{j})$
in an optimal solution with value $\mathsf{opt}(S,t_{1},t_{2},P)$ (observe that $c(j)\notin S'$), and let~$P'$ denote the total processing time of jobs scheduled during $[t_{1},R_{j})$.
Since all jobs scheduled during $[t_{1},R_{j})$ start and finish
during that interval, we have that $P'\in\P$. Also, let $P''$ denote
the total processing time of jobs other than $j$ that are scheduled
during $[R_{j},t_{2})$ in a solution of value $\mathsf{opt}(S,t_{1},t_{2},P)$.
In particular, $t_{2}-R_{j}\ge P''+p_{j}$. As all those jobs start
and finish during $[R_{j},t_{2})$, this implies that $P''\in\P$.
Hence, $(j,S',R_{j},P',P'')\in\mathsf{en}(S,t_{1},t_{2},P)$, and
thus 
\begin{multline*}
% \mathsf{opt}(S,t_{1},t_{2},P)=f_{j}(t_{2})+\mathsf{opt}(S',t_{1},R_{j},P')+\mathsf{opt}(S\setminus(S'\cup\{c(j)\},R_{j},t_{2},P'')\\\le\mathsf{opt}'(S,t_{1},t_{2},P).
\mathsf{opt}(S,t_{1},t_{2},P)=f_{j}(t_{2})+\mathsf{opt}(S',t_{1},R_{j},P')+\mathsf{opt}(S\setminus(S'\cup\{c(j)\},R_{j},t_{2},P'')\le\mathsf{opt}'(S,t_{1},t_{2},P).
\end{multline*}
\qed
\end{proof}

Our dynamic program evaluates equation~\eqref{eq:pmtn-DP-transition}
for each cell. Observe that cells $C(S,t_{1},t_{2},p)$ with $S=\emptyset$
or $t_{1}=t_{2}$ obey $\mathsf{en}(S,t_{1},t_{2},P)=\emptyset$, and
so $\mathsf{opt}(S,t_{1},t_{2},p)=0$. This yields the following. \begin{lemma} There is an algorithm that, given a set
of $n$ jobs with $\overline{p}$ distinct processing times and
a good coloring $c:J\rightarrow\{1,\hdots,s\}$, in time $2^{O(s)}s^{O(\overline{p})}n^{4}$
computes an optimal preemptive schedule for $s$-GSP. \end{lemma}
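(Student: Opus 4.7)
The plan is to implement the recursion of Lemma~\ref{lem:non-pmtn-DP-transition} as a standard bottom-up dynamic program and then multiply the number of cells by the cost of filling one cell. First I would process cells in non-decreasing lexicographic order of the pair $(|S|, t_2-t_1)$. The base cases are those with $S=\emptyset$ or $t_1=t_2$; for these $\mathsf{en}(S,t_1,t_2,P)=\emptyset$, and so $\mathsf{opt}(S,t_1,t_2,P)=0$. In the recursive case, the first branch of \eqref{eq:pmtn-DP-transition} keeps $S$ fixed and strictly decreases $t_2-t_1$, while each tuple $(j,S',R_j,P',P'')\in\mathsf{en}(S,t_1,t_2,P)$ triggers two subcalls whose color sets are $S'\subseteq S\setminus\{c(j)\}$ and $S\setminus(S'\cup\{c(j)\})$, both strictly smaller than $S$ because $c(j)\in S$ is removed. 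Hence the chosen order is compatible with the recursion and correctness follows from Lemma~\ref{lem:non-pmtn-DP-transition}. An optimal solution to the whole instance is read off the cell $C(\{1,\hdots,s\},t_{\min},t_{\max},P_{\max})$, where $t_{\min}=\min_j r_j\in T$, $t_{\max}=\max T$, $P_{\max}=\max\mathcal{P}$; Lemma~\ref{lem:start-completion-times-T} guarantees that restricting start/completion times to $T$ loses no profit.

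Next I would bound the size of the table. Since $|T|\le n\cdot(s+1)^{\overline{p}}$ and $|\mathcal{P}|\le(s+1)^{\overline{p}}$, the number of dynamic programming cells is at most
\[
2^{s}\cdot|T|^{2}\cdot|\mathcal{P}|\;\le\;2^{s}\cdot n^{2}\cdot(s+1)^{3\overline{p}}.
\]
For a single cell, evaluating \eqref{eq:pmtn-DP-transition} means scanning at most $|T|$ values $t'$ in the first term and at most
\[
|\mathsf{en}(S,t_1,t_2,P)|\;\le\;n\cdot 2^{s}\cdot|T|\cdot|\mathcal{P}|^{2}\;\le\;2^{s}\cdot n^{2}\cdot(s+1)^{3\overline{p}}
\]
tuples in the second term, with the latter clearly dominating. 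Each tuple requires only constant work once the feasibility conditions defining $\mathsf{en}$ and the already-computed optimal values for the two subcells are looked up (conditions such as $t_1\le r_j\le t_2-p_j$, $P'+P''+p_j\le P$, and $t_2-R_j\ge P''+p_j$ are all checkable in $O(1)$).

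Multiplying the two bounds gives a total running time of
\[
\bigl(2^{s}\cdot n^{2}\cdot(s+1)^{3\overline{p}}\bigr)\cdot\bigl(2^{s}\cdot n^{2}\cdot(s+1)^{3\overline{p}}\bigr)\;=\;2^{O(s)}\cdot s^{O(\overline{p})}\cdot n^{4},
\]
which matches the claim. I expect no real mathematical obstacle, since Lemmas~\ref{lem:start-completion-times-T} and \ref{lem:non-pmtn-DP-transition} already do all of the structural work. The one spot that needs care is justifying the evaluation order of the DP: I would verify explicitly that every subcell occurring on the right-hand side of \eqref{eq:pmtn-DP-transition} is strictly earlier in the chosen lexicographic order, so that all values it reads have already been computed when the current cell is filled.
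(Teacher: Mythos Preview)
Your proposal is correct and follows the same approach as the paper, which itself only sketches the argument in one sentence (``Our dynamic program evaluates equation~\eqref{eq:pmtn-DP-transition} for each cell''); you simply fill in the routine details of the cell count, the per-cell enumeration bound, and the evaluation order, all of which are accurate.
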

Using the same derandomization technique as above, we obtain the following
theorem. \begin{theorem} \label{thm:reject-n-k-pmtn} There is a
deterministic algorithm that, given a set of $n$ jobs with~$\overline{p}$
distinct processing times, in time $2^{O(s)}s^{O(\overline{p})}n^{4}\log n$
computes an optimal preemptive schedule for $s$-GSP. \end{theorem}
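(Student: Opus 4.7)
The plan is to mirror the derandomization argument already used for Theorem~\ref{thm:reject-n-k-non-pmtn} in the non-preemptive setting, but now applied to the preemptive dynamic program from the preceding lemma. The only randomized component in the preemptive algorithm is the initial color-coding step: we assumed access to a good coloring~$c: J \rightarrow \{1,\ldots,s\}$, i.e., one that assigns pairwise distinct colors to the (at most~$s$) jobs scheduled in some fixed optimal solution~$\bar{J}$. Since a good coloring is found with probability $s!/s^s \geq e^{-s}$ when colors are drawn uniformly at random, the standard way to derandomize is to replace the random choice by an enumeration over an explicit small family of hash functions that is guaranteed to contain at least one good coloring.

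Concretely, I would invoke the construction of Alon, Yuster and Zwick~\cite{AlonEtAl1995} of an $(n,s)$-perfect hash family~$\mathcal{F}$ of functions $J \rightarrow \{1,\ldots,s\}$ of size $|\mathcal{F}| = 2^{O(s)}\log n$, with the property that for every subset $J' \subseteq J$ of size at most~$s$ there exists some $F \in \mathcal{F}$ that is injective on~$J'$. In particular, taking $J' = \bar{J}$ for an optimal solution, at least one~$F \in \mathcal{F}$ is a good coloring. Each hash function can be evaluated in $O(1)$ time on a given element. I would iterate over all $F \in \mathcal{F}$, run the preemptive dynamic program of the preceding lemma with $c := F$, and return the best schedule obtained. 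Correctness follows because for the good hash function in $\mathcal{F}$ the algorithm finds the optimal schedule, and for every other hash function it returns at most this value (since any schedule it outputs is feasible for the original instance).

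The running time follows immediately: each of the $2^{O(s)}\log n$ hash functions triggers one invocation of the algorithm from the preceding lemma, which runs in time $2^{O(s)} s^{O(\overline{p})} n^{4}$. Multiplying gives a total deterministic running time of $2^{O(s)} s^{O(\overline{p})} n^{4} \log n$, as claimed. There is no real obstacle here beyond checking that the preemptive DP is oblivious to how the coloring was produced---it only uses $c$ as a black box, so plugging in a hash function from~$\mathcal{F}$ is valid. Thus the preemptive version of~$s$-GSP is fixed-parameter tractable with respect to the combined parameter $(s,\overline{p})$, matching the bound of Theorem~\ref{thm:reject-n-k-non-pmtn} up to polynomial factors in~$n$.
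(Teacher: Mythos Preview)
Your proposal is correct and follows exactly the approach the paper takes: the paper simply states ``Using the same derandomization technique as above, we obtain the following theorem'' and relies on the Alon--Yuster--Zwick perfect hash family of size $2^{O(s)}\log n$ combined with the preceding preemptive dynamic program, precisely as you describe.
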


Observe that $(\log n)^{O(\overline{p})} = O((\overline{p}\log\overline{p})^{O(\overline{p})} + n)$; this yields the following.
\begin{corollary}
  There is a deterministic algorithm that, given a set of $n$ jobs with~$\overline{p}$ distinct processing times, computes an optimal preemptive schedule for~$O(\log n)$ jobs in time $(\overline{p}\log\overline{p})^{O(\overline{p})}\cdot n^{O(1)}$.
\end{corollary}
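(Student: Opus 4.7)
The plan is essentially to instantiate Theorem~\ref{thm:reject-n-k-pmtn} with $s\in O(\log n)$ and then massage the resulting bound using the indicated algebraic identity. Substituting $s = c\log n$ (for the absolute constant $c$ hidden in $O(\log n)$) into the running time $2^{O(s)}s^{O(\overline{p})}n^{4}\log n$ gives
\[
2^{O(\log n)}\cdot (\log n)^{O(\overline{p})}\cdot n^{4}\log n \;=\; (\log n)^{O(\overline{p})}\cdot n^{O(1)}.
\]
So everything reduces to verifying that $(\log n)^{O(\overline{p})} = O((\overline{p}\log\overline{p})^{O(\overline{p})}+n)$, as announced in the statement preceding the corollary.

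To establish this identity I would split into two cases depending on the relative size of $\overline{p}\log\overline{p}$ and $\log n$. In the \emph{large-}$\overline{p}$ case where $2\overline{p}\log\overline{p}\ge \log n$ we have
\[
(\log n)^{\overline{p}} \;\le\; (2\overline{p}\log\overline{p})^{\overline{p}} \;=\; 2^{\overline{p}}(\overline{p}\log\overline{p})^{\overline{p}} \;\le\; (\overline{p}\log\overline{p})^{2\overline{p}}
\]
for $\overline{p}\ge 2$, absorbing the $2^{\overline{p}}$ factor into the exponent at the cost of doubling it; this is tolerable inside the $O(\cdot)$ in the exponent. In the \emph{small-}$\overline{p}$ case where $2\overline{p}\log\overline{p}<\log n$ I would first derive $\overline{p}\le \log n/\log\log n$ (using that $\log\overline{p}\ge \tfrac{1}{2}\log\log n$ once $\overline{p}\ge \log n/\log\log n$, so the opposite inequality forces this bound), and then observe
\[
(\log n)^{\overline{p}} \;\le\; (\log n)^{\log n/\log\log n} \;=\; 2^{\log n} \;=\; n,
\]
which is polynomial. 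Combining the two cases yields the stated inequality and hence the claimed running time $(\overline{p}\log\overline{p})^{O(\overline{p})}\cdot n^{O(1)}$.

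There is no real obstacle here beyond bookkeeping: the only technical point is the case split above, where one must be careful that the constants hidden in the exponent $O(\overline{p})$ on the left of the identity are allowed to grow (say, double) on the right, which is consistent with the corollary's $O(\overline{p})$-notation. The algorithmic content is entirely contained in Theorem~\ref{thm:reject-n-k-pmtn}; the corollary is simply the specialization to $s = O(\log n)$ together with this elementary bound.
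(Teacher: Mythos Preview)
Your proposal is correct and follows exactly the paper's approach: invoke Theorem~\ref{thm:reject-n-k-pmtn} with $s=O(\log n)$ and then apply the identity $(\log n)^{O(\overline{p})} = O((\overline{p}\log\overline{p})^{O(\overline{p})}+n)$, which the paper asserts without proof just before the corollary. You additionally supply a clean justification of that identity via the case split, which the paper omits; this is a welcome elaboration rather than a deviation.
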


Applying Theorem~\ref{thm:reject-n-k-pmtn}, we obtain the following corollary when at least $n-s$ jobs have to
be rejected. 

\begin{corollary}\label{cor:scheuduling-rejection} There is a deterministic
algorithm that, given a set of $n$ jobs with at most
$\overline{p}$ distinct processing times, each job having a private cost function~$w_{j}$, solves the problem $1|r_{j},(pmtn)|\sum_{S}w_{j}(C_{j})+\sum_{\bar{S}}e_{j}$
in time\linebreak $2^{O(s)}s^{O(\overline{p})}n^{4}\log n$, when at least $n-s$
jobs have to be rejected. \end{corollary}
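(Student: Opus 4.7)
The plan is to reduce the cost-minimization problem with mandatory rejection of at least $n-s$ jobs to the profit-maximization problem $s$-GSP, and then invoke Theorem~\ref{thm:reject-n-k-pmtn} as a black box. The reduction is a standard sign-flip trick, with the constant $\sum_{j\in J} e_j$ absorbing the rejection penalties into the baseline.

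Concretely, for each job $j$ I would define the profit function $f_j(t) := e_j - w_j(t)$. Since each $w_j$ is a (standard) non-decreasing cost function, each $f_j$ is non-increasing, as required by the input specification of $s$-GSP. For any set $S \subseteq J$ with $|S| \le s$ scheduled with completion times $(C_j)_{j\in S}$, we have
\begin{align*}
\sum_{j\in S} w_j(C_j) + \sum_{j \in \bar{S}} e_j
 &= \sum_{j\in J} e_j - \sum_{j\in S}\bigl(e_j - w_j(C_j)\bigr)
 = \sum_{j\in J} e_j - \sum_{j\in S} f_j(C_j) \enspace .
\end{align*}
Since $\sum_{j\in J} e_j$ is independent of the choice of $S$ and of the schedule, minimizing the left-hand side over all feasible schedules rejecting at least $n-s$ jobs is equivalent to maximizing $\sum_{j\in S} f_j(C_j)$ over all preemptive schedules of at most $s$ jobs from $J$ respecting release dates. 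This is exactly an instance of $s$-GSP on the same job set $J$, with the same release dates $r_j$, the same processing times $p_j$ (hence the same bound $\overline{p}$ on the number of distinct processing times), and the profit functions $f_j$ just defined.

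I would then invoke Theorem~\ref{thm:reject-n-k-pmtn} on this transformed instance, which produces an optimal preemptive schedule of at most $s$ jobs in time $2^{O(s)} s^{O(\overline{p})} n^4 \log n$. Translating the returned schedule back, the scheduled jobs form $S$ and the remaining jobs form $\bar{S}$, giving an optimal solution to the original rejection problem of the same cost (up to the additive constant $\sum_{j\in J} e_j$, which we add back). The only subtlety is that some profits $f_j(t)$ may be negative, meaning it is cheaper to reject $j$ than to complete it at time~$t$; but the dynamic program of Section~\ref{sec:profit-max-nonpreemptive} always retains the value $0$ in its max (corresponding to scheduling nothing further), so such jobs are simply never selected, which matches the desired behavior of preferring rejection in those cases. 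No further work is needed, so this is essentially a one-line corollary once the sign flip is observed.
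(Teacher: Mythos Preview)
Your proposal is correct and follows essentially the same approach as the paper: define $f_j(t):=e_j-w_j(t)$, observe the additive-constant equivalence, and invoke Theorem~\ref{thm:reject-n-k-pmtn}. Your write-up is in fact more explicit than the paper's (you spell out the algebra and address the negative-profit case), and the paper additionally cites Theorem~\ref{thm:reject-n-k-non-pmtn} to cover the non-preemptive variant as well.
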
 
\begin{proof}Any instance
of the problem to reject $n-s$ jobs and schedule the remaining $s$
jobs to minimize $1|r_{j},(pmtn)|\sum_{S}w_{j}(C_{j})+\sum_{\bar{S}}e_{j}$
is equivalent to an instance of $s$-GPSP (with or without preemption, respectively) where for each job $j$
we define a profit function $f_{j}(t):=e_{j}-w_{j}(t)$ and keep the
release dates and processing times unchanged. Then the corollary follows
from Theorem~\ref{thm:reject-n-k-non-pmtn} and Theorem~\ref{thm:reject-n-k-pmtn}.
\qed
\end{proof}

Observe that the objective function $\sum_{S}f_{j}(C_{j})+\sum_{\bar{S}}e_{j}$
is in particular a generalization of weighted flow time with rejection ($\sum_{S}w_{j}(C_{j}-r_{j})+\sum_{\bar{S}}e_{j}$ 
where each job $j$ has a weight $w_{j}$ associated with it), weighted
tardiness with rejection ($\sum_{S}w_{j}\max\{0,t-d_{j}\}+\sum_{\bar{S}}e_{j}$ and
each job $j$ has additionally a deadline $d_{j}$), and weighted
sum of completion times with rejection ($\sum_{S}w_{j}C_{j}+\sum_{\bar{S}}e_{j}$). So our algorithm works
for these objective functions as well.

% \subsection{Parameterized Intractability for Number of Scheduled Jobs} \label{sec:rejection-hardness} 
When only the number~$s$ of scheduled jobs is chosen as parameter 
the problem becomes $\mathsf{W}[1]$-hard, as pointed out to us by an anonymous reviewer.

% We complement our fixed-parameter
% algorithms with a $\mathsf{W}[1]$-hardness result for the case when
% only the number~$s$ of jobs to be scheduled is chosen as parameter.
% The parameter-preserving reduction is again from the \textsc{$k$-Subset
% Sum} problem. 

\begin{theorem}\label{thm:gsponlynumscheduledjobshardness} Scheduling
a set $\overline{J}$ of $s$ jobs from among $n$ given jobs (non-) preemptively
on a single machine to maximize $\sum_{j\in\bar{J}}b_{j}-w_{j}(C_{j})$
is $\mathsf{W}[1]$-hard for parameter $s$, even if all $n$ jobs
have the same function $w_{j}$ and $b_{j}=p_j$ for each job $j$. 
Thus, (non-)preemptive $s$-GPSP is $\mathsf{W}[1]$-hard for parameter~$s$. \end{theorem}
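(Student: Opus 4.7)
The plan is to reduce from the W[1]-hard problem \textsc{$k$-Subset Sum}~\cite{FellowsKoblitz1993}, which was already used in the proof of Theorem~\ref{thm:rejparamnumberofrejectedjobshardness}. Given an instance $(s_1,\hdots,s_n,q,k)$ of \textsc{$k$-Subset Sum}, I will build a scheduling instance with $n$ jobs where job~$j$ has processing time $p_j:=s_j$, release date $r_j:=0$, and bonus $b_j:=p_j$. All jobs share the common cost function $w(t)=0$ for $t\le q$ and $w(t)=M$ for $t>q$, where $M:=1+\sum_i s_i$; note that $w$ is non-decreasing so that $f_j(t)=b_j-w(t)=p_j-w(t)$ is non-increasing as required, and can be evaluated by the oracle in $O(1)$ time. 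Finally I set the parameter $s:=k$. This is a polynomial-time FPT-reduction preserving the parameter.

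The central claim is that the optimum of the constructed $s$-GPSP instance is at least $q$ if and only if the \textsc{$k$-Subset Sum} instance is a yes-instance. For the forward direction, a $k$-subset $T\subseteq[n]$ with $\sum_{j\in T}s_j=q$ yields a (non-preemptive) schedule of exactly the jobs in $T$ in any order for which every completion time is at most $q$, so every $w(C_j)$ vanishes and the objective equals $\sum_{j\in T}p_j=q$. For the converse, fix an optimal set $\bar{J}$ with value at least $q$. If $\sum_{j\in\bar{J}}p_j>q$, the last-completing job $j$ has $C_j>q$ and contributes $-M$, so the objective is at most $\sum_{j\in\bar{J}}p_j-M\le\sum_i s_i-M<0<q$, a contradiction. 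Hence $\sum_{j\in\bar{J}}p_j\le q$, every $C_j$ lies in $[0,q]$, the penalty vanishes, and the objective is exactly $\sum_{j\in\bar{J}}p_j$, which in turn must equal $q$.

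The preemptive case requires no extra argument: with all release dates equal to zero, the makespan of any schedule of $\bar{J}$ is exactly $\sum_{j\in\bar{J}}p_j$ independently of preemption decisions, so the same dichotomy between the cases $\sum p_j\le q$ and $\sum p_j>q$ applies verbatim, and both non-preemptive and preemptive $s$-GPSP are covered by the single reduction.

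The main obstacle I foresee is that $s$-GPSP asks for \emph{at most} $s$ scheduled jobs, whereas I need to extract a subset of size exactly $k$ from $\bar{J}$. To handle this I apply a standard padding trick: replace each $s_i$ by $s_i':=s_i+L$ for $L:=1+\sum_i s_i$, set the target to $q':=q+kL$, and use the step function with threshold $q'$. Any subset $T$ with $|T|=t\le k$ whose padded elements sum to $q'$ satisfies $\sum_{i\in T}s_i=q+(k-t)L$; if $t<k$ then the right-hand side is at least $q+L>L>\sum_i s_i$, contradicting $\sum_{i\in T}s_i\le\sum_i s_i$. So $t=k$ is forced, and the padded instance reduces the at-most-$s$ version cleanly to the original exactly-$k$ subset sum problem, completing the $\mathsf{W}[1]$-hardness proof.
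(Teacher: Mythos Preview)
Your proof is correct and follows essentially the same reduction from \textsc{$k$-Subset Sum} as the paper: the paper also sets $p_j=b_j=s_j$ and uses a threshold cost function at $q$ (with penalty $+\infty$ rather than your finite $M$). Your treatment is in fact more careful than the paper's, since you explicitly address the distinction between selecting \emph{at most} $s$ jobs in $s$-GPSP and finding a subset of size \emph{exactly} $k$ in \textsc{$k$-Subset Sum} via the padding trick---a point the paper's proof glosses over.
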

\begin{proof}
Consider an instance of $k$-\textsc{Subset Sum}, specified by integers
$s_{i}$ and a target value~$q$. In our reduction, for each $s_{j}$,
we create a job~$j$ with processing time $p_{j}:=s_{j}$, profit
$b_{j}:=s_{j}$ and cost function $w_{j}$ defined as 
$$w_{j}(t)=\begin{cases}
0, & ~\mbox{if}~t\leq q,\\
+\infty, & ~\mbox{if}~t>q,
\end{cases}$$ 
for all times $t$. %$f_j(t) = \max\{0,+\infty\cdot(t-S)\}$ for all times~$t$.
Then there is a set of $s$ integers~$s_{i}$ whose sum is exactly
$q$, if and only if we can schedule the corresponding set~$\bar{J}$
of~$s$ jobs preemptively or non-preemptively on a single machine such that 
$\sum_{j\in\bar{J}}b_{j}-w_{j}(C_{j})=q$.
Notice that there is only a single function $w_{j}(t)$ that is used for all
jobs $j\in J$ in this reduction. When defining as profit function
$f_{j}(t):=b_{j}-w_{j}(t)$ for each job $j$ this yields an instance
of $k$-GPSP which is thus $\mathsf{W}[1]$-hard.
\qed
\end{proof}

% \begin{theorem}[$\star$] \label{thm:gsponlynumscheduledjobshardness} Scheduling
% on one machine without release dates with rejection of at least $n-s$ jobs to minimize 
% $\sum_{S}w_{j}(C_{j})+\sum_{\bar{S}}e_{j}$ is $\mathsf{W}[1]$-hard for parameter $s$,
% even if all jobs have the same weight function 
% and $k$-GPSP problem is $\mathsf{W}[1]$-hard for the
% parameter $k$.
% \end{theorem}

\subsection{Hardness for Constantly Many Processing Times}

On the other hand, we prove that choosing only the maximum processing time $p_{\max}$, or number of distinct processing times $\bar{p}$ as a parameter
is not enough, as we show the problem to be $\mathsf{NP}$-hard even if $p_j\in \{1,3\}$ for all jobs~$j$.
The same holds for the General Scheduling Problem (GSP)~\cite{BansalPruhs2010} where---without the profit maximization aspect---we are given
a set of jobs~$j$, each of them having a release date $r_j$ and an individual cost function~$w_j(t)$,
and we want to schedule all of them on one machine in order to minimize $\sum_j w_j(C_j)$ where $C_j$ denotes the
completion time of job $j$ in the computed (preemptive) schedule.

\begin{theorem}
\label{thm:GPSP-hardness}
The General Profit Scheduling Problem (GPSP) and the General Scheduling Problem (GSP) are $\mathsf{NP}$-hard,
even if the processing time of each job is exactly 1 or 3.
This holds in both the preemptive and non-preemptive setting.
\end{theorem}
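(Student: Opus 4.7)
The plan is to exhibit a polynomial-time reduction from 3-Partition, whose arithmetic fits naturally with processing times in $\{1,3\}$. Recall that in 3-Partition one is given $3n$ positive integers $a_1,\ldots,a_{3n}$ with $\sum_i a_i=nB$ and $a_i\in(B/4,B/2)$, and must decide whether they admit a partition into $n$ triples each summing to $B$; by scaling I may assume $3\mid B$.

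First I would carve the time axis into $n$ disjoint ``bin-windows'' of length $B$ by introducing $n+1$ length-$3$ \emph{separator} jobs whose cost functions take the value $0$ at exactly one integer time and $+\infty$ elsewhere, pinning the separators between positions $0,\,B+3,\,2(B+3),\ldots,n(B+3)$. For each input integer $a_i$ I would then create one length-$3$ \emph{anchor} job $J_i$ together with $a_i-3$ length-$1$ \emph{filler} jobs; all of these are released at time $0$ and equipped with a cost function that is $0$ whenever the completion time falls inside some bin-window and $+\infty$ otherwise, plus a small coupling penalty that pays $0$ exactly when the filler completes in the same window as its anchor $J_i$. Because $B/4<a_i<B/2$, a window of capacity $B$ can host at most three anchors, so the capacity constraints are satisfied only when each window receives three clusters whose $a_i$-values sum exactly to $B$; the minimum total cost is then $0$ precisely when the 3-Partition instance is a yes-instance.

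I expect the main obstacle to be enforcing that each $a_i$-cluster (anchor together with its fillers) remains within a single window: the unit fillers are otherwise freely interchangeable, and the required coupling exploits the full generality of job-dependent cost functions in GSP, as opposed to plain weighted completion times. For the preemptive case the same construction works, since all release dates, cost-function breakpoints and separator positions lie on the integer grid and processing times are in $\{1,3\}$; hence any preemptive schedule of finite cost can be converted to a non-preemptive one of equal cost by contiguously repacking the slices of each job inside its window. Finally, hardness of the profit-maximization variant GPSP is inherited from GSP by setting $f_j(t):=M-w_j(t)$ for a constant $M$ exceeding any finite cost attained and forcing all jobs to be scheduled (no rejection allowed), so that maximizing $\sum_j f_j(C_j)$ is equivalent to minimizing $\sum_j w_j(C_j)$.
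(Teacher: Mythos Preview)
Your reduction from \textsc{3-Partition} has a genuine gap at exactly the point you yourself flag as the main obstacle. In both GSP and GPSP the cost/profit function of a job $j$ is a function $w_j(C_j)$ (resp.\ $f_j(C_j)$) of \emph{that job's} completion time only; there is no mechanism by which a filler's cost can depend on which window its anchor landed in. ``Job-dependent cost functions'' means each job may carry a different function of its own completion time, not that the function may reference other jobs' positions. Hence your ``coupling penalty that pays $0$ exactly when the filler completes in the same window as its anchor'' simply cannot be implemented in the model. Once the separators are pinned and the fillers are given any window-periodic cost function, all unit fillers become perfectly interchangeable; then every schedule that places three anchors per window and fills the remaining space with unit jobs achieves cost $0$, regardless of whether the corresponding $a_i$-values sum to $B$ in each window. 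The reduction is therefore not sound.

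The paper avoids this difficulty by reducing from \textsc{3-Dimensional Matching} instead. For each triple $(a_i,b_j,c_k)$ it creates three unit jobs $A,B,C$ and one length-$3$ job $L$, all released in a private length-$3$ interval; after these $|T|$ intervals comes a block of $3n$ unit slots, one per element of $A\cup B\cup C$. The unit job $A(a_i,b_j,c_k)$ gets a profit function whose value drops to $0$ precisely after the fixed time $t(a_i)$ (and analogously for $B,C$). Thus the information ``which element does this unit job represent'' is encoded purely through the job's own profit function at globally fixed time points, with no inter-job coupling required. A threshold argument using geometrically spaced profits $2^{3n-i},2^{2n-j},2^{n-k}$ then forces, in any optimal schedule, exactly one unit job per element to occupy the element block, which is equivalent to the 3DM instance being a yes-instance. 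This is the idea your construction is missing: tie each small job to a fixed absolute time rather than to the (unknown) location of another job.
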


We now show that GPSP is $\mathsf{NP}$-hard, even if $p_{j}\in\{1,3\}$
for each job $j$. We reduce from {\sc 3-Dimensional Matching}, where one is given
three sets $A=\{a_{1},\hdots,a_{n}\}$, $B=\{b_{1},\hdots,b_{n}\}$, and
$C=\{c_{1},\hdots,c_{n}\}$ for some $n\in\mathbb N$, and a set $T\subseteq A\times B\times C$.
The goal is to find a subset $T'\subseteq T$ with $|T'|=n$ such
that for any two different triples $(a_{i},b_{j},c_{k}),(a'_{i},b'_{j},c'_{k})\in T'$
it holds that $a_{i}\ne a'_{i}$, $b_{j}\ne b'_{j}$, and $c_{k}\ne c'_{k}$.
Note that this implies that each element in $A$, $B$, and $C$ appears
in exactly one triple in $T'$. We say that the given instance is
a ``yes''-instance if there exists such a set $T'$. The {\sc 3-Dimensional Matching} problem is $\mathsf{NP}$-hard~\cite{GareyJohnson1979}.
%In fact, it is even $\mathsf{APX}$-hard, but in our reduction we will not use this fact. 

Given an instance of {\sc 3-Dimensional Matching}, we construct an instance
of GPSP. With each triple $(a_{i},b_{j},c_{k})\in T$ we associate
an interval of length 3 on the time axis. For each such triple, we
define a value $t(a_{i},b_{j},c_{k})$ such that for any two different
triples, the corresponding intervals $[t(a_{i},b_{j},c_{k}),t(a_{i},b_{j},c_{k})+3)$
are disjoint and $\bigcup_{(a_{i},b_{j},c_{k})\in T}[t(a_{i},b_{j},c_{k}),t(a_{i},b_{j},c_{k})+3)=[0,3|T|)$.
This can for instance be achieved by assigning to each triple $(a_{i},b_{j},c_{k})\in T$
a unique identifier $z_{(a_{i},b_{j},c_{k})}\in\{0,|T|-1\}$ and defining
$t(a_{i},b_{j},c_{k}):=3\cdot z_{(a_{i},b_{j},c_{k})}$. For each
triple $(a_{i},b_{j},c_{k})\in T$, we introduce four jobs, all
released at time $t(a_{i},b_{j},c_{k})$: three jobs of unit length called $A(a_{i},b_{j},c_{k})$, $B(a_{i},b_{j},c_{k})$, $C(a_{i},b_{j},c_{k})$ that
represent the occurrence of the respective element from $A$, $B$, $C$ in the triple $(a_{i},b_{j},c_{k})$ (so the occurrence of the elements $a_{i}$, $b_{j}$, and $c_{k}$
in $(a_{i},b_{j},c_{k})$, respectively); and additionally one job of length three called $L(a_{i},b_{j},c_{k})$. 

For each element in $A\cup B\cup C$ we specify a unique point in
time. We define $t(a_{i}):=3|T|+i$, $t(b_{j}):=3|T|+n+j$, and $t(c_{k}):=3|T|+2n+k$
for each respective element $a_{i}\in A$, $b_{j}\in B$, $c_{k}\in C$; see Figure~\ref{fig:paranphardness} for a sketch of the subdivision of the time axis and the profit functions (that we define later).
\begin{figure}
%   \centering
%   \begin{tikzpicture}
%     \draw (0,0)--(10,0);
%     \draw (0,-0.25)--(0,0.25);
%     \draw (4.5,-0.25)--(4.5,0.25);
%     \node at (0,-0.5){$0$};
%     \node at (4.5,-0.5){$3|T|$};
%     \draw (5,0)--(5,0.25);
%         \draw (5.5,0)--(5.5,0.25);
%             \draw (6,0)--(6,0.25);
%                 \draw (6.5,0)--(6.5,0.25);
%                     \draw (7,0)--(7,0.25);
%                         \draw (7.5,0)--(7.5,0.25);
%                             \draw (8,0)--(8,0.25);
%                                 \draw (8.5,0)--(8.5,0.25);
%                                 \draw (9,0)--(9,0.25);
%     \node at (4.75,0.25){$a_1$};
%         \node at (5.25,0.25){$a_2$};
%             \node at (5.75,0.25){$a_3$};
%                 \node at (6.25,0.25){$b_1$};
%                     \node at (6.75,0.25){$b_2$};
%                         \node at (7.25,0.25){$b_3$};
%                             \node at (7.75,0.25){$c_1$};
%                                 \node at (8.25,0.25){$c_2$};
%                                     \node at (8.75,0.25){$c_3$};
%   \end{tikzpicture}
%   \caption{Excerpt from the hardness gadget for $n = 3$ and $|T| = 9$.}
% \label{fig:paranphardness}

\begin{centering}
\includegraphics[scale=0.90]{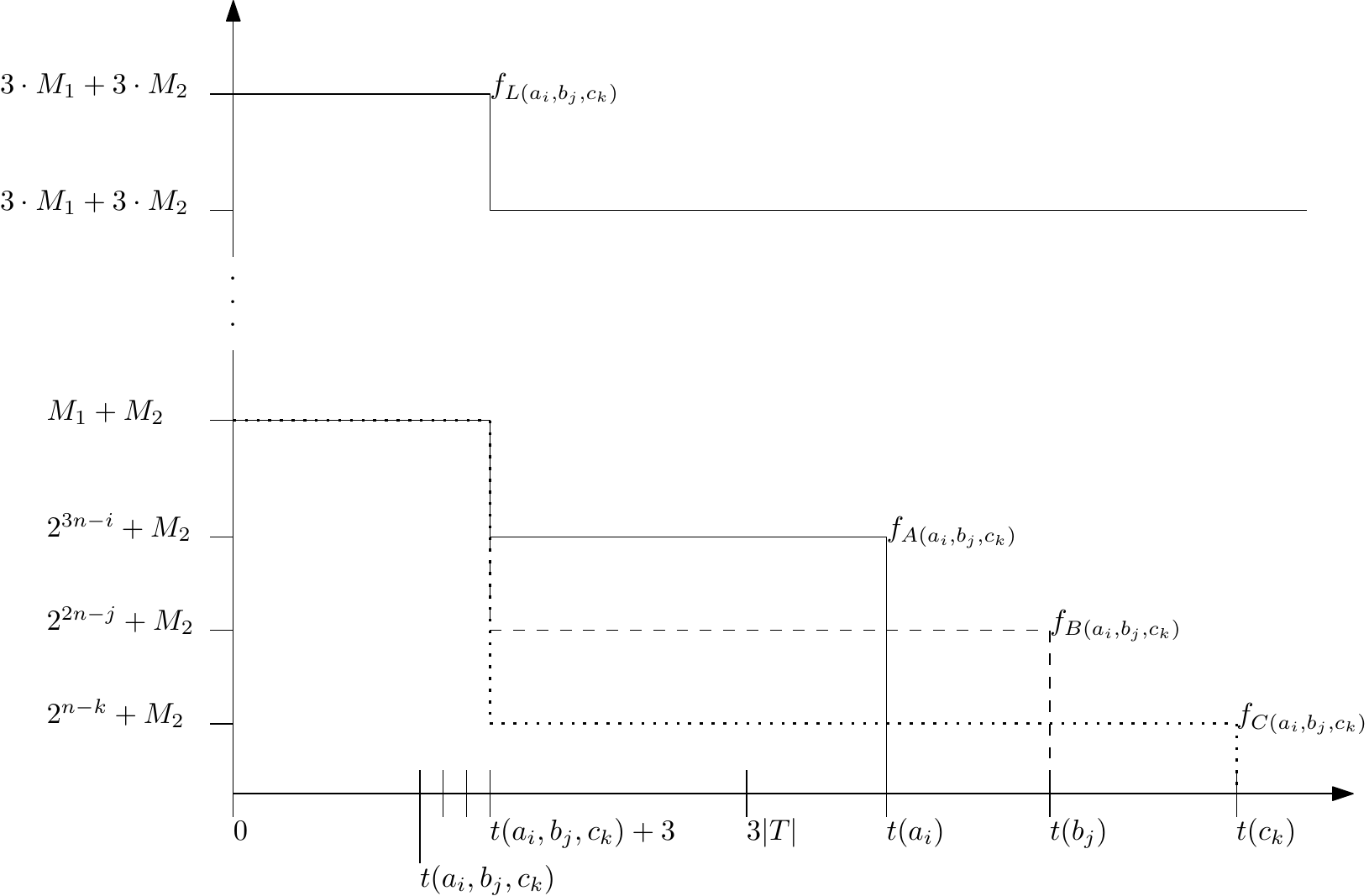}
\par\end{centering}
\caption{\label{fig:paranphardness}Sketch for the reduction in Theorem~\ref{thm:GPSP-hardness}.}
\end{figure}

Before specifying the profit function for each job, we give some intuition.
The idea is that if the given instance is a ``yes''-instance, then there
is an optimal schedule where
\begin{itemize}
\item for each triple $(a_{i},b_{j},c_{k})$ in the optimal solution we
schedule the job $L(a_{i},b_{j},c_{k})$ during the interval $[t(a_{i},b_{j},c_{k}),t(a_{i},b_{j},c_{k})+3)$
and the jobs $A(a_{i},b_{j},c_{k})$, $B(a_{i},b_{j},c_{k})$, and
$C(a_{i},b_{j},c_{k})$ during interval $[t(a_{i})-1,t(a_{i}))$, $[t(b_{j})-1,t(b_{j}))$,
and $[t(c_{k})-1,t(c_{k}))$, respectively.
\item for each triple $(a_{i},b_{j},c_{k})$ that is \emph{not }in the optimal
solution we schedule jobs $A(a_{i},b_{j},c_{k})$, $B(a_{i},b_{j},c_{k})$, $C(a_{i},b_{j},c_{k})$ during~$[t(a_{i},b_{j},c_{k}),t(a_{i},b_{j},c_{k})+3)$
(in an arbitrary order) and the job $L(a_{i},b_{j},c_{k})$ somewhere
after time $3|T|+3n$.
\end{itemize}
For defining the profit functions, let $M_{1}:=1+2^{3n}$ and $M_{2}:=M_{1}\cdot3|T|+2^{3n}$.
For the jobs $A(a_{i},b_{j},c_{k})$, $B(a_{i},b_{j},c_{k})$, and
$C(a_{i},b_{j},c_{k})$ for a triple $(a_{i},b_{j},c_{k})\in T$ we
define their profit function as follows: 

\[
f_{A(a_{i},b_{j},c_{k})}(t):=\begin{cases}
M_{1}+M_{2}, & \mathrm{for}~ t\le t(a_{i},b_{j},c_{k})+3,\\
2^{3n-i}+M_{2}, & \mathrm{for}~ t(a_{i},b_{j},c_{k})+3 < t\le t(a_{i}),\\
0, & \mathrm{for}~ t>t(a_{i});
\end{cases}
\]

\[
f_{B(a_{i},b_{j},c_{k})}(t):=\begin{cases}
M_{1}+M_{2}, & \mathrm{for}~ t\le t(a_{i},b_{j},c_{k})+3,\\
2^{2n-j}+M_{2}, & \mathrm{for}~ t(a_{i},b_{j},c_{k})+3 < t\le t(b_{j}),\\
0, & \mathrm{for}~ t>t(b_{j});
\end{cases}
\]

\[
f_{C(a_{i},b_{j},c_{k})}(t):=\begin{cases}
M_{1}+M_{2}, & \mathrm{for}~ t\le t(a_{i},b_{j},c_{k})+3,\\
2^{n-k}+M_{2}, & \mathrm{for}~ t(a_{i},b_{j},c_{k})+3 < t\le t(c_{k}),\\
0, & \mathrm{for}~ t>t(c_{k}) \enspace .
\end{cases}
\]

For the long job $L(a_{i},b_{j},c_{k})$ for a triple $(a_{i},b_{j},c_{k})\in T$
we define its profit function as

\[
f_{L(a_{i},b_{j},c_{k})}(t):=\begin{cases}
3\cdot M_{1}+3\cdot M_{2}, & \mathrm{for}~ t\le t(a_{i},b_{j},c_{k})+3,\\
3\cdot M_{2}, & \mathrm{for}~ t>t(a_{i},b_{j},c_{k})+3 \enspace .
\end{cases}
\]

To show the correctness of our reduction, we show that there is a
solution with overall profit of $|T|\cdot(3\cdot M_{1}+3\cdot M_{2})+|T|\cdot3\cdot M_{2}+2^{3n}-1$
if and only if the instance of {\sc 3-Dimensional Matching} is a ``yes''-instance. 
\begin{lemma}
\label{lem:yes=>high-profit}If the given instance of {\sc 3-Dimensional Matching}
is a ``yes''-instance, then there is a solution with profit $|T|\cdot(3\cdot M_{1}+3\cdot M_{2})+|T|\cdot3\cdot M_{2}+2^{3n}-1$.\end{lemma}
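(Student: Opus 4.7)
The plan is to exhibit an explicit schedule built from the matching $T'$ witnessing the ``yes''-instance, and then to verify that its total profit equals the claimed value. For each triple $(a_i,b_j,c_k)\in T'$, I will schedule $L(a_i,b_j,c_k)$ entirely on the interval $[t(a_i,b_j,c_k),t(a_i,b_j,c_k)+3)$, and defer the three unit-length jobs $A(a_i,b_j,c_k)$, $B(a_i,b_j,c_k)$, $C(a_i,b_j,c_k)$ to the unique slots $[t(a_i)-1,t(a_i))$, $[t(b_j)-1,t(b_j))$, and $[t(c_k)-1,t(c_k))$ respectively. For each triple $(a_i,b_j,c_k)\notin T'$, I will instead place the three unit jobs $A,B,C$ (in any order) inside $[t(a_i,b_j,c_k),t(a_i,b_j,c_k)+3)$ and push the corresponding long job $L(a_i,b_j,c_k)$ into the ``tail'' region $[3|T|+3n,\,3|T|+3n+3(|T|-n))$, which is long enough to hold all $|T|-n$ remaining long jobs consecutively.

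The feasibility check is straightforward. Since the intervals $[t(a_i,b_j,c_k),t(a_i,b_j,c_k)+3)$ are pairwise disjoint and tile $[0,3|T|)$, each of them is used either by one $L$-job or by the three $A,B,C$-jobs of its triple, with no collision. Because $T'$ covers every element of $A\cup B\cup C$ exactly once, the dedicated unit slots in $[3|T|,3|T|+3n)$ are each used by exactly one deferred unit job. Finally, the tail interval has length $3(|T|-n)$, which accommodates exactly the $|T|-n$ long jobs whose triples are outside~$T'$; all release dates are at most $3|T|$, so everything released is schedulable.

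Next I compute the profit contribution of each triple. A triple $(a_i,b_j,c_k)\in T'$ contributes $3M_1+3M_2$ from its long job (since $L$ finishes by $t(a_i,b_j,c_k)+3$) plus $(2^{3n-i}+M_2)+(2^{2n-j}+M_2)+(2^{n-k}+M_2)$ from its three unit jobs, each of which finishes after $t(\cdot,\cdot,\cdot)+3$ but at its respective deadline $t(a_i),t(b_j),t(c_k)$. A triple outside $T'$ contributes $3(M_1+M_2)$ from its unit jobs (finished by $t(\cdot,\cdot,\cdot)+3$) plus $3M_2$ from its long job (finished later). Hence every triple contributes a baseline of $3M_1+6M_2$, and the triples in $T'$ additionally contribute $2^{3n-i}+2^{2n-j}+2^{n-k}$.

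Summing the baselines gives $|T|(3M_1+6M_2)=|T|(3M_1+3M_2)+|T|\cdot 3M_2$, matching the first two terms of the target value. The remaining bonus telescopes: since $T'$ uses each $a_i$, each $b_j$, each $c_k$ exactly once,
\[
\sum_{(a_i,b_j,c_k)\in T'}\!\bigl(2^{3n-i}+2^{2n-j}+2^{n-k}\bigr)
=\sum_{i=1}^{n}2^{3n-i}+\sum_{j=1}^{n}2^{2n-j}+\sum_{k=1}^{n}2^{n-k}=2^{3n}-1.
\]
Adding this to the baseline yields exactly $|T|\cdot(3M_1+3M_2)+|T|\cdot 3M_2+2^{3n}-1$, as required. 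The only subtle point is the bookkeeping of the geometric sum, which is routine; everything else is a direct construction.\qed
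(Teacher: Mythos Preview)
Your proof is correct and follows essentially the same approach as the paper: construct the schedule from the matching $T'$ exactly as you do, then verify the profit. Your accounting is organized per triple (baseline $3M_1+6M_2$ plus bonus for triples in $T'$) whereas the paper groups contributions by job type, but these are the same computation and yield the same total.
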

\begin{proof}
Let $T'\subseteq T$ be a solution of the instance of {\sc 3-Dimensional Matching}
(note that $|T'|=n$). For each tuple $(a_{i},b_{j},c_{k})\in T'$
we schedule the job $L(a_{i},b_{j},c_{k})$ during $[t(a_{i},b_{j},c_{k}),t(a_{i},b_{j},c_{k})+3)$,
and the jobs $A(a_{i},b_{j},c_{k})$, $B(a_{i},b_{j},c_{k}),C(a_{i},b_{j},c_{k})$ during $[t(a_{i})-1,t(a_{i}))$, $[t(b_{j})-1,t(b_{j}))$,
and\linebreak $[t(c_{k})-1,t(c_{k}))$, respectively. Note that since $T'$
is a feasible solution, at most one job is scheduled in each interval
$[t(a_{i})-1,t(a_{i}))$, $[t(b_{j})-1,t(b_{j})),[t(c_{k})-1,t(c_{k}))$.
For all tuples $(a_{i},b_{j},c_{k})\notin T'$ we schedule the job
$L(a_{i},b_{j},c_{k})$ in some arbitrary interval during $[3|T|+3n,\infty)$
and the jobs $A(a_{i},b_{j},c_{k})$, $B(a_{i},b_{j},c_{k})$, and
$C(a_{i},b_{j},c_{k})$ during $[t(a_{i},b_{j},c_{k}),t(a_{i},b_{j},c_{k})+3)$.
This solution yields an overall profit of $3|T|\cdot(M_{1}+M_{2})+\sum_{\ell=0}^{3n-1}(M_{2}+2^{\ell})+(|T|-n)\cdot3\cdot M_{2}=|T|\cdot(3\cdot M_{1}+3\cdot M_{2})+|T|\cdot3\cdot M_{2}+2^{3n}-1$.
\qed
\end{proof}
For the converse of Lemma~\ref{lem:yes=>high-profit}, we first
show that there is always an optimal solution to the defined GPSP instance
that is in standard form. We say that a solution is in \emph{standard
form} if 
\begin{enumerate}
\item each job $A(a_{i},b_{j},c_{k})$ (job $B(a_{i},b_{j},c_{k})$, job $C(a_{i},b_{j},c_{k})$)
is scheduled either during interval \mbox{$[t(a_{i},b_{j},c_{k}),t(a_{i},b_{j},c_{k})+3)$}
or during $[3|T|,t(a_{i}))$ (during $[3|T|,t(b_{j})$, during $[3|T|,t(c_{k})$);
and
\item each job $L(a_{i},b_{j},c_{k})$ is scheduled non-preemptively either
completely during interval \mbox{$[t(a_{i},b_{j},c_{k}),t(a_{i},b_{j},c_{k})+3)$}
or completely during $[3|T|+3n,\infty)$.
\end{enumerate}
Observe in particular that solutions in standard form schedule \emph{all
}jobs.
\begin{lemma}
\label{lem:optimal-solution-standard-form}There is always an optimal
solution of the instance of GPSP that is in standard form.\end{lemma}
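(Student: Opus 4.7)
The plan is to take any optimal schedule $\mathcal S$ of the constructed GPSP instance and transform it into standard form by a sequence of local exchanges that never decrease the total profit. The three conclusions I need are: (i) every job is scheduled; (ii) each long job $L(\tau)$ sits non-preemptively either in its triple interval $I_\tau := [t(\tau),t(\tau)+3)$ or entirely in the tail $[3|T|+3n,\infty)$; and (iii) each short job of $\tau$ sits either in $I_\tau$ or in its unique element slot $[t(e)-1,t(e))$.

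First I would establish (i). Every long job has $f_{L(\tau)}(t) \geq 3M_2 > 0$ for all $t$, and the tail $[3|T|+3n,\infty)$ has unbounded free space; hence any unscheduled long job can be appended non-preemptively at strictly positive profit, contradicting optimality. For short jobs, I argue afterwards that their forced placement (step~(iii)) leaves each of them at a positive-profit location, so none is dropped.

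Next I turn to (ii). The profit of $L(\tau)$ takes only two values, $3M_1+3M_2$ if it finishes by $t(\tau)+3$ and $3M_2$ otherwise; because $L(\tau)$ has length $3$ and is released at $t(\tau)$, the first case forces it to occupy $I_\tau$ non-preemptively. In the second case I would relocate all three units of $L(\tau)$ to a contiguous free block in the tail: this is always possible since the tail has no capacity constraints, and freeing whatever space $L(\tau)$ previously used can only help the remaining jobs (as all profit functions are non-increasing in time). I would then argue (iii) by observing $f_{X(\tau)}(\cdot)$ takes the values $M_1+M_2$, $M_2+2^{\ell}$, and $0$ on the three regions, and $M_1 > 2^{3n} \geq 2^\ell$; hence any short job scheduled outside these two regions can be moved into its element slot (or unscheduled) without loss, and swapped with whatever currently occupies that slot.

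The delicate step is showing that each triple interval $I_\tau$ is \emph{monolithic}: either filled by $L(\tau)$, or filled by the three short jobs of $\tau$. I would do this by comparing how much profit a unit of $I_\tau$ can deliver against the alternative of placing the same work outside $I_\tau$: the ``$I_\tau$-bonus'' per unit is at least $M_1$, which strictly dominates the value $2^{3n}$ of the entire late-region bonus $B(T')$ defined implicitly by the profit functions. Concretely, if $I_\tau$ contains a fragment of some job $j$ that is not in $\{L(\tau),A(\tau),B(\tau),C(\tau)\}$, I would swap that fragment with a short job of $\tau$ currently placed outside $I_\tau$ (which must exist, else $L(\tau)$ would already fill $I_\tau$); the swap strictly increases the total profit because the short job of $\tau$ gains the $M_1$ bonus while $j$ at worst moves to a $0$-profit region, a loss bounded by $2^{3n} < M_1$. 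Iterating purges all foreign jobs from $I_\tau$, and the three units are then either covered by $L(\tau)$ (in which case swapping back the three short jobs to the element slots yields standard form) or by the three short jobs of $\tau$. The main obstacle is making this swap book-keeping rigorous, because moving a fragment of a long job changes its completion time and hence may alter its profit by up to $3M_1$; I would handle this by swapping at the granularity of whole long jobs via step (ii) first, so that at the time of the monolithic-interval argument every long job is already committed to either $I_\tau$ or the tail and only unit-length short jobs remain to be rearranged inside triple intervals.
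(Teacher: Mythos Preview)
Your swap argument for the ``monolithic'' step contains a concrete error. You claim that when a foreign short job $j$ (belonging to some earlier triple $\tau'\neq\tau$) is evicted from $I_\tau$ and sent to the former position of a $\tau$-short job, the loss is ``bounded by $2^{3n}<M_1$''. But inside $I_\tau$ the job $j$ completes strictly after $t(\tau')+3$, so its profit there equals $M_2+2^{\ell'}$, not merely $2^{\ell'}$. If the destination slot corresponds to an element $e$ with $t(e)>t(e')$ (where $e'$ is $j$'s own element), then after the swap $j$ completes past $t(e')$ and earns profit~$0$. The loss is therefore $M_2+2^{\ell'}$, and since $M_2=3|T|\cdot M_1+2^{3n}\gg M_1$ this swap can strictly \emph{decrease} total profit. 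The same defect already appears in your plain step~(iii): swapping a misplaced short job into its element slot can push the current occupant into its own zero-profit region at cost~$\Theta(M_2)$.

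You are also aiming at a target that does not coincide with standard form. Property~1 only requires a short job $A(\tau)$ to lie in $I_\tau$ or anywhere in $[3|T|,t(a_i))$, not in the single slot $[t(a_i)-1,t(a_i))$; and standard form does not demand that each $I_\tau$ be ``monolithic'' in your sense (when $L(\tau)$ is in the tail, any subset of the three short jobs may occupy $I_\tau$, with the rest of the interval empty). Your stronger version need not even hold for any optimal schedule: two triples sharing the element $a_1$ cannot both place their $A$-job in the single unit $[3|T|,3|T|+1)$.

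The paper avoids swap bookkeeping altogether. It fixes one explicit baseline schedule $S'$ (every short job in its own $I_\tau$, every long job in the tail) with profit exactly $3|T|(M_1+M_2)+|T|\cdot 3M_2$, and then shows by a global upper-bound count that any schedule violating property~1 has strictly smaller total profit than $S'$: if some short job earns $0$ the bound uses $M_2>3|T|\cdot M_1$, and if some short job sits in $(t(\tau)+3,3|T|)$ the bound uses $M_1>2^{3n}-1$. Thus property~1 holds in \emph{every} optimal solution, with no rearrangement needed; property~2 is then the easy relocation you already have. The $M_2$ term you lose in your swap is precisely the lever the paper exploits globally rather than locally.
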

\begin{proof}
Given an optimal solution to the defined GPSP instance, suppose that
it does not satisfy the first property. First, suppose for contradiction
that there is a job $A(a_{i},b_{j},c_{k})$ that finishes after time
$t(a_{i})$, thus contributing zero towards the objective. Then the
objective value can be at most $3|T|\cdot(M_{1}+M_{2})+|T|\cdot(3\cdot M_{1}+3\cdot M_{2})-M_{1}-M_{2}$
(assuming that all other jobs give the maximal possible profit). However,
then there is a strictly better solution $S'$ which is given by 
\begin{itemize}
\item scheduling each job $A(a_{i},b_{j},c_{k})$, $B(a_{i},b_{j},c_{k})$,
and $C(a_{i},b_{j},c_{k})$ for each triple $(a_{i},b_{j},c_{k})\in T$
during $[t(a_{i},b_{j},c_{k}),t(a_{i},b_{j},c_{k})+3)$, respectively,
and
\item scheduling each job $L(a_{i},b_{j},c_{k})$ for each triple $(a_{i},b_{j},c_{k})\in T$
somewhere during $[3|T|+3n,\infty)$.
\end{itemize}
The solution $S'$ yields a profit of at least $3|T|\cdot(M_{1}+M_{2})+|T|\cdot(3\cdot M_{2})>3|T|\cdot(M_{1}+M_{2})+|T|\cdot(3\cdot M_{1}+3\cdot M_{2})-M_{2}$,
using that $M_{2}>3|T|\cdot M_{1}$. Hence, the first considered schedule
was not optimal. For jobs $B(a_{i},b_{j},c_{k})$ and $C(a_{i},b_{j},c_{k})$,
the claim can be shown similarly. 

Secondly, for contradiction suppose that there is a job $A(a_{i},b_{j},c_{k})$
that is scheduled during interval $[t(a_{i},b_{j},c_{k})+3,3\cdot|T|)$. Then
the computed schedule yields a profit of at most $3|T|\cdot(M_{1}+M_{2})-M_{1}+\sum_{\ell=0}^{3n-1}2^{\ell}+|T|\cdot3M_{2}<3|T|\cdot(M_{1}+M_{2})+|T|\cdot3M_{2}$,
using that $M_{1}>\sum_{\ell=0}^{3n-1}2^{\ell}=2^{3n}-1$. Hence,
solution~$S'$ (as defined above) is better than the considered optimal
solution, which yields a contradiction. Hence, any optimal solution
satisfies the first property.

Suppose now that we are given an optimal solution that violates the
second property. Then there is a job $L(a_{i},b_{j},c_{k})$ that
is not completely scheduled during $[t(a_{i},b_{j},c_{k}),t(a_{i},b_{j},c_{k})+3)$.
Then its contribution for the global profit is $3\cdot M_{2}$ and
this does not depend on where exactly after $t(a_{i},b_{j},c_{k})+3$
it finishes. Thus, we can move $L(a_{i},b_{j},c_{k})$ so that it
is completely scheduled non-preemptively during $[3|T|+3n,\infty)$,
without changing its contribution to the profit.\qed \end{proof}
\begin{lemma}
\label{lem:high-profit=>yes}If there is an optimal solution
of the GPSP instance yielding a profit of at least $|T|\cdot(3\cdot M_{1}+3\cdot M_{2})+|T|\cdot3\cdot M_{2}+2^{3n}-1$,
then the instance of {\sc 3-Dimensional Matching} is a ``yes''-instance.\end{lemma}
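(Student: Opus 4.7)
The plan is to invoke Lemma~\ref{lem:optimal-solution-standard-form} to restrict attention to an optimal solution in standard form, then read off a $3$-dimensional matching from the structural choices made by this solution. Let $S\subseteq T$ denote the set of triples whose long job is scheduled early, i.e.\ in $[t(a_i,b_j,c_k),t(a_i,b_j,c_k)+3)$; for every triple in $S$ the three unit jobs $A,B,C$ must then be pushed late (to $[3|T|,t(a_i))$ and analogously), because $L$ fills the whole $3$-unit slot, while for a triple outside $S$ the jobs $A,B,C$ can be packed into the early slot. Since $M_1 = 1 + 2^{3n}$ strictly exceeds every possible late bonus $2^{3n-i},2^{2n-j},2^{n-k}$, placing a unit job early is strictly more profitable than placing it late, so for every triple outside $S$ I may assume that $A,B,C$ are all placed early (this does not decrease profit and is always feasible within the $3$-unit early slot).

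With this reduction in hand, a routine bookkeeping over $L$-jobs and unit jobs yields the clean identity
\[
  \mathrm{profit} \;=\; 3|T|M_1 + 6|T|M_2 + B, \qquad B := \sum_{i=1}^{n} m_i 2^{3n-i} + \sum_{j=1}^{n} m'_j 2^{2n-j} + \sum_{k=1}^{n} m''_k 2^{n-k},
\]
where $m_i$ (respectively $m'_j$, $m''_k$) counts the number of triples in $S$ whose first (respectively second, third) coordinate is $a_i$ (respectively $b_j$, $c_k$). The threshold in the hypothesis simplifies to exactly $3|T|M_1 + 6|T|M_2 + (2^{3n}-1)$, so it is enough to show that $B\le 2^{3n}-1$, and that equality forces $S$ to be a perfect $3$-dimensional matching.

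The key combinatorial input is feasibility of the late schedule. Each late $A$-job lives in $[3|T|,3|T|+i)$ whenever its triple contains $a_i$; since there is a single machine and jobs have unit length, EDF/Hall feasibility forces $\sum_{i'\le i} m_{i'}\le i$ for every $i\in\{1,\dots,n\}$, with analogous chain inequalities for $m'_j$ and $m''_k$. Because the weights $2^{3n-i}$ strictly decrease in $i$, a greedy argument shows that the maximum of $\sum_i m_i 2^{3n-i}$ under these constraints equals $2^{3n}-2^{2n}$, attained only when $m_i=1$ for all $i$; the analogous bounds $2^{2n}-2^n$ and $2^n-1$ follow for the other two sums, so $B\le 2^{3n}-1$. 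Equality requires $m_i=m'_j=m''_k=1$ for every index, i.e.\ each element of $A\cup B\cup C$ appears in exactly one triple of $S$, so $|S|=n$ and $S$ is a perfect $3$-dimensional matching, yielding the desired ``yes''-certificate. The main obstacle is keeping the scale separation clean: one must verify that $M_1$ dominates every possible late bonus, so that the profit reduces to the three-layer form $3|T|M_1 + 6|T|M_2 + B$ independent of how $A,B,C$ are distributed between early and late placements; once this is done the rest is just the greedy optimization over the Hall-type inequalities.
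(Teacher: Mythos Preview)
Your overall strategy matches the paper's: invoke the standard-form lemma, let $S$ (the paper's $T'$) be the triples whose long job is early, argue that the unit jobs of triples outside $S$ sit early, and reduce the profit to $3|T|M_1+6|T|M_2+B$. The paper then argues from $B\ge 2^{3n}-1$ that each element must be hit exactly once; you instead try to show $B\le 2^{3n}-1$ always. That direction is fine, but your execution has a real gap.

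The ``analogous chain inequalities'' $\sum_{j'\le j}m'_{j'}\le j$ and $\sum_{k'\le k}m''_{k'}\le k$ are \emph{false}. The late unit jobs all share the single interval $[3|T|,3|T|+3n)$; a late $B$-job with deadline $t(b_j)=3|T|+n+j$ may be placed anywhere in $[3|T|,3|T|+n+j)$, not only in the ``$B$-block''. The genuine Hall constraints are coupled: for each $j$ one only gets $\sum_i m_i+\sum_{j'\le j}m'_{j'}\le n+j$, i.e.\ $|S|+\sum_{j'\le j}m'_{j'}\le n+j$. Concretely, with $n=3$ and $S=\{(a_3,b_1,c_1),(a_3,b_1,c_2)\}$ one has $m'_1=2$, which is feasible (schedule the two $A$-jobs in slots $1,2$, the two $B$-jobs in slots $3,4$, the $C$-jobs in slots $5,6$) but violates your inequality $m'_1\le 1$; moreover $\sum_j m'_j 2^{2n-j}=2\cdot 2^{5}=64>56=2^{2n}-2^n$, so your separate bound on the $B$-sum fails too.

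The clean fix---and this is essentially what the paper's terse argument encodes---is a slot-by-slot bound: any late unit job finishing in slot $s$ (at time $3|T|+s$) has deadline at least $3|T|+s$, which forces its bonus exponent to be at most $3n-s$ (check this for each of the three job kinds). Hence $B\le\sum_{s=1}^{3n}2^{3n-s}=2^{3n}-1$, with equality only if every slot is filled by a job of exponent exactly $3n-s$; for $s\le n$ this must be an $A$-job with $i=s$, for $n<s\le 2n$ a $B$-job with $j=s-n$, and for $2n<s\le 3n$ a $C$-job with $k=s-2n$, giving the perfect matching. Replace your decoupled greedy step by this argument and the proof goes through.
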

\begin{proof}
Given a solution to our defined GPSP instance, by Lemma~\ref{lem:optimal-solution-standard-form}
we can assume it is in standard form. Let $T'\subseteq T$ be the
set of triples $(a_{i},b_{j},c_{k})\in T$ corresponding to the jobs
$L(a_{i},b_{j},c_{k})$ that are scheduled during their respective
intervals $[t(a_{i},b_{j},c_{k}),t(a_{i},b_{j},c_{k})+3)$. We claim
that they form a feasible solution and that $|T'|=n$. The profit
obtained by the jobs $L(a_{i},b_{j},c_{k})$ corresponding to the
triples in~$T'$ equals $|T'|\cdot(3\cdot M_{1}+3\cdot M_{2})$. The
total profit given by all other jobs $L(a_{i},b_{j},c_{k})$ equals
$(|T|-|T'|)\cdot3\cdot M_{2}$. For jobs $A(a_{i},b_{j},c_{k})$,
 $B(a_{i},b_{j},c_{k})$, and $C(a_{i},b_{j},c_{k})$ such that $L(a_{i},b_{j},c_{k})$
is \emph{not }scheduled during $[t(a_{i},b_{j},c_{k}),t(a_{i},b_{j},c_{k})+3)$
we can assume that the former are all scheduled during $[t(a_{i},b_{j},c_{k}),t(a_{i},b_{j},c_{k})+3)$
(since there they yield the maximum profit and no other job is scheduled
there). The profit of those jobs equals then $(|T|-|T'|)\cdot3(M_{1}+M_{2})$.
Denote by $J'$ the set of all other jobs $A(a_{i},b_{j},c_{k})$,
$B(a_{i},b_{j},c_{k})$, and $C(a_{i},b_{j},c_{k})$ (i.e., such that
$L(a_{i},b_{j},c_{k})$ is scheduled during $[t(a_{i},b_{j},c_{k}),t(a_{i},b_{j},c_{k})+3)$).
Their total profit equals $3|T'|\cdot M_{2}+\sum_{A(a_{i},b_{j},c_{k})\in J'}2^{3n-i}+\sum_{B(a_{i},b_{j},c_{k})\in J'}2^{2n-j}+\sum_{C(a_{i},b_{j},c_{k})\in J'}2^{n-k}$.
Thus, the total profit of the solution equals
\begin{eqnarray*}
 &  & |T'|\cdot(3\cdot M_{1}+3\cdot M_{2})+(|T|-|T'|)\cdot3\cdot M_{2}+(|T|-|T'|)\cdot3(M_{1}+M_{2})+3|T'|\cdot M_{2}\\
 &  & +\sum_{A(a_{i},b_{j},c_{k})\in J'}2^{3n-i}+\sum_{B(a_{i},b_{j},c_{k})\in J'}2^{2n-j}+\sum_{C(a_{i},b_{j},c_{k})\in J'}2^{n-k}\\
 & = & |T|\cdot(3\cdot M_{1}+3\cdot M_{2})+|T|\cdot3\cdot M_{2}+\sum_{A(a_{i},b_{j},c_{k})\in J'}2^{3n-i}+\sum_{B(a_{i},b_{j},c_{k})\in J'}2^{2n-j}+\sum_{C(a_{i},b_{j},c_{k})\in J'}2^{n-k}
\end{eqnarray*}
Thus, if the total profit is at least $|T|\cdot(3\cdot M_{1}+3\cdot M_{2})+|T|\cdot3\cdot M_{2}+2^{3n}-1$
then it must be that $\sum_{A(a_{i},b_{j},c_{k})\in J'}2^{3n-i}+\sum_{B(a_{i},b_{j},c_{k})\in J'}2^{2n-j}+\sum_{C(a_{i},b_{j},c_{k})\in J'}2^{n-k}\ge2^{3n}-1$.
Therefore, there must be at least one (and by construction also at
most one) one job $A(a_{i},b_{j},c_{k})\in J'$ with $i=1$, and similarly
for each $\ell\in\{1,\hdots,n\}$, there are at least $\ell$ jobs $A(a_{i},b_{j},c_{k})\in J'$
with $i\le\ell$. This implies that $\sum_{A(a_{i},b_{j},c_{k})\in J'}2^{3n-i}=\sum_{\ell=1}^{n}2^{3n-\ell}=2^{3n}-1-\sum_{\ell=1}^{2n-1}2^{\ell}=2^{3n}-1-2^{2n}+1=2^{3n}-2^{2n}$
and thus $\sum_{B(a_{i},b_{j},c_{k})\in J'}2^{2n-j}+\sum_{C(a_{i},b_{j},c_{k})\in J'}2^{n-k}\ge2^{2n}-1$.

With the same argument we can show that for each $\ell\in\{1,\hdots,n\}$ there are
at least $\ell$ jobs~$B(a_{i},b_{j},c_{k})\in J'$ with $j\le\ell$,
and then that for each $\ell\in\{1,\hdots,n\}$ there are at least $\ell$ jobs
$C(a_{i},b_{j},c_{k})\in J'$ with $k\le\ell$. Altogether, this implies
that for each $i\in\{1,\hdots,n\}$ there is exactly one job $A(a_{i},b_{j},c_{k})\in J'$,
and similarly for each $j\in\{1,\hdots,n\}$ and each $k\in\{1,\hdots,n\}$
there is exactly one job $B(a_{i},b_{j},c_{k})\in J'$ and one job
$C(a_{i},b_{j},c_{k})\in J'$. Thus, the set $T'$ is a feasible solution,
and in particular $|T'|=n$.
\qed
\end{proof}

\noindent
Lemmas~\ref{lem:yes=>high-profit} and \ref{lem:high-profit=>yes}
together show the correctness of the above reduction, and hence complete the proof of Theorem~\ref{thm:GPSP-hardness}.  Since by Lemma~\ref{lem:optimal-solution-standard-form} 
we can restrict ourselves to solutions in standard form and those are non-preemptive, 
our reduction works for the preemptive as well as for the non-preemptive setting.
%\begin{theorem}\label{thm:GPSP-hardness}
%The General Profit Scheduling Problem (GPSP) is $\mathsf{NP}$-hard,
%even if the processing time of each job is either exactly 1 or 3.
%This holds in the preemptive as well as in the non-preemptive setting.
%\end{theorem}

Using the fact that solutions in standard form schedule all jobs,
we can modify the above transformation to show that also the General
Scheduling Problem (GSP) is $\mathsf{NP}$-hard if the processing
time of each job is either exactly 1 or 3. All we need to do is to
define a cost function $w_{j}$ for each job by setting $w_{j}(t):=3\cdot M_{1}+3\cdot M_{2}-f_{j}(t)$,
which in particular ensures that for no completion time any job has
negative costs. Then the goal is to find a feasible schedule on one machine
to minimize $\sum_j w_j(C_j)$.
\begin{corollary}
The General Scheduling Problem (GSP) is $\mathsf{NP}$-hard, even
if the processing time of each job is either exactly 1 or 3. This
holds in the preemptive setting as well as in the non-preemptive setting.\end{corollary}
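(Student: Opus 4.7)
The plan is to reuse verbatim the instance constructed in the proof of Theorem~\ref{thm:GPSP-hardness} (reducing from \textsc{3-Dimensional Matching}) and convert the profit functions $f_j$ into cost functions by setting $w_j(t) := 3M_1 + 3M_2 - f_j(t)$ as suggested. First I would verify that each $w_j$ is a legitimate (non-negative) cost function: inspection of the four profit functions defined for jobs $A(\cdot)$, $B(\cdot)$, $C(\cdot)$, $L(\cdot)$ shows that the maximum value attained anywhere is $3M_1 + 3M_2$ (achieved by the long jobs $L$ on their intended interval), so $w_j(t)\ge 0$ for every $t$. The processing times are unchanged, hence still lie in $\{1,3\}$, and the preemption/non-preemption distinction carries over.

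Next, since GSP requires \emph{all} jobs to be scheduled, I would appeal to Lemma~\ref{lem:optimal-solution-standard-form}: an optimal schedule of the associated GPSP instance can be chosen to be in standard form, and solutions in standard form schedule every job. For any schedule $(C_j)_{j\in J}$ that covers all jobs, the identity
\[
\sum_{j\in J} w_j(C_j) \;=\; |J|\cdot(3M_1+3M_2) \;-\; \sum_{j\in J} f_j(C_j)
\]
holds, so minimizing $\sum_j w_j(C_j)$ is equivalent to maximizing $\sum_j f_j(C_j)$ over schedules of all jobs. In particular, the profit threshold $|T|\cdot(3M_1+3M_2) + |T|\cdot 3M_2 + 2^{3n}-1$ from Lemmas~\ref{lem:yes=>high-profit} and~\ref{lem:high-profit=>yes} translates into a corresponding cost threshold $K := |J|\cdot(3M_1+3M_2) - |T|\cdot(3M_1+3M_2) - |T|\cdot 3M_2 - 2^{3n}+1$ (recalling that $|J| = 4|T|$).

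Combining these, I would conclude that the constructed GSP instance admits a schedule of total cost at most $K$ if and only if the original \textsc{3-Dimensional Matching} instance is a ``yes''-instance, which proves $\mathsf{NP}$-hardness of GSP with $p_j\in\{1,3\}$. The argument applies to the preemptive and non-preemptive settings uniformly, because both the reduction of Theorem~\ref{thm:GPSP-hardness} and the standard-form lemma apply in both settings (standard-form schedules are non-preemptive, so any preemptive schedule can be transformed into one without decreasing the objective under the cost interpretation).

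The only genuine obstacle is the initial non-negativity check on $w_j$, which as noted is immediate from the explicit piecewise definition of the $f_j$'s; everything else is a mechanical dualization of the profit reduction, exploiting the fact that ``all jobs are scheduled'' is not an extra assumption but a consequence of optimality in the standard-form regime.
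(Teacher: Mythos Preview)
Your proposal is correct and follows essentially the same approach as the paper: the paper's own argument consists of exactly the substitution $w_j(t):=3M_1+3M_2-f_j(t)$, the observation that this keeps all costs non-negative, and the appeal to the standard-form lemma to guarantee that all jobs are scheduled so that the profit and cost objectives become affine duals of each other. You have simply spelled out the details (the explicit cost threshold $K$ and the profit--cost identity) that the paper leaves implicit.
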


\section{Discussion and Future Directions}

\label{sec:discussion} In this paper, we obtained the first fixed-parameter
algorithms for several fundamental problems in scheduling.
We identified key problem parameters that determine the complexity
of an instance, so that when these parameters are fixed the
problem becomes polynomial time solvable.

There are several interesting open questions in the connection of
scheduling and $\mathsf{FPT}$. One important scheduling objective is weighted
flow time (which in the approximative sense is not completely understood
yet, even on one machine). It would be interesting to investigate
whether this problem is $\mathsf{FPT}$ when parametrized by, for
instance, an upper bound on the maximum processing time and the maximum
weight of a job (assuming integral input data). For scheduling jobs non-preemptively on identical machines to minimize the
makespan, in Section~\ref{sec:makespanminimization} we proved the
first $\mathsf{FPT}$ result for this problem. It would be interesting
to extend this to the setting with release dates or to consider as
parameter the number of distinct processing times (rather
than an upper bound on those). When the jobs
have precedence constraints, a canonical parameter would be the width of the partial order given by these constraints,
and it would be interesting to study whether $P|prec|C_{\max}$ admits
$\mathsf{FPT}$ algorithms when parameterized by the partial order width and additionally
by the maximum processing time.
For the basic setting of makespan minimization $P||C_{\max}$, only recently Goemans and Rothvo{\ss}~\cite{GoemansRothvoss2014} managed to give a an algorithm of running time $O(n^m)$ for all numbers $m$ of machines---thus, it seems rather challenging to obtain a fixed-parameter algorithm parameterized by $m$.
For scheduling with rejection, we further suggest the number of distinct rejection costs as a parameter to investigate.
Finally, it would be desirable to understand the kernelizability of
the scheduling problems we considered here.

\medskip{}
\noindent
\textbf{Acknowledgment.} We thank an anonymous reviewer of an earlier
version for suggestions how to remove one parameter in the algorithms in Sect.~\ref{sec:schedulekjobswithelllengths}
and to prove Theorem~\ref{thm:gsponlynumscheduledjobshardness}.

\noindent \newpage
 \bibliographystyle{splncs}
\bibliography{short,scheduling-references}
 %}

\appendix
%dummy comment inserted by tex2lyx to ensure that this paragraph is not empty
\newpage

\end{document}